\DeclareMathOperator{\tr}{tr} 
\DeclareMathOperator{\trace}{tr} 
\DeclareMathOperator{\esssup}{ess\,sup}
\newtheorem{theorem}{Theorem}[section]
\newtheorem{corollary}[theorem]{Corollary}
\newtheorem{lemma}[theorem]{Lemma}
\newtheorem{proposition}[theorem]{Proposition}
\newtheorem{assumption}[theorem]{Assumption}
\numberwithin{equation}{section}
\numberwithin{theorem}{section}
\newcommand{\qed}{\hfill$\Box$}
\newenvironment{proof}{\begin{trivlist}\item[]{\em Proof:}\/}{%
\qed\end{trivlist}}
\newenvironment{proofof}[1]{%
\begin{trivlist}\item[]{\em Proof of #1}\ }{\qed\end{trivlist}}
\newcommand{\Z}{{\mathbb Z}}
\newcommand{\R}{{\mathbb R}}
\newcommand{\C}{{\mathbb C\hspace{0.05 ex}}}
\newcommand{\N}{{\mathbb N}}
\newcommand{\T}{{\mathbb T}}
\newcommand{\cf}{{\mathbbm 1}} 
\newcommand{\ci}{{\rm i}}
\newcommand{\re}{{\rm Re\,}}
\newcommand{\im}{{\rm Im\,}}
\newcommand{\rme}{{\rm e}}
\newcommand{\rmd}{{\rm d}}
\newcommand{\spr}[2]{\langle #1,#2\rangle}
\newcommand{\FT}[1]{\hat{#1}}
\newcommand{\vc}[1]{\boldsymbol{#1}} 
\DeclareMathOperator*{\sign}{sign}
\DeclareMathOperator*{\supp}{supp}
\newcommand{\norm}[1]{\Vert #1\Vert}
\newcommand{\defset}[2]{ \left\{ #1\left|\,
 #2\makebox[0cm]{$\displaystyle\phantom{#1}$}\right.\!\right\} }
\newcommand{\set}[1]{\{#1\}}
\newcommand{\vep}{\varepsilon}
\newcommand{\defem}[1]{{\em #1\/}}
\newcommand{\qand}{\quad\text{and}\quad}
\newcommand{\Heff}{H_{{\rm eff}}}
\newcommand{\casen}[1]{^{|#1}}
\newcommand{\totcoll}{\sigma_{{\rm coll}}}
\newcommand{\trunc}[1]{\Phi[#1]}
\newcommand{\omt}{\tilde{\Omega}}
\newcommand{\omn}{\Omega}
\newcommand{\Xphys}{X_{\text{ferm}}}
\newcommand{\Lphys}{L^2_{\text{ferm}}}
\newcommand{\Xherm}{X_{\mathbbm{H}}}
\newcommand{\Lherm}{L^2_{\mathbbm{H}}}
\newcommand{\Ctot}{\mathcal{C}}
\newcommand{\Ccoll}{\mathcal{C}_{{\rm diss}}}
\newcommand{\Cdisp}{\mathcal{C}_{{\rm cons}}}
\newcommand{\Closs}{\mathcal{D}}
\newcommand{\Cgain}{\mathcal{G}}
\newcommand{\Ccolltr}{\mathcal{C}_{{\rm diss,tr}}}
\newcommand{\Cdisptr}{\mathcal{C}^\vep_{{\rm cons,tr}}}
\newcommand{\Ctr}{\mathcal{C}^\vep_{{\rm tr}}}
\newcommand{\Ckint}{\int_{(\mathbb{T}^{d})^3}\!\rmd k_2\rmd k_3 \rmd k_4\,}
\newcommand{\sigmapn}[1]{\sigma^{(#1)}}
\newcounter{jlisti}
\newenvironment{jlist}[1][(\thejlisti)]{\begin{list}{{\rm #1}\ \ }{ %
      \usecounter{jlisti} %
    \setlength{\itemsep}{0pt}
    \setlength{\parsep}{0pt}  %
    \setlength{\leftmargin}{0pt} %
    \setlength{\labelwidth}{0pt} %
    \setlength{\labelsep}{0pt} %
}}{\end{list}}
\newcommand{\email}[1]{E-mail: \tt #1}
\newcommand{\emailpeng}{\email{mei@mappi.helsinki.fi}}
\newcommand{\emailjani}{\email{jani.lukkarinen@helsinki.fi}}
\newcommand{\addressjani}{\em University of Helsinki,
Department of Mathematics and Statistics\\
\em P.O. Box 68,
FI-00014 Helsingin yliopisto, Finland}
\newcommand{\addressherbert}{\em Zentrum Mathematik,
Technische Universit\"at M\"unchen, \\
\em Boltzmannstr. 3, D-85747 Garching, Germany}
\newcommand{\emailherbert}{\email{spohn@ma.tum.de}}
\title{Global well-posedness of the spatially homogeneous Hubbard-Boltzmann equation}
\author{Jani Lukkarinen\thanks{\emailjani}, Peng Mei\thanks{\emailpeng},
  Herbert Spohn\thanks{\emailherbert}\\[1em]
$^*, ^\dag$\addressjani \\[1em] $^\ddag$\addressherbert }
\date{\today}
\begin{document}

\selectlanguage{english}
\maketitle

\begin{abstract} 
The Hubbard model is a simplified description for the
evolution of interacting spin-$\frac{1}{2}$ fermions on a $d$-dimensional
lattice.  In a kinetic scaling limit, the Hubbard model can be associated with
a matrix-valued Boltzmann equation, the Hubbard-Boltzmann equation.  Its
collision operator is a sum of two qualitatively different terms: The first
term is similar to the collision operator of the fermionic Boltzmann-Nordheim
equation.  The second term leads to a momentum-dependent rotation of the spin
basis. The rotation is determined by a principal value integral which depends
quadratically on the state of the system and might become singular for
non-smooth states.  In this paper, we prove that the spatially homogeneous
equation nevertheless has global solutions in $L^\infty(\T^d,\C^{2\times2})$
for any initial data $W_0$ which satisfies the ``Fermi constraint'' in the
sense that $0\le W_0 \le 1$ almost everywhere.  We also prove that there is a
unique ``physical'' solution for which the Fermi constraint holds at all
times.  For the proof, we need to make a number of assumptions about the
lattice dispersion relation which, however, are satisfied by the nearest
neighbor Hubbard model, provided that $d\ge 3$.  These assumptions suffice to
guarantee that, although possibly singular, the local rotation term is
generated by a function in $L^2(\T^d,\C^{2\times2})$.
\end{abstract}

\section{Introduction}
\label{sec:intro}

As discovered independently by Nordheim \cite{Nord28} and Peierls
\cite{Peierls29}, the dynamics of weakly interacting quantum fluids can be
well approximated by a kinetic equation of Boltzmann type.  Mathematical
properties of such quantum kinetic equations have been studied extensively,
for a general review one can consult \cite{VillaniRev} and for recent results
we refer to \cite{dolb94,Lions94iii,Lu04,Lu05,EMV08,EV12a,EV12b}.  Of
particular interest in our context is the work of Dolbeault \cite{dolb94} on
the Boltzmann-Nordheim equation for spinless fermions.

In our contribution, we will study the kinetic equation derived from the
fermionic Hubbard model, see \cite{flms12} for details.  Compared to the
kinetic equation of \cite{dolb94}, there are three important modifications.
The Hubbard model describes the motion of electrons in a periodic background
potential in the tight binding approximation, which means that the electrons
move on the $d$-dimensional lattice $\Z^d$. In the spatially homogeneous case,
the one considered here, this implies that the Wigner function $W(t,k)$ at
time $t$ is a function on the $d$-torus, $k\in \T^d$, the fundamental zone for
the discrete Fourier transform.  Secondly, since electrons have spin
$\frac{1}{2}$, the Wigner function $W(t,k)$ depends on the spin and so
naturally forms a $2\times 2$-matrix.  Thus the kinetic equation governs the
evolution of a matrix-valued function on the torus for which, in general,
$[W(t,k_1),W(t,k_2)]\ne 0$ if $k_1\ne k_2$.  Hence the conventional arguments
have to be reworked.

The third modification is an appearance of a Vlasov type term in the Boltzmann
equation.  In the Hubbard model, the electron interaction is on-site and
independent of spin.  Hence, the microscopic Hamiltonian is invariant under
global spin rotations.  As a consequence, besides a conventional collision
operator, the Hubbard-Boltzmann equation contains a term similar to that of
the Vlasov equation.  The new term rotates the $k$-dependent spin basis but
does not generate entropy.  At the first sight, the Vlasov term appears
innocuous but, in fact, it is one major obstacle to overcome before arriving
at a well-posed quantum kinetic equation: the term is defined by a principal
value integral which might generate singularities even for regular initial
data.

In the spatially homogeneous case, the Hubbard-Boltzmann equation reads
\begin{align}\label{eq:HBE}
 \partial_t W(t,k) = \Ctot [W(t,\cdot)](k)\, ,\qquad 
\Ctot[W]:= \Ccoll [W]+ \Cdisp [W]\, .
\end{align} Here $W=W_{ij}(t,k)$ is a $2 \times 2$ Hermitian matrix at time
$t\ge 0$ with wave number $k\in\mathbb{T}^d$ and $i,j\in\set{1,2}$ counting
for the spin-degrees of freedom.  Physically, $W$ describes the average
density of electrons at wave number $k$ with a spin density matrix $W(t,k)/\tr
(W(t,k))$.  We wish to solve this equation subject to an initial condition
\begin{align}
\label{initial} W(0,k)=W_0(k)\, ,  \qquad \text{for all } k \in \mathbb{T}^d\,
,
\end{align} i.e., as a Cauchy problem with initial data $W_0$.  The collision
operator $\Ctot[W]$ is a sum of a dissipative term $\Ccoll [W]$ and a
conservative Vlasov type term $\Cdisp[W]$.  Their properties depend crucially
on the dispersion relation $\omega:\T^d\to \R$ of the waves generated by the
quadratic term of the tight-binding Hamiltonian.  For instance, in the Hubbard
model on a $d$-dimensional square lattice, $\omega(k)= - \sum_{\nu=1}^d \cos
(2\pi k^\nu)$.

The \defem{dissipative part} of the collision operator is given by
\begin{align}
\label{eq:collop2}
& \Ccoll [W](k_1):=\pi\Ckint \delta(\underline{k}) \delta(\underline{\omega}) 
\nonumber \\  & \quad 
 \times  \left( \tilde{W}_1 W_3 J[\tilde{W}_2 W_4]
+ J[ W_4 \tilde{W}_2] W_3 \tilde{W}_1 
 -W_1\tilde{W}_3 J[W_2\tilde{W}_4]
- J[\tilde{W}_4 W_2]\tilde{W}_3  W_1 \right) \, ,
\end{align}
where we employ the following notations
\begin{align}
  &\underline{k} := k_1+k_2-k_3-k_4\, ,\quad
  \underline{\omega} := \omega_1+\omega_2-\omega_3-\omega_4 \, ,\quad
  \tilde{W} := 1 - W \, ,\\
  & \omega_i := \omega(k_i)\, ,\quad W_i := W(k_i) \, ,\quad \tilde{W}_i := \tilde{W}(k_i) 
  \, , \quad \text{for } i=1,2,3,4\, , \\
  & J[M] := 1 \tr (M) - M,\ \text{ for } M\in \C^{2\times 2}\, . \label{eq:defJ}
\end{align}
The shorthand notations $\underline{k}$ and $\underline{\omega}$ are somewhat
rigid, as the dependence on variables $k_i$, $i=1,2,3,4$, is implicit.  We
will only use them to shorten the notation for the collision kernels, in which
case $k_1$ denotes always the fixed ``input'' variable and $k_i$, $i=2,3,4$,
the integration variables.  The \defem{conservative part} of the collision
operator is written as a commutator with a $W$-dependent ``effective
Hamiltonian'', 
\begin{align}\label{eq:defCdisp} 
& \Cdisp[W](k_1) := -\ci [\Heff[W] (k_1),W(k_1)]\, ,
\end{align}
where $\Heff$ is defined formally as a principal value integral around
$\underline{\omega}=0$, 
\begin{align}\label{eq:defHeff}
& \Heff [W](k_1) := -\frac{1}{2}\,  
\text{p.v.}\!\Ckint \delta(\underline{k})\, \frac{1}{\underline{\omega}} 
\, \nonumber \\  & \quad \times
 \left( W_3 J[\tilde{W}_2 W_4]
+ J[ W_4 \tilde{W}_2] W_3 + 
\tilde{W}_3 J[W_2\tilde{W}_4] + J[\tilde{W}_4 W_2]\tilde{W}_3  \right) \, .
\end{align}

The goal of our contribution is to establish that the evolution equation
(\ref{eq:HBE}), together with  
(\ref{initial}), (\ref{eq:collop2}), (\ref{eq:defCdisp}) and
(\ref{eq:defHeff}), is well-posed for ``fermionic initial data''.  Since the
original Hubbard model describes fermions, the above Wigner matrix function at
any time $t$ needs to satisfy $0\le W(t,k)\le 1$, as a matrix
inequality\footnote{We follow the convention that $M\ge 0$ if and only if $M$
  is a Hermitian matrix and its eigenvalues belong to $[0,\infty)$.  We also
  recall that $M\in \C^{n\times n}$ satisfies $M\ge 0$ if and only if $(z,M
  z)\ge 0$ for all $z\in \C^n$, and that this result is not valid if it is
  only checked for $z\in \R^n$.} for almost every $k$; we call this property
the \defem{Fermi constraint}.   
Thus from the physics side, it is natural to look for solutions in the space
of Lebesgue measurable functions $W(k)$ which satisfy the Fermi constraint
almost everywhere.   This requires to show that, if the initial data $W_0$
satisfies $0\le W_0(k)\le 1$, then this property is propagated in time.  For
this purpose we use the approach of Dolbeault, somewhat modified to account
for the matrix-valuedness of $M$ and of the above constraints.  Our
construction relies heavily on the property that  
\begin{align}\label{eq:mainmatrixbound}
  A J[BC] + C J[B A] \ge 0\, ,
\end{align}
for arbitrary $n\times n$ matrices $A,B,C\ge 0$.  (We thank David Reeb for
most helpful discussions relating to the inequality
(\ref{eq:mainmatrixbound}).) 

For generic $W\in L^\infty(\T^d,\C^{2\times 2})$ we cannot expect that the
principal value integral in (\ref{eq:defHeff}) is convergent for all $k_1\in
\T^d$, or even that it converges almost everywhere to a bounded function.
(Already the standard Hilbert-transform---a unitary operator on $L^2(\R)$,
defined via a similar principal value integral---offers such examples: the
Hilbert transform of the characteristic function of any bounded interval has
logarithmic divergences at both ends of the interval.)  Of course, one could
try to restrict the study to more regular function spaces in which the limit
exists pointwise everywhere.  However, there is no a priori reason why such a
space would be invariant under the time-evolution.  Our actual construction is
rather indirect, but does achieve the desired goal.  More precisely, our
strategy is to prove first  the well-posedness of a regularized problem for
functions continuous in $k$, to solve $\partial_t W = \Ccoll [W]-\ci
[\Heff^\vep[W],W]$, and then to show that the solutions converge in $L^2$-norm
to a unique solution of the original problem as the regulator is removed,
$\vep\to 0^+$. 

In the well-studied continuum setup of Boltzmann equations, the dispersion
relation is given by $k^2$, $k\in \R^d$, and the energy constraint
$\underline{\omega}=0$ has simple explicit solutions and one can integrate
over both $\delta$-functions to obtain an explicit integral operator involving
only Lebesgue measures.  In contrast, all lattice systems share the difficulty
that even for the simplest lattice dispersion relations the solutions to
$\underline{\omega}=0$ are not easy to handle. Integration over
$\delta(\underline{\omega})$ is even more problematic since the result
typically involves singular integral kernels or might become ill-defined.  A
complete classification of the singularities resulting from such an
integration over the $\delta$-functions appears to be a hard problem in
harmonic analysis and is certainly beyond the scope of the present paper.  

Instead of a classification result, we provide here a set of conditions for the
dispersion relation $\omega$ under which both the dissipative collision
integral and the principal value integral defining the effective Hamiltonian
are sufficiently regular for the resulting solutions to be $L^2$-continuous.
These assumptions and the main results are described in Sec.~\ref{sec:main}.
In fact, for continuous $W$ the dissipative part $\Ccoll$ can be defined even
without the last (and the most complicated) of these conditions.    We prove
in Sec.~\ref{sec:Ccoll} that then the formal integrations over the
$\delta$-functions in (\ref{eq:collop2}) can be replaced by an integration
over a family of naturally defined bounded Borel measures.  These measures are
used in Sec.~\ref{sec:regIPV} to prove that the regularized problem is
well-posed for continuous initial data.  We have presented these results in
somewhat greater generality than what is needed for the proof of the main
theorem: such measures appear also in other phonon Boltzmann equations, and
the properties proven in  Sec.~\ref{sec:Ccoll} could thus be of independent
interest.   

In Sec.~\ref{sec:pv}, we show how the full set of assumptions leads to
$L^2$-continuity of the collision operator and apply the resulting estimates
to conclude the proof of the main theorem in Sec.~\ref{sec:provemain}.  One
technical problem in extending the phonon Boltzmann collision operator from
continuous $W$ to functions defined only Lebesgue almost everywhere is related
to the measures derived in Sec.~\ref{sec:Ccoll} which are singular with
respect to the Lebesgue measure.  Therefore, one needs to handle sets of
measure zero carefully.  In fact, it turns out that instead of working
directly with integrals over the above Borel measures, it is better to define
the collision integral via a limit procedure using $L^2$-approximants which
are continuous in $k$ (Lemma \ref{th:l2colloper0} and Corollary
\ref{th:l2colloper}).   These two definitions need not be pointwise
equivalent, as the following example illustrates: Consider a function $f(x)$
which takes value one at $x=0$ and is zero elsewhere.  Then $\int\!\rmd x\,
\delta(x-x_0) f(x) = 1$ at $x_0=0$ but $f$ is $L^2$-equal to the zero map
$f_0$ for which $\int\! \rmd x\, \delta(x-x_0) f_0(x) = 0$ for all $x_0$.
However, should the two definitions of the collision operator disagree with
each other, then the one presented here looks physically more reasonable: for
instance, we prove here that it results in global well-posedness and in
conservation of both total energy and spin.  In fact, we show in Proposition
\ref{th:simpleccoll} that the above definition of the collision integral can
be simplified for dispersion relations satisfying all of the present
assumptions. Then it can be defined for $W\in L^\infty$ as an $L^2$-limit of
regularized Lebesgue integrals similar to those used in the definition of the
principal value integral in $\Heff$.

To check that a given dispersion relation satisfies the required assumptions,
is still a nontrivial problem in itself.  We have included a proof in Appendix
\ref{sec:dispnn} which implies that our results indeed apply to the nearest
neighbor square lattice hopping, at least as long as the dimension is high
enough, $d\ge 3$.

\subsection*{Acknowledgements}

The research of J.~Lukkarinen and P.~Mei was supported by the Academy of
Finland and partially by the ERC Advanced Investigator Grant 227772.  We 
are also grateful to the Nordic Institute for Theoretical Physics
(NORDITA), Stockholm, Sweden, to the Erwin Schr\"odinger International
Institute for Mathematical Physics (ESI), Vienna, Austria,  and to the Banff
International Research Station for Mathematical Innovation and Discovery
(BIRS), Banff, Canada, for their hospitality during the workshops in which
part of the research for the present work has been performed.  H.~Spohn thanks
M.~F\"urst and C.~Mendl for helpful discussions.

\section{Main results}
\label{sec:main}

To give a proper mathematical definition of the collision operator $\Ctot
[W]$, we need to add some assumptions about the dispersion relation $\omega$.
For instance, if we would allow $\omega$ to be a constant map, we would have
$\underline{\omega}=0$ everywhere and thus $\delta(\underline{\omega})=\infty$
identically.  As in \cite{NLS09}, we formulate our assumptions through
properties of oscillatory integrals involving $\omega$.  There is considerable
freedom, and the following choices are mainly made for convenience:
(DR\ref{it:DRdisp}) will imply that the term $\delta(\underline{\omega})$
leads to uniformly bounded measures and (DR\ref{it:DRpv}) will be used to
prove the $L^2$-continuity of $\Heff$.  Neither of these conditions is likely
to be optimal, but they facilitate the technical estimates and are general
enough to include many physically relevant cases. For instance, in Appendix
\ref{sec:hubbd3} we show that nearest neighbor interactions on a square
lattice with $d\ge 3$ satisfy all of the  assumptions.

Whenever necessary, the $d$-torus $\T^d:=\R^d/\Z^d$ is understood as having
the parameterization $[-\frac{1}{2},\frac{1}{2}]^d$ with periodic
identifications on the boundary.  In particular, arithmetic is periodically
extended to $\T^d$ and we have a normalization $\int_{\T^d} \!\rmd k=1$.

\begin{assumption}[Dispersion relation]\label{th:disprelass}
Suppose $d\ge 1$ and $\omega:\T^d\to\R$ satisfies all of the following:
\begin{jlist}[(DR\thejlisti)]
\item\label{it:DR1} The periodic extension of $\omega$ is continuous
and satisfies $\omega(-k)=\omega(k)$.
\item\label{it:DRdisp} ($\ell_3$-dispersivity). Let us consider the 
{\em free propagator\/}
  \begin{align}\label{eq:defptx}
    p_t(x) = \int_{\T^d} \!\rmd k\, \rme^{\ci 2\pi x\cdot k}
    \rme^{-\ci t \omega(k)} \, .
  \end{align}
We assume that its $\ell_3$-norm belongs to $L^3(\rmd t)$, in other words, that
$t\mapsto \norm{p_t}_3^3\in L^1$. 
\item \label{it:DRpv} (effective collisional dispersivity)  For 
$\sigma\in\set{-1,1}^4$ and $k,k'\in (\T^d)^3$ define first 
\begin{align}\label{eq:defomsig}
  \omt (k;\sigma) := 
  \sum_{i=1}^3 \sigma_i \omega(k_i) + \sigma_4 \omega(k_1+k_2-k_3)
\end{align}
and then
\begin{align*}
 & \Omega_1(k,k';\sigma) := \omt(k;\sigma) \, , \qquad
   \Omega_2(k,k';\sigma) := \omt((k_1,k'_2,k_3);\sigma) \, ,\\ &
  \Omega_3(k,k';\sigma) := \omt(k';\sigma)  \, , \qquad
   \Omega_4(k,k';\sigma) := \omt((k'_1,k_2,k'_3);\sigma)\, .
\end{align*}
Set for $s\in \R^4$, $\sigma\in \set{-1,1}^4$,
\begin{align}
  \mathcal{G}(s;\sigma) := \int_{(\T^d)^3\times (\T^d)^3} \!\rmd^3 k' 
\rmd^3 k\, \rme^{\ci \sum_{i=1}^4 s_i \Omega_i(k,k';\sigma)}\, .
\end{align}
We assume that $C_{\mathcal{G}}:= \max_\sigma \int_{\R^4} \rmd s
|\mathcal{G}(s;\sigma)| < \infty$. 
\end{jlist}
\end{assumption}

To study the solutions we consider the following function spaces
\begin{align}
  \Xherm & := \defset{W\in C(\T^d, \C^{2\times 2})}{
    W(k)^* = W(k)\text{ for all } k\in \T^d}\, , \\
  \Xphys & := \defset{W\in C(\T^d, \C^{2\times 2})}{ 
    0\le W(k)\le 1\text{ for all } k\in \T^d}\, ,\\
  \Lherm & := \defset{W\in L^2(\T^d, \C^{2\times 2})}{ 
    W(k)^* = W(k)\text{ for a.e.\ } k\in \T^d}\, ,\\
  \Lphys & := \defset{W\in L^2(\T^d, \C^{2\times 2})}{ 
    0\le W(k)\le 1\text{ for a.e.\ } k\in \T^d}\, .
\end{align} We equip the function space $\Xherm$ with the $\sup$-norm and the
equivalence classes in $\Lherm$ with $L^2$-norm which makes both spaces into
\defem{real} Banach spaces.  It is easy to check that then $\Xphys$ and
$\Lphys$ are closed convex subsets of $\Xherm$ and $\Lherm$, respectively.
  
To make the definitions of the norms explicit, we need to fix the matrix norm
for $\C^{2\times 2}$.  Since any two norms on a finite-dimensional vector
space are equivalent, the choice does not play much role in the results.
However, it will be convenient to consider here the Hilbert-Schmidt norm: we
define for $M\in \C^{2\times 2}$ 
\begin{align} \norm{M}^2 := \sum_{i,j=1}^2 |M_{ij}|^2\, .
\end{align} For the matrix product this implies an estimate $\norm{M' M}\le
\norm{M'} \norm{M} $.  The resulting $L^2$- and $L^\infty$-norms in $\Lherm$
are denoted by $\norm{W}_2$ and $\norm{W}_\infty$; explicitly, $\norm{W}_2^2 =
\int_{\T^d}\!\rmd k\, \tr(W(k)^*W(k))$ and $\norm{W}_\infty = \esssup_{k\in
\T^d} \sqrt{\tr(W(k)^*W(k))}$, where the essential supremum refers to the
Lebesgue measure.

Our main results are summarized in the following three theorems.  The first
two give a precise meaning to the formal notations used in the definition of
the collision operator.  In the first theorem we explain how $\Ccoll$ is
connected to a natural bounded Borel measure.
\begin{theorem}\label{th:maindefC} Suppose $\omega$ satisfies the conditions
(DR\ref{it:DR1}) and  (DR\ref{it:DRdisp}).  In the collision operator, for a
given $k_1\in \T^d$, the notation $\rmd k_2\rmd k_3 \rmd k_4\,
\delta(\underline{k})\delta(\underline{\omega})$ refers to a regular complete
bounded positive measure on $(\T^d)^3$ whose $\sigma$-algebra contains Borel
sets and which satisfies for any $F\in C((\T^d)^3)$ 
\begin{align} 
\Ckint \delta(\underline{k})\delta(\underline{\omega})
F(k_2,k_3,k_4) =  \lim_{\vep \to 0^+} 
\int_{(\T^d)^2} \frac{\rmd k_2 \rmd k_3}{\pi} \,  F(k_2,k_3,k_1+k_2-k_3)
\frac{\vep}{\vep^2+\underline{\omega}^2}\, .
\end{align} 
If $W\in \Xphys$, then $\Ccoll [W]$ is defined using this measure
in (\ref{eq:collop2}).

For $W\in \Lphys$, $\Ccoll [W]\in \Lherm$ is defined by using continuous
approximants, as explained in detail in Corollary \ref{th:l2colloper}.   If
all of the conditions (DR\ref{it:DR1})--(DR\ref{it:DRpv}) hold, then for any
$W\in \Lphys$
\begin{align}\label{eq:collop4} 
& \Ccoll [W](k_1) = \lim_{\vep\to 0^+}  \Ckint
\delta(\underline{k})\, \frac{\vep}{\vep^2+\underline{\omega}^2} 
\nonumber \\ & \quad  \times  
\left( \tilde{W}_1 W_3 J[\tilde{W}_2 W_4] + 
J[ W_4 \tilde{W}_2] W_3 \tilde{W}_1  -W_1\tilde{W}_3 J[W_2\tilde{W}_4] -
J[\tilde{W}_4 W_2]\tilde{W}_3  W_1 \right) \, ,
\end{align} 
where the limit is taken in $\Lherm$-norm.
\end{theorem}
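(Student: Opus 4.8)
The theorem packages, in a form convenient for the later sections, constructions that I would carry out in detail in Sections~\ref{sec:Ccoll} and~\ref{sec:pv}. The plan has three ingredients: build the measure from an oscillatory representation controlled by (DR\ref{it:DRdisp}); check that the resulting operator is well defined on continuous and then on $L^2$ data; and verify the Lebesgue-integral limit formula~(\ref{eq:collop4}) using (DR\ref{it:DRpv}).

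\emph{The measure.} Fix $k_1\in\T^d$ and resolve the formal $\delta(\underline{k})$ by the substitution $k_4=k_1+k_2-k_3$, so that $\underline{\omega}$ becomes the continuous function $(k_2,k_3)\mapsto\omega(k_1)+\omega(k_2)-\omega(k_3)-\omega(k_1+k_2-k_3)$ on $(\T^d)^2$ (here (DR\ref{it:DR1}) is used). From the Poisson-kernel identity $\frac1\pi\frac{\vep}{\vep^2+u^2}=\frac1{2\pi}\int_\R\rmd t\,\rme^{\ci t u-\vep|t|}$ one gets, for $F\in C((\T^d)^3)$,
\[
 \int_{(\T^d)^2}\frac{\rmd k_2\,\rmd k_3}{\pi}\,F(k_2,k_3,k_1+k_2-k_3)\,\frac{\vep}{\vep^2+\underline{\omega}^2}=\frac1{2\pi}\int_\R\rmd t\,\rme^{-\vep|t|}\,I_{k_1}(t;F),
\]
with $I_{k_1}(t;F):=\int_{(\T^d)^2}\rmd k_2\,\rmd k_3\,F(k_2,k_3,k_1+k_2-k_3)\,\rme^{\ci t\underline{\omega}}$. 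For $F$ a trigonometric polynomial with Fourier coefficients $(c_m)$, expanding each $\rme^{\pm\ci t\omega(k_i)}$ through the Fourier coefficients $p_t$ of~(\ref{eq:defptx}) and rewriting the constraint $k_4=k_1+k_2-k_3$ with a periodic $\delta$ turns $I_{k_1}(t;F)$ into a finite sum, each summand being, up to a unimodular prefactor, $c_m\sum_{x\in\Z^d}\rme^{-\ci 2\pi x\cdot k_1}\,p_t(x)\,\overline{p_t(x+a)}\,p_t(x+b)$ for integer vectors $a,b$ depending on $m$; the evenness $\omega(-k)=\omega(k)$ is what makes all three factors translates or reflections of $p_t$, so the generalized Hölder inequality gives $|I_{k_1}(t;F)|\le\norm{(c_m)}_{\ell^1}\,\norm{p_t}_3^3$, uniformly in $k_1$. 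Hence by (DR\ref{it:DRdisp}) the $t$-integral above converges absolutely, and by dominated convergence the Poisson-regularized left-hand side converges as $\vep\to0^+$ to $\frac1{2\pi}\int_\R\rmd t\,I_{k_1}(t;F)$ for every trigonometric polynomial $F$. Since for $F\ge0$ the regularized integrals are bounded by $\norm{F}_\infty\cdot\frac1{2\pi}\int_\R\norm{p_t}_3^3\,\rmd t$ and trigonometric polynomials are $\sup$-norm dense in $C((\T^d)^3)$, an $\eta/3$ argument promotes the convergence to all $F\in C((\T^d)^3)$. The limiting functional is linear, bounded and positive --- a pointwise limit of integrals against the nonnegative measures $\frac{\rmd k_2\rmd k_3}{\pi}\frac{\vep}{\vep^2+\underline{\omega}^2}$ carried by $\set{k_4=k_1+k_2-k_3}$ --- so Riesz--Markov yields a unique regular Borel measure, whose completion is the claimed regular complete bounded positive measure, with total mass $\le\frac1{2\pi}\int_\R\norm{p_t}_3^3\,\rmd t$ uniformly in $k_1$. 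The displayed limit identity in the theorem is exactly what has just been proved.

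\emph{The operator on $\Xphys$, and~(\ref{eq:collop4}) in the continuous case.} For $W\in\Xphys$ the matrix integrand of~(\ref{eq:collop2}) is continuous on $(\T^d)^3$, hence integrable against the above measure entrywise; joint continuity of the integrand together with the weak continuity of $k_1\mapsto\mu_{k_1}$, established in Section~\ref{sec:Ccoll}, gives $\Ccoll[W]\in\Xherm$. Applying the limit identity entrywise to this continuous integrand shows that for $W\in\Xphys$ the $\vep$-integrals in~(\ref{eq:collop4}) converge pointwise in $k_1$ to $\Ccoll[W](k_1)$; as $\norm{W}_\infty\le1$ the matrix part is bounded by a structural constant and the uniform total-mass bound makes these $\vep$-integrals bounded in $k_1$ and $\vep$, so dominated convergence in $k_1$ upgrades the convergence to $\Lherm$-norm. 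For $W\in\Lphys$, $\Ccoll[W]$ is \emph{defined} to be the $\Lherm$-limit of $\Ccoll[W_n]$ along any sequence $W_n\in\Xphys$ with $W_n\to W$ in $L^2$ (e.g.\ mollifications of $W$); that this is consistent --- independent of the sequence and $\Lherm$-valued --- is the content of Lemma~\ref{th:l2colloper0} and Corollary~\ref{th:l2colloper}, and rests on a bound $\norm{\Ccoll[V]-\Ccoll[V']}_2\le C\norm{V-V'}_2$ on $\Xphys$, itself a corollary of the uniform estimate described next.

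\emph{Formula~(\ref{eq:collop4}) for $W\in\Lphys$; the main obstacle.} Write $\Ccoll^\vep[W]$ for the integral on the right of~(\ref{eq:collop4}) before the limit. With $W_n\in\Xphys$, $W_n\to W$ in $L^2$,
\[
 \norm{\Ccoll^\vep[W]-\Ccoll[W]}_2\le\norm{\Ccoll^\vep[W]-\Ccoll^\vep[W_n]}_2+\norm{\Ccoll^\vep[W_n]-\Ccoll[W_n]}_2+\norm{\Ccoll[W_n]-\Ccoll[W]}_2 .
\]
The middle term tends to $0$ as $\vep\to0^+$ for fixed $n$ by the continuous case, and the last tends to $0$ as $n\to\infty$ by the definition of $\Ccoll[W]$; so it remains to bound the first term by $C\norm{W-W_n}_2$ with $C$ \emph{independent of} $\vep$, i.e.\ to prove $\norm{\Ccoll^\vep[V]-\Ccoll^\vep[V']}_2\le C\norm{V-V'}_2$ for $V,V'\in\Lphys$; choosing $n$ large and then $\vep$ small then closes the argument. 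This uniform bound is the crux and is what (DR\ref{it:DRpv}) is for. Since $\Ccoll^\vep[W]$ is a finite sum of quartic monomials in $W$ (with $\tilde W=1-W$), $\Ccoll^\vep[V]-\Ccoll^\vep[V']$ telescopes into pieces each linear in $\Delta:=V-V'$ and of lower degree in factors which are bounded by $1$ in Hilbert--Schmidt norm. The pieces in which $\Delta$ occupies the \emph{input} slot $k_1$ factor as $\Delta(k_1)$ times a regularized integral whose pointwise norm is bounded by a structural constant times the total mass of $\mu^\vep_{k_1}$, hence by $C_0:=\frac1{2\pi}\int_\R\norm{p_t}_3^3\,\rmd t$; these obey $\norm{\cdot}_2\le C C_0\norm{\Delta}_2$ using only (DR\ref{it:DRdisp}). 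For the remaining pieces, expanding $\norm{\cdot}_2^2$, resolving the momentum $\delta$'s, changing variables so that $\Delta$'s arguments become independent integration variables, bounding the non-commutative matrix products in Hilbert--Schmidt norm and applying Cauchy--Schwarz to pull out $\norm{\Delta}_2^2$ leaves an oscillatory integral over the torus momenta which, after representing each Poisson kernel through $\frac{\vep}{\vep^2+u^2}=\frac12\int_\R\rmd s\,\rme^{\ci s u-\vep|s|}$ and integrating out the $\vep$-damped weights, is dominated by $C_{\mathcal G}=\max_\sigma\int_{\R^4}\rmd s\,|\mathcal G(s;\sigma)|<\infty$ from (DR\ref{it:DRpv}). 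The delicate part --- the substance of Section~\ref{sec:pv} --- is the bookkeeping underlying this last reduction: organizing the non-commutative products so that exactly the $\Delta$-arguments carry the $L^2$-weight, matching the resulting phase to the functions $\Omega_i$ of~(DR\ref{it:DRpv}), keeping every estimate uniform in $\vep$, and handling sets of Lebesgue measure zero correctly (this is why one must use the continuous-approximant definition of $\Ccoll[W]$ rather than pointwise values, the underlying $\delta(\underline{\omega})$-measures being singular with respect to Lebesgue measure). The rest of the argument is soft.
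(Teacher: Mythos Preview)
Your proposal is correct and follows essentially the same route as the paper: the measure is built exactly as in Proposition~\ref{th:defC1} via the Poisson-kernel Fourier representation, the triple-$p_t$ convolution bound, and density of trigonometric polynomials; the extension to $\Lphys$ is the content of Lemma~\ref{th:l2colloper0} and Corollary~\ref{th:l2colloper}; and your triangle-inequality argument for~(\ref{eq:collop4}), hinging on a uniform-in-$\vep$ Lipschitz estimate obtained from (DR\ref{it:DRpv}) through the oscillatory representation of the Poisson kernel, is precisely the mechanism of Lemma~\ref{th:l2linfbound}, Proposition~\ref{th:I1conv}, and Corollary~\ref{th:simpleccoll}. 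One small sharpening worth noting: the paper's key estimate is the asymmetric bound $\norm{\mathcal{I}[f,g]}_{L^2}\le C\norm{f}_{L^2}\norm{g}_{L^\infty}$ (Lemma~\ref{th:l2linfbound}), so after telescoping one places the single $\Delta$-factor in the $L^2$ slot and the remaining Fermi-bounded factors in $L^\infty$, rather than pulling out $\norm{\Delta}_2^2$ by a symmetric Cauchy--Schwarz; your phrasing is slightly ambiguous on this point but the ingredients you list are the right ones.
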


The second result shows that $\Heff$ can be defined as an ``$L^2$-principal
value integral'' which can also be obtained as a limit of terms using more
regular cutoffs.  (In fact, the second regularization arises naturally in the
derivation of the equation from the microscopic Hubbard model, cf.\
\cite{flms12}.)
\begin{theorem}\label{th:maindefHeff} Suppose $\omega$ satisfies all of the
conditions (DR\ref{it:DR1})--(DR\ref{it:DRpv}) and define for $\vep>0$
\begin{align}\label{eq:defHeffeps} & \Heff ^\vep[W](k_1) := -\frac{1}{2} \Ckint
\delta(\underline{k})\, \frac{\underline{\omega}}{ \underline{\omega}^2 +
\vep^2} \, \nonumber \\  & \quad \times  \left( W_3 J[\tilde{W}_2 W_4] + J[
W_4 \tilde{W}_2] W_3  + \tilde{W}_3 J[W_2\tilde{W}_4] + J[\tilde{W}_4
W_2]\tilde{W}_3  \right) \, .
\end{align}  Then, given $W\in \Lphys$ there is an $\Lherm$-limit of
$\Heff^\vep[W]$ as $\vep\to 0^+$ which we denote by $\Heff[W]$.  In addition,
for any $W\in \Lphys$ we can find a sequence $\vep_n\to 0^+$ such that for
almost every $k_1$
\begin{align}\label{eq:L2pv}  & \Heff[W](k_1) = \lim_{n\to \infty}
-\frac{1}{2}
\Ckint \delta(\underline{k})\, \frac{\cf(|\underline{\omega}|\ge
\vep_n)}{\underline{\omega}} \,  \nonumber \\  & \quad \times  \left( W_3
J[\tilde{W}_2 W_4] + J[ W_4 \tilde{W}_2] W_3  + \tilde{W}_3 J[W_2\tilde{W}_4]
+ J[\tilde{W}_4 W_2]\tilde{W}_3  \right) \, ,
\end{align}  where $W_i:=W(k_i)$, $i=1,2,3,4$.
\end{theorem}

The third theorem shows that, endowed with the above definitions, the
Hubbard-Boltzmann equation is well-posed for any fermionic initial data.  It
also implies that the solution satisfies the basic properties expected from a
kinetic scaling limit of the original Hubbard model: preservation of the Fermi
property and of total energy and spin.
\begin{theorem}\label{th:main1} Suppose $\omega$ satisfies all of the
conditions (DR\ref{it:DR1})--(DR\ref{it:DRpv}) and define $\Ccoll$ and $\Heff$
as in Theorems \ref{th:maindefC} and \ref{th:maindefHeff}, respectively.  If
$W_0\in \Lphys$, then there is a unique $W\in C^{(1)}([0,\infty),\Lphys)$ such
that $W(0,k)=W_0(k)$ almost everywhere and for all $t>0$ 
\begin{align}
  \partial_t W_t = \Ccoll [W_t] - \ci [\Heff[W_t],W_t]\, ,
\end{align} where $W_t(k):=W(t,k)$.  Here $W_t$ depends $\Lherm$-continuously
on $W_0$ and the dependence is uniform on any compact interval of
$[0,\infty)$.  In addition, for all $t\ge 0$
\begin{align} & \int_{\T^d}\!\rmd k\, \omega(k) \trace W(t,k) =
\int_{\T^d}\!\rmd k\, \omega(k) \trace W_0(k) \, ,\label{eq:claw1} \\ &
\int_{\T^d}\!\rmd k\, W(t,k) = \int_{\T^d}\!\rmd k\, W_0(k) \in \C^{2\times
2}\, . \label{eq:claw2}
\end{align}
\end{theorem}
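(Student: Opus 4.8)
The plan is to combine the two building blocks from Theorems~\ref{th:maindefC} and \ref{th:maindefHeff} with a fixed-point/continuation argument, following Dolbeault's scheme but reworked for the matrix-valued, non-commuting setting and for the singular Vlasov term.

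\textbf{Step 1: Local existence via the regularized problem.} First I would work at fixed $\vep>0$ and consider the regularized equation $\partial_t W = \Ccoll[W]-\ci[\Heff^\vep[W],W]$ on the Banach space $\Xherm$ of continuous Hermitian matrix fields. Using the bounded Borel measures from Theorem~\ref{th:maindefC} and the fact that for continuous $W$ the regularized $\Heff^\vep[W]$ is again continuous and bounded uniformly in $k$ (with bounds that are Lipschitz in $W$ on bounded sets of $\Lherm\cap L^\infty$), the right-hand side is locally Lipschitz on $\Xphys$. A Picard iteration then yields a unique local-in-time solution; I expect Sec.~\ref{sec:regIPV} to supply exactly this. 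The key structural input is the inequality (\ref{eq:mainmatrixbound}), which together with the commutator structure of $\Cdisp$ shows that the cone conditions $W\ge 0$ and $1-W\ge 0$ are preserved: at a point where the smallest eigenvalue of $W$ hits $0$, the gain terms in $\Ccoll$ are manifestly $\ge 0$ by (\ref{eq:mainmatrixbound}) while the loss terms vanish (they carry a factor $W_1$ on both sides), and the commutator $-\ci[\Heff,W]$ leaves eigenvalues instantaneously unchanged in the relevant direction; the symmetric argument applied to $1-W$ (using $\tilde W=1-W$ and the identity $J[\cdot]$ respects the swap $W\leftrightarrow\tilde W$) handles the upper constraint. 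Hence $\Xphys$ is invariant, and since the $L^\infty$-bound $\norm{W_t}_\infty\le 1$ is then automatic, the local solution extends to a global one in $C^{(1)}([0,\infty),\Xphys)$.

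\textbf{Step 2: Uniform-in-$\vep$ $L^2$ estimates and passage to the limit.} Next I would take two solutions $W^\vep, W^{\vep'}$ (or a solution and a candidate limit), subtract the equations, test against $W^\vep-W^{\vep'}$ in $\Lherm$, and use the $L^2$-continuity estimates for $\Ccoll$ and for $\Heff^\vep$ established in Sec.~\ref{sec:pv} — in particular that $\vep\mapsto\Heff^\vep[W]$ is Cauchy in $\Lherm$ uniformly for $W\in\Lphys$, with a rate controlled by $C_{\mathcal G}$ from (DR\ref{it:DRpv}). This gives a Gronwall inequality of the form $\frac{d}{dt}\norm{W^\vep_t-W^{\vep'}_t}_2^2\le C(\norm{W^\vep_t-W^{\vep'}_t}_2^2 + \eta(\vep,\vep'))$ with $\eta\to0$, so $\{W^\vep\}$ is Cauchy in $C([0,T],\Lherm)$ for every $T$. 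The limit $W$ lies in $C([0,\infty),\Lphys)$ because $\Lphys$ is closed and convex, hence $L^2$-closed. That $W$ solves the un-regularized equation and is $C^{(1)}$ in $t$ follows by passing to the limit in the integrated (mild) form $W_t=W_0+\int_0^t(\Ccoll[W_s]-\ci[\Heff[W_s],W_s])\,ds$, using the limit characterizations (\ref{eq:collop4}) and the $\Lherm$-convergence $\Heff^\vep[W]\to\Heff[W]$ of Theorem~\ref{th:maindefHeff} together with continuity in $W$.

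\textbf{Step 3: Uniqueness, stability, conservation laws.} Uniqueness within $C^{(1)}([0,\infty),\Lphys)$ and $\Lherm$-Lipschitz dependence on $W_0$ (uniform on compact time intervals) both come from the same Gronwall argument applied to two genuine solutions of the limit equation, now with $\eta=0$, using the $L^2$-continuity of the limiting operators $\Ccoll$ and $\Heff$ on $\Lphys$. For the conservation laws, I would argue at the level of the regularized solutions, where all integrals are absolutely convergent: integrating $\partial_t W^\vep$ over $\T^d$, the commutator term $[\Heff^\vep,W^\vep]$ integrates to something controlled, and the dissipative term's integral vanishes by the standard symmetrization in the variables $k_1,k_2,k_3,k_4$ combined with the cyclicity of the trace (for the trace identity one additionally inserts the weight $\omega(k_1)$ and uses $\delta(\underline k)\delta(\underline\omega)$ to symmetrize the $\omega$-weights, which cancel). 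Taking $\vep\to0$ along the $\Lherm$-convergent sequence then transfers (\ref{eq:claw1})--(\ref{eq:claw2}) to $W$. One subtlety to flag: the commutator term is not obviously trace-preserving pointwise in $k$ after integration, but $\trace[\Heff^\vep,W^\vep]=0$ identically, giving (\ref{eq:claw2}) directly, and for (\ref{eq:claw1}) the $\omega$-weighted commutator integral must be shown to vanish, which again follows from the antisymmetry of $1/\underline\omega$ under the reflection $\underline\omega\mapsto-\underline\omega$ implemented by swapping $(k_1,k_2)\leftrightarrow(k_3,k_4)$.

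\textbf{Main obstacle.} The hard part will be Step~2: proving the $\vep$-uniform $L^2$-continuity of $\Heff^\vep$ and the convergence $\Heff^\vep[W]\to\Heff[W]$ with a rate independent of $W\in\Lphys$. The principal-value integral is genuinely singular, the matrices do not commute so one cannot diagonalize simultaneously, and the estimate must survive the quadratic (in $W$) nonlinearity; this is exactly where assumption (DR\ref{it:DRpv}) and the oscillatory-integral bound $C_{\mathcal G}<\infty$ do the essential work, by converting the $k$-integrals into absolutely convergent integrals over the dual variables $s$ after inserting the representation $\tfrac{\underline\omega}{\underline\omega^2+\vep^2}=\int ds\,(\cdots)$ for the regularized kernel. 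Making this rigorous, and controlling the four $\Omega_i$-terms that appear when one expands the product of four $W$'s into a sum over the combinatorial structure encoded in $\Omega_1,\dots,\Omega_4$, is the technical heart of Sec.~\ref{sec:pv}, on which I would rely.
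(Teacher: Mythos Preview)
Your overall strategy matches the paper's: solve the regularized problem on $\Xphys$ (Sec.~\ref{sec:regIPV}), show the regularized solutions form a Cauchy family in $C([0,T],\Lherm)$ via the $L^2$-estimates of Sec.~\ref{sec:pv}, pass to the limit, and run a Gronwall argument for uniqueness and stability. Two differences and one slip are worth flagging.

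First, you omit the mollification of the initial data: since $W_0\in\Lphys$ need not be continuous, the paper replaces it by a local average $A_\delta[W_0]\in\Xphys$ before invoking Theorem~\ref{th:main2}, and the Step~2 Cauchy estimate must then also absorb $\norm{A_\delta[W_0]-A_{\delta'}[W_0]}_2$. Second, for the Cauchy property the paper does not use your direct energy estimate on $\norm{W^\vep-W^{\vep'}}_2^2$; instead it exploits the unitary Duhamel representation $w_t=u_{t,0}w_0u_{t,0}^*+\int_0^t u_{t,s}\Ccoll[w_s]u_{t,s}^*\,ds$ furnished by Theorem~\ref{th:main2}, telescopes the difference, and separately bounds $\norm{\delta u_{t,s}}_2$ in terms of $\int_0^t\norm{\Heff^\vep[w_s]-\Heff^{\vep'}[w'_s]}_2\,ds$ via (\ref{eq:Heffepsbound}). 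Your direct approach also works (the key cancellation $(\Delta,[\Heff^\vep[W^\vep],\Delta])=0$ holds by cyclicity of the trace) and is in fact exactly what the paper uses for the final uniqueness/stability step; the Duhamel route buys pointwise-in-$k$ control and uses only the $L^\infty$-bound (\ref{eq:Ccollbound}) on $\Ccoll$ rather than its $L^2$-Lipschitz property.

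Finally, your conservation-law discussion has the two cases reversed. For the energy (\ref{eq:claw1}) the commutator term contributes nothing because $\trace[\Heff^\vep[W],W]=0$ \emph{pointwise}, so only $\int\omega\,\trace\Ccoll[W]\,dk$ must be shown to vanish, which follows from the $(k_1,k_2)\leftrightarrow(k_3,k_4)$ symmetrization replacing $\omega_1$ by $\underline{\omega}/4$. It is the \emph{matrix}-valued spin conservation (\ref{eq:claw2}) that requires the nontrivial identity $\int_{\T^d}[\Heff^\vep[W],W]\,dk_1=0$; the paper obtains this by rewriting $\Cdisp^\vep$ as a four-fold integral, averaging over both swaps, and observing that the trace-terms in $J$ cancel and the remainder is antisymmetric under $(k_1,k_2)\leftrightarrow(k_3,k_4)$.
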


Here the notation $C^{(1)}([0,\infty),\Lphys)$ denotes continuous functions
$[0,\infty)\to \Lphys\subset \Lherm$ which are continuously Fr\'echet
differentiable on $(0,\infty)$, considered as an open subset of the Banach
space $\R$, assuming also that the limit $t\to 0^+$ of the derivatives exists.
We recall here the basic property that if $W\in C^{(1)}([0,\infty),\Lphys)$
then for all $0\le t_0\le t_1 <\infty$
\begin{align}\label{eq:L2hermcalc} 
W_{t_1} = W_{t_0} + \int_{t_0}^{t_1} \!\rmd s\, \partial_s W_s\, ,
\end{align} 
where the integral is understood as a vector valued integral in
$\Lherm$ over the compact set $[t_0,t_1]$ and $\partial_s W_s \in \Lherm$
denotes the Fr\'echet derivative at $s\in [t_0,t_1]$. (Strictly speaking, the
Fr\'echet derivative should here be an element $D\in \mathcal{B}(\R,\Lherm)$.
However, the action of this map is uniquely determined by giving $D 1\in
\Lherm$ and we always identify $D$ with $D 1$ for such maps and denote this by
$\partial_s W_s$.  Also, such a map is continuously Fr\'echet differentiable
if and only if the map $s\mapsto \partial_s W_s$ is continuous which shows
that the above vector valued integral in $\Lherm$ is well-defined for any
$W\in C^{(1)}([0,\infty),\Lphys)$.  After these preliminaries,
(\ref{eq:L2hermcalc}) follows from the fundamental theorem of calculus and the
easily checked property that $\partial_s \Lambda[W_s]=\Lambda[\partial_s W_s]$
for any dual element $\Lambda\in (\Lherm)^*$.)

\section{Basic properties of the dissipative term $\Ccoll$}
\label{sec:Ccoll}

This section concerns the definition of the dissipative part of the collision
operator.  In particular, our goal is to show that the first two assumptions,
(DR\ref{it:DR1}) and (DR\ref{it:DRdisp}), suffice to define the dissipative
term $\Ccoll[W]$ via a Borel measure with an $L^2$-continuity property which
will be needed later.

We begin by showing that the measure merits its symbolic notation which uses
the two formal delta-functions, at least as long as the integrand is
continuous.  This is the main goal of the first Proposition here.  The result
is somewhat more general than what we need for the present proof but the
apparently superfluous properties could well become useful in rigorous studies
of other phonon Boltzmann equations.  For the statement, we consider more
general ``energy constraints'': define for $k\in (\T^d)^4$ and 
$\sigma \in \set{-1,1}^4$
\begin{align}\label{eq:defomk} 
& \omn(k,\sigma):=\sum_{i=1}^4 \sigma_i \omega(k_{i}) \, .
\end{align} 
This is related to the definition in (\ref{eq:defomsig}) by
$\omn(k,\sigma)=\omt((k_1,k_1,k_2),\sigma)$, whenever $k_1+k_2=k_3+k_4$ (mod
$1$).  In addition, the combination appearing in the definition of the collision
operator satisfies $\underline{\omega}=\omn(k,(1,1,-1,-1))$.

\begin{proposition}\label{th:defC1}
  Assume that $\omega:\T^d\to \R$ is continuous and satisfies
  (DR\ref{it:DRdisp}), and define $\omt$ and $\omn$ as in (\ref{eq:defomsig})
  and (\ref{eq:defomk}).  Then for every $k_1\in \T^d$, $\alpha\in \R$, and
  $\sigma \in \set{-1,1}^4$ the map  
\begin{align}\label{eq:numaindef}
  C((\T^d)^3)\ni F \mapsto \lim_{\vep \to 0^+} \int_{(\T^d)^2} 
  \frac{\rmd k_2\rmd k_3}{\pi} \,  F(k_2,k_3,k_1+k_2-k_3) 
  \frac{\vep}{\vep^2+(\omt((k_1,k_2,k_3),\sigma)-\alpha)^2}\, ,
\end{align}
defines a regular complete bounded positive measure on $(\T^d)^3$,
$\sigma$-algebra containing Borel sets, which we denote by
$\nu_{k_1,\alpha,\sigma}(\rmd^3 k)$ or 
$\Ckint \delta(\omn(k,\sigma)-\alpha)\delta(k_1+k_2-k_3-k_4)$. 

The measure has the following properties, for any fixed $\sigma$:
\begin{enumerate}
 \item\label{it:totcoll} Define $\totcoll(k_1,\alpha;\sigma):=\int\! \rmd
   \nu_{k_1,\alpha,\sigma}$.  Then $\totcoll\in C(\T^d\times\R)$. 
 \item\label{it:support} 
The set $\defset{(k_2,k_3,k_4)\in (\T^d)^3}{k_1+k_2-k_3-k_4=0,\
  \Omega(k,\sigma)=\alpha}$ contains the support of the measure
$\nu_{k_1,\alpha,\sigma}$. 
 \item\label{it:fomint} 
 Suppose $F\in C((\T^d)^3)$, $f\in C(\R)$, $k_1\in \T^d$, and $\alpha\in \R$
 are given.  Then 
 \begin{align}
   \int\!\nu_{k_1,\alpha,\sigma}(\rmd^3 k')\, f(\Omega((k_1,k'),\sigma)) F(k')
   =  f(\alpha)  \int\!\nu_{k_1,\alpha,\sigma}(\rmd^3 k')\, F(k')\, .
 \end{align}
In particular, 
$\int\!\nu_{k_1,0,\sigma}(\rmd^3 k')\, \Omega((k_1,k'),\sigma)F(k') =0$ 
for any $F\in C((\T^d)^3)$ and all $k_1\in \T^d$. 
\item\label{it:contin} 
 If $G\in C((\T^d)^4\times\R)$, then the map $(k_1,\alpha) \mapsto
 \int\!\nu_{k_1,\alpha,\sigma}(\rmd^3 k')\, G(k_1,k',\alpha)$ is continuous on
 $\T^d\times\R$. 
\item\label{it:alphaprop}  
 Suppose $F\in C((\T^d)^3\times \R)$ and $m>\sup_k |\omt(k,\sigma)|$.  Then
 for all $k_1\in \T^d$ we have $\int_{-m}^m\!\rmd
 \alpha\left(\int\!\nu_{k_1,\alpha,\sigma}(\rmd^3 k')\, F(k',\alpha)\right)=
 \int_{(\T^d)^2}\!\rmd k_2 \rmd k_3 \,
 F(k_2,k_3,k_1+k_2-k_3,\omt((k_1,k_2,k_3),\sigma))$. 
\end{enumerate}
\end{proposition}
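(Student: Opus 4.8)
The plan is to establish the measure-theoretic content of the Proposition by realizing the limiting functional in (\ref{eq:numaindef}) as integration against a well-defined bounded Borel measure, and then to read off the five listed properties from the construction together with (DR\ref{it:DRdisp}). First I would introduce the approximating densities $\nu^\vep_{k_1,\alpha,\sigma}(\rmd^3 k)$ obtained by pushing forward $\frac{1}{\pi}\rmd k_2\rmd k_3\,\frac{\vep}{\vep^2+(\omt((k_1,k_2,k_3),\sigma)-\alpha)^2}$ under $(k_2,k_3)\mapsto(k_2,k_3,k_1+k_2-k_3)$; these are positive Borel measures with total mass bounded uniformly in $\vep$, since $\int_\R\frac{\vep}{\vep^2+u^2}\rmd u=\pi$ and $\omt$ is continuous (hence bounded). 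The key analytic input is that, for each fixed $F\in C((\T^d)^3)$, the limit in (\ref{eq:numaindef}) exists; the natural route is to write $\frac{\vep}{\pi(\vep^2+u^2)}=\frac{1}{2\pi}\int_\R\rmd s\,\rme^{-\vep|s|}\rme^{\ci s u}$ and use Fubini to turn the $(k_2,k_3)$-integral into an oscillatory integral in $s$, then invoke (DR\ref{it:DRdisp}) — concretely $t\mapsto\norm{p_t}_3^3\in L^1$ — to get an $\vep$-uniform integrable bound in $s$ after expanding $\omt$ into the three dispersion terms plus the $k_1+k_2-k_3$ term, so that dominated convergence gives the $\vep\to0^+$ limit. (This is exactly the mechanism the paper signals by phrasing the assumptions through oscillatory integrals "as in \cite{NLS09}".) Once the limit exists for all $F\in C((\T^d)^3)$ and defines a positive linear functional bounded by $\totcoll(k_1,\alpha;\sigma)\norm{F}_\infty$, the Riesz–Markov representation theorem yields the measure $\nu_{k_1,\alpha,\sigma}$; taking its completion and noting regularity on the compact metric space $(\T^d)^3$ gives the stated structure.

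For the individual properties I would argue as follows. Property \ref{it:totcoll}: apply the construction with $F\equiv1$, which gives $\totcoll(k_1,\alpha;\sigma)=\lim_\vep\int_{(\T^d)^2}\frac{\rmd k_2\rmd k_3}{\pi}\frac{\vep}{\vep^2+(\omt-\alpha)^2}$; continuity in $(k_1,\alpha)$ follows from the same oscillatory-integral representation, where the $s$-integrand depends continuously on $(k_1,\alpha)$ and is dominated uniformly via (DR\ref{it:DRdisp}), so dominated convergence gives joint continuity. Property \ref{it:support}: it suffices to show that if $F\ge0$ is continuous and vanishes on the closed set $\{k_1+k_2-k_3-k_4=0,\ \Omega(k,\sigma)=\alpha\}$, then $\int F\,\rmd\nu_{k_1,\alpha,\sigma}=0$; since the approximants already live on the affine subspace $\{k_1+k_2-k_3-k_4=0\}$, the first constraint is automatic, and for the energy constraint one bounds $F(k_2,k_3,k_1+k_2-k_3)\le C(\delta)\cf(|\omt-\alpha|\ge\delta)+o(1)$ as one shrinks a neighborhood, using that $\int_{|u|\ge\delta}\frac{\vep}{\pi(\vep^2+u^2)}\rmd u\to0$. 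Property \ref{it:fomint}: by a Stone–Weierstrass / monotone-class argument it is enough to check it for $f$ a polynomial, hence for $f(u)=u^n$, and there one uses that $\omt((k_1,k'),\sigma)-\alpha$ multiplied by the Poisson kernel is $\order{\vep}$ in the relevant sense — more precisely $(\omt-\alpha)\frac{\vep}{\vep^2+(\omt-\alpha)^2}\le\vep\cdot\frac{1}{2\vep}$ pointwise is too crude, so instead I would write $(\omt-\alpha)\frac{\vep}{\pi(\vep^2+(\omt-\alpha)^2)}$ and note its integral against continuous $F$ tends to $0$ because this kernel is an odd perturbation whose $L^1$ mass in $u$ is $\order{\vep\log(1/\vep)}$, or, cleaner, differentiate the identity $\int_\R\frac{\vep}{\pi(\vep^2+u^2)}g(u+\alpha)\rmd u\to g(\alpha)$ applied to $g(u)=f(u)\times(\text{smooth})$; iterating handles general polynomials and then $f(\alpha)$ factors out in the limit. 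The special case $\alpha=0$ is immediate from $f(u)=u$. Property \ref{it:contin}: approximate $G$ uniformly on the compact set $(\T^d)^4\times[-m,m]$ by finite sums $\sum_j \phi_j(k_1,\alpha)\psi_j(k')$ with $\phi_j,\psi_j$ continuous (Stone–Weierstrass on a product of compacta), apply property \ref{it:contin}-type continuity to each $\int\psi_j\,\rmd\nu_{k_1,\alpha,\sigma}$ via the oscillatory representation, and control the error uniformly using the uniform mass bound $\sup_{k_1,\alpha}\totcoll<\infty$. Property \ref{it:alphaprop}: for fixed $\vep$, Fubini plus the change of variables in the inner integral gives $\int_{-m}^m\rmd\alpha\int\nu^\vep_{k_1,\alpha,\sigma}(\rmd^3k')F(k',\alpha)=\int_{(\T^d)^2}\rmd k_2\rmd k_3\int_{-m}^m\rmd\alpha\,\frac{\vep}{\pi(\vep^2+(\omt-\alpha)^2)}F(\ldots,\alpha)$, and since $|\omt|<m$ the inner $\alpha$-integral is over a full neighborhood so it converges as $\vep\to0^+$ to $F(k_2,k_3,k_1+k_2-k_3,\omt((k_1,k_2,k_3),\sigma))$ by the standard Poisson-kernel approximation of the identity; dominated convergence in $(k_2,k_3)$, justified by the uniform bound and continuity of $F$, then passes the limit outside, matching the right-hand side. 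Finally one checks the limit on the left also equals $\int_{-m}^m\rmd\alpha\lim_\vep(\ldots)$ using the uniform-in-$\alpha$ mass bound to dominate.

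The main obstacle I anticipate is the very first step: proving that the $\vep\to0^+$ limit in (\ref{eq:numaindef}) genuinely exists for every continuous $F$, uniformly enough to yield a bounded measure rather than merely a finitely additive functional or a distributional object. The Poisson kernel $\frac{\vep}{\pi(\vep^2+u^2)}$ is an approximate identity in the single variable $u=\omt-\alpha$, but here it is pulled back along the map $(k_2,k_3)\mapsto\omt((k_1,k_2,k_3),\sigma)$, whose level sets can degenerate (e.g.\ where $\nabla\omt$ vanishes), so one cannot simply coarea-disintegrate; this is precisely where the global oscillatory-integral hypothesis (DR\ref{it:DRdisp}) must do the work, converting the pointwise-singular limit into an absolutely convergent $s$-integral with an $\vep$-independent $L^1(\rmd s)$ majorant built from $\norm{p_t}_3^3\in L^1(\rmd t)$. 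Making that estimate airtight — correctly factoring $\rme^{\ci s\,\omt((k_1,k_2,k_3),\sigma)}$ into a product of three free-propagator kernels $p_{\pm s}$ plus one more factor coming from the $k_4=k_1+k_2-k_3$ substitution, and checking the exponents on the $\ell_3/L^3$ bound match a Hölder triple — is the technical crux; once it is in place, Riesz–Markov and the approximate-identity arguments above dispatch the rest.
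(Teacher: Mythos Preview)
Your overall strategy matches the paper's: write the Poisson kernel via $\frac{\vep}{\pi(\vep^2+u^2)}=\frac{1}{2\pi}\int_\R\rmd s\,\rme^{-\vep|s|}\rme^{\ci s u}$, convert the $(k_2,k_3)$-integral to an oscillatory integral in $s$, factor $\rme^{\ci s\,\omt}$ into free propagators, bound the resulting $\Z^d$-sum by $\norm{p_s}_3^3\in L^1(\rmd s)$ via H\"older, and then invoke Riesz--Markov. The paper makes this precise by first checking convergence on trigonometric polynomials and then using equicontinuity (the uniform bound $C'$) plus density to extend to all of $C((\T^d)^3)$; you should be equally explicit about this density/equicontinuity step, since the dominated-convergence bound is only immediate when $F$ has finite Fourier support.

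Two points deserve correction. First, your early claim that the approximating measures $\nu^\vep$ have total mass bounded uniformly in~$\vep$ ``since $\int_\R\frac{\vep}{\vep^2+u^2}\rmd u=\pi$ and $\omt$ is continuous (hence bounded)'' is wrong as stated: you are integrating over $(k_2,k_3)$, not over $u$, and there is no coarea formula available. If $\omt\equiv\alpha$ on a set of positive measure the mass would be $\sim\vep^{-1}$. The uniform bound is \emph{not} automatic; it is exactly the output of the (DR\ref{it:DRdisp}) estimate applied to $F\equiv 1$, as you correctly say later. Don't claim it earlier from the Poisson normalization.

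Second, your route to item~\ref{it:fomint} is both harder than necessary and, as written, does not work: the remark about the ``$L^1$ mass in $u$ being $\order{\vep\log(1/\vep)}$'' again presumes you can integrate over $u$ rather than over $(k_2,k_3)$. The paper's argument is one line: once item~\ref{it:support} is established, the measure $\nu_{k_1,\alpha,\sigma}$ is supported in the Borel set where $\Omega((k_1,k'),\sigma)=\alpha$, so $f(\Omega((k_1,k'),\sigma))F(k')=f(\alpha)F(k')$ $\nu$-almost everywhere, and the identity follows immediately. No approximation by polynomials, no odd-kernel estimate --- just use the support property you have already proved.
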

\begin{proof} $\sigma$ will be considered fixed in the following, and we drop
  the dependence on it from the notation here. 
  Define for $x\in \R$ and $\vep>0$
\begin{align}\label{eq:defvarphi0}
  \varphi^0_\vep(x):= \frac{\vep}{x^2+\vep^2}\qand
 \FT{\varphi}^0_\vep(x) := \frac{1}{2} \rme^{-\vep |x|}\, .
\end{align}
Then $\FT{\varphi}^0_\vep \in L^1\cap L^\infty$, and
it is straightforward to check that pointwise for {\em all\/} $x\in \R$ we have 
$\varphi^0_\vep(x) = \int_{-\infty}^\infty \!\rmd s\, 
\FT{\varphi}^0_\vep(s)\rme^{\ci s x}$.  We need to consider the maps 
\begin{align}
  \Lambda_{k_1,\alpha,\vep}[F] := \int_{(\T^d)^2} 
  \frac{\rmd k_2 \rmd k_3}{\pi} \,  F(k_2,k_3,k_1+k_2-k_3)
  \varphi^0_\vep(\omt((k_1,k_2,k_3),\sigma)-\alpha) 
\end{align}
which are all  obviously  continuous linear functionals on $C((\T^d)^3)$ with
\begin{align}
  \norm{\Lambda_{k_1,\alpha,\vep}} \le \int_{(\T^d)^2} \frac{\rmd k_2 \rmd
    k_3}{\pi} \, \varphi^0_\vep(\omt((k_1,k_2,k_3),\sigma)-\alpha) =:
  C_{k_1,\alpha,\vep}\, . 
\end{align}
In addition, Fubini's theorem yields for any $F\in C((\T^d)^3)$
\begin{align}
  \Lambda_{k_1,\alpha,\vep}[F] = \int_{-\infty}^\infty \!\rmd s\,
  \FT{\varphi}^0_\vep(s) 
\int_{(\T^d)^2} \frac{\rmd k_2 \rmd k_3}{\pi} \,  F(k_2,k_3,k_1+k_2-k_3)
\rme^{\ci s \omt((k_1,k_2,k_3),\sigma)-\ci s \alpha}\, . 
\end{align}

Assume first that $F$ is a trigonometric polynomial, i.e., that there is
$f:(\Z^d)^3 \to \C$ which has finite support and 
\begin{align}
F(k) = \sum_{x\in (\Z^d)^3} \rme^{-\ci 2\pi k\cdot x} f(x)\, .
\end{align}
Then
\begin{align}\label{eq:Fom}
&\int_{(\T^d)^2} \!\rmd k_2 \rmd k_3 \, F(k_2,k_3,k_1+k_2-k_3) \rme^{\ci s
  \omt((k_1,k_2,k_3),\sigma)-\ci s \alpha} 
\nonumber \\ & \quad
= \sum_{x\in (\Z^d)^3} \rme^{\ci s \sigma_1\omega(k_1)-\ci s \alpha} f(x)
\int_{(\T^d)^2} \!\rmd k_2 \rmd k_3 \, 
\rme^{\ci s (\sigma_2\omega(k_2)+\sigma_3\omega(k_3)+\sigma_4\omega(k_1+k_2-k_3))} 
\nonumber \\ & \qquad \times
\rme^{-\ci 2\pi (k_2\cdot (x_1+x_3)+k_3\cdot (x_2-x_3)+k_1\cdot x_3)}\, .
\end{align}
The remaining convolution integral can be expressed in terms of
$p_t(x)$, which is the inverse Fourier transform of
$k\mapsto\rme^{-\ci t \omega(k)}$.  Using Parseval's theorem to the
$k_3$-integral and then Fubini's theorem proves that (\ref{eq:Fom}) is equal
to  
\begin{align}\label{eq:convsplit}
& \sum_{x\in (\Z^d)^3} \rme^{\ci s \sigma_1\omega(k_1)-\ci s \alpha} f(x)
\sum_{y\in \Z^d}
\rme^{-\ci 2\pi k_1\cdot (y+x_2)}  p_{-\sigma_2 s}(-y-x_1-x_2)
p_{-\sigma_4 s}(y+x_2-x_3) p_{-\sigma_3 s}(y)\, .
\end{align}
Thus by H\"{o}lder's inequality, the property
$\norm{p_{-s}}_3 = \norm{p_{s}}_3$, and using (DR\ref{it:DRdisp}),
its absolute value is bounded by
$\norm{f}_1\norm{p_{s}}_3^3\in L^1(\rmd s)$.  Therefore, dominated convergence
can be used here to prove that, when $\vep\to 0^+$, 
\begin{align}\label{eq:trigconv}
  \Lambda_{k_1,\alpha,\vep}[F] \to 
\int_{-\infty}^\infty \!\rmd s \left(
\int_{(\T^d)^2} \frac{\rmd k_2 \rmd k_3}{2 \pi} \, \,  F(k_2,k_3,k_1+k_2-k_3)
\rme^{\ci s \omt((k_1,k_2,k_3),\sigma)-\ci s \alpha}\right)\, . 
\end{align}
In addition, dominated convergence also implies that the limit defines a
continuous function of $(k_1,\alpha)$, as both $F$ and $\omt$ are continuous. 
Applying the bounds for $f(x)= \cf(x=0)$, i.e., for $F(k)=1$, also proves that
there is $C'<\infty$, independent of $\alpha$, $k_1$ and $\vep$, such that  
$C_{k_1,\alpha,\vep} \le C'$.

Therefore, we have now proven that $(\Lambda_{k_1,\alpha,\vep})_{\vep>0}$,
with $\alpha$, $k_1$ fixed, form an equicontinuous family of linear
functionals on $C((\T^d)^3)$ which converges at any $F$ which is a
trigonometric polynomial.  Since trigonometric polynomials are dense in
$C((\T^d)^3)$ by the Stone-Weierstrass theorem, this implies that the family
converges then for any $F\in C((\T^d)^3)$ and the limit defines a unique
$\Lambda_{k_1,\alpha}\in C((\T^d)^3)^*$. 
(Such a statement is true for any equicontinuous sequence of linear mappings
between Banach spaces; for a more generic statement, see for instance
\cite[Exercise 2.14]{rudin:fa}.  By (\ref{eq:trigconv}), the limits obtained
from an arbitrary sequence $\vep_n>0$ with $\vep_n\to 0$ agree on a dense set,
and therefore they must all be equal.  We denote the common limit by
$\Lambda_{k_1,\alpha}$ and it follows that $\lim_{\vep\to 0^+}
\Lambda_{k_1,\alpha,\vep}[F] = \Lambda_{k_1,\alpha}[F]$ for all continuous
$F$, even though it could well happen that the integral on the right hand side
of (\ref{eq:trigconv}) is not absolutely convergent for some $F$.) 
Since obviously $\Lambda_{k_1,\alpha}[F]\ge 0$ for any $F\ge 0$ and $(\T^d)^3$
is compact, Riesz representation theorem implies that there is a complete
regular positive measure $\nu_{k_1,\alpha}$ such that its $\sigma$-algebra
contains Borel sets and 
$\Lambda_{k_1,\alpha}[F]= \int\! \rmd\nu_{k_1,\alpha} F$ 
for $F\in C((\T^d)^3)$.   
In addition, since $\totcoll(k_1,\alpha)=\int\!\rmd\nu_{k_1,\alpha} =
\lim_{n\to \infty} C_{k_1,\alpha,1/n}\le C'$, the measure is bounded and
$\totcoll$ is a continuous function of $k_1,\alpha$. 
We have thus proven the first part of the Proposition, and item
\ref{it:totcoll} in the second part. 

To prove item \ref{it:support}, fix $k_1$, denote $\nu:=\nu_{k_1,\alpha}$ and
assume that  
\begin{align}
\tilde{k}\not\in S:=\defset{k\in (\T^d)^{\set{2,3,4}}}{k_1+k_2-k_3-k_4=0,\
  \omn(k,\sigma)=\alpha} \, . 
\end{align}
Since $\omega$ is continuous, $S$ is closed, and thus there is $\delta>0$ such
that $B(\tilde{k},2\delta)\subset S^c$.  We can then choose a sequence of
continuous approximants $\phi_n\in C((\T^d)^3)$ of the characteristic function
of the open ball $B(\tilde{k},\delta/2)$ such that the sequence converges
pointwise, has values in $[0,1]$ and is equal to $0$ for any $k'$ with
$|k'-\tilde{k}|\ge \delta$.  By dominated convergence, then
$\nu(B(\tilde{k},\delta/2))= \lim_{n\to\infty} \int\!\nu(\rmd^3 k')
\phi_n(k')$.  Here
\begin{align}\label{eq:phinlim}
  \int \!\nu(\rmd^3 k')\, \phi_n(k')= \lim_{\vep\to 0^+} \int_{(\T^d)^2}
  \frac{\rmd k_2 \rmd k_3}{\pi} \,  \phi_n(k_2,k_3,k_1+k_2-k_3)
  \frac{\vep}{\vep^2+(\omt((k_1,k_2,k_3),\sigma)-\alpha)^2}\, . 
\end{align}
Let $S_1$ denote the support of the function
$(k_2,k_3)\mapsto\phi_n(k_2,k_3,k_1+k_2-k_3)$.  If $S_1$ is empty, then the
map is zero and $\int \!\nu(\rmd^3 k')\, \phi_n(k')=0$.  Assume thus $S_1\ne
\emptyset$. 
If $(k_2,k_3)\in S_1$, then $k':=(k_2,k_3,k_1+k_2-k_3)\in \supp \phi_n$, and
thus $k'\not \in S$ and therefore
$\omt((k_1,k_2,k_3),\sigma)=\omn((k_1,k'),\sigma)\ne \alpha$.  Since $S_1$ is
compact and $(k_2,k_3)\mapsto |\omt((k_1,k_2,k_3),\sigma)-\alpha|$ is
continuous, there is a minimum $c_0$ of the map on $S_1$, and $c_0>0$.  But
then the integrand in (\ref{eq:phinlim}) is bounded by $\vep /c_0^2$, which
implies that $\int \!\nu(\rmd^3 k')\, \phi_n(k')=0$.  Therefore,
$\nu(B(\tilde{k},\delta/2))=0$ and thus  
$\tilde{k}$ is not in the support of $\nu_{k_1,\alpha}$.  This proves item
\ref{it:support}. 

Item \ref{it:fomint} is then a consequence of the fact that, since the support
of $\nu_{k_1,\alpha}$ is part of the Borel set $S$, necessarily for any
continuous function $G$ one has 
$\int \rmd\nu_{k_1,\alpha} G= \int_S \rmd\nu_{k_1,\alpha} G $.   
To prove item \ref{it:contin}, 
consider the map $I:(k_1,\alpha) \mapsto \int\!\nu_{k_1,\alpha,\sigma}(\rmd^3
k')\, G(k_1,k',\alpha)$ for 
some $G\in C((\T^d)^4\times\R)$.  Set $r:=1+\sup_{k}|\omt(k,\sigma)|$, which
is finite since $\omt$ is continuous, and choose a function $f\in C(\R)$ such
that $0\le f\le 1$ and $f(x)=1$ for all $|x|\le r-1$ and $f(x)=0$ for all
$|x|\ge r$.  Define $\bar{G}(k,\alpha):= f(\alpha)G(k,\alpha)$. 
By the Stone-Weierstrass theorem, we can find a sequence $\psi_n(k,\alpha)$
such that it converges uniformly to $\bar{G}$ on $(\T^d)^4\times [-r,r]$ and
each $\psi_n$ is a linear combination of functions of the form $\rme^{-\ci \pi
  x_0\cdot\alpha/r}  \rme^{-\ci 2\pi \sum_{i=1}^4 k_i \cdot x_{i}}$ with
$x_j\in \Z^d$, $j=0,1,\ldots,4$.
Clearly, $I(k_1,\alpha)=\int\!\nu_{k_1,\alpha,\sigma}(\rmd^3 k')\,
\bar{G}(k_1,k',\alpha)$ if $|\alpha|\le r-1$, but this in fact holds for all
$\alpha$ since for $|\alpha|>r-1$ we have
$\int\!\nu_{k_1,\alpha,\sigma}(\rmd^3 k')\,
\bar{G}(k_1,k',\alpha)=0=I(k_1,\alpha)$ by item \ref{it:support}.   
By the previous results, each of the functions $(k_1,\alpha) \mapsto
\int\!\nu_{k_1,\alpha,\sigma}(\rmd^3 k')\, \psi_n(k_1,k',\alpha)$ is
continuous and hence so is $I$ on $(\T^d)^4\times [-r,r]$.  However,
$I(k_1,\alpha)=0$ for all $|\alpha|\ge r$, and thus $I$ is everywhere
continuous.  

To prove item \ref{it:alphaprop}, suppose $F\in C((\T^d)^3\times \R)$ and
$m>\sup_k |\omt(k,\sigma)|$, and fix $k_1\in \T^d$.  
By the previous results, we can then apply dominated convergence and Fubini's
theorem and conclude that  
\begin{align}
 & \int_{-m}^m\!\rmd \alpha\left(\int\!\nu_{k_1,\alpha,\sigma}(\rmd^3 k')\,
   F(k',\alpha)\right) 
\nonumber \\ & \quad
 = \lim_{\vep\to 0^+} \int_{(\T^d)^2} \frac{\rmd k_2 \rmd k_3}{\pi}
 \int_{-m}^m\!\rmd \alpha \,  F(k_2,k_3,k_1+k_2-k_3,\alpha) 
  \frac{\vep}{\vep^2+(\omt((k_1,k_2,k_3),\sigma)-\alpha)^2} \, .
\end{align}
We change the integration variable $\alpha$ to $s=(\omt-\alpha)/\vep$, where
$\omt:=\omt((k_1,k_2,k_3),\sigma)$.  The resulting integral is 
$\int_{(\omt-m)/\vep}^{(\omt+m)/\vep}\!\rmd s \, \frac{1}{1+s^2}
F(k_2,k_3,k_1+k_2-k_3,\omt-\vep s)$ which is uniformly bounded and, since
$|\omt|<m$, it approaches 
$\pi F(k_2,k_3,k_1+k_2-k_3,\omt)$ when $\vep\to 0^+$.  Dominated convergence
can thus be applied to conclude that 
$\int_{-m}^m\!\rmd \alpha\left(\int\!\nu_{k_1,\alpha,\sigma}(\rmd^3 k')\,
  F(k',\alpha)\right)= \int_{(\T^d)^2}\!\rmd k_2 \rmd k_3 \,
F(k_2,k_3,k_1+k_2-k_3,\omt((k_1,k_2,k_3),\sigma))$. This concludes the proof
of the Proposition. 
\end{proof}

In particular, the result thus implies that the (up to now formal)
$\delta$-functions in the definition of $\Ccoll$ correspond to a well-defined
measure.  From now on we denote it by $\nu_{k_1}$, $k_1\in \T^d$, i.e., we set
$\nu_{k_1}:=\nu_{k_1,0,(1,1,-1,-1)}$.  The following results explain how we
use it to define $\Ccoll[W]$ if $W\in \Lphys$.  This somewhat indirect
construction appears necessary since $k_1\mapsto \int\!
\nu_{k_1,0,\sigma}(\rmd^3 k) F(k)$ might not even map sets of Lebesgue measure
zero on $(\T^d)^3$ into sets of measure zero on $\T^d$.  Nevertheless,
Corollary \ref{th:l2colloper} shows that the dissipative part of the collision
operator, which only contains products of $L^\infty(\T^d)$-functions, can be
meaningfully extended into a map from $(L^\infty(\T^d))^3$ to
$L^\infty(\T^d)$. 

We begin with a result which shows that ``Fubini's theorem'' works despite the
$\delta$-functions if the integrand is continuous. 
\begin{corollary}\label{th:swapkcoroll}
Suppose that $\omega:\T^d\to \R$ satisfies (DR\ref{it:DR1}) and
(DR\ref{it:DRdisp}) and suppose $\alpha\in \R$ and $\sigma\in \set{-1,1}^4$
are given. 
Let $\nu_{k,\alpha,\sigma}$ denote measures satisfying
Proposition \ref{th:defC1}, and set
$\sigmapn{2}:=(\sigma_2,\sigma_1,\sigma_3,\sigma_4)$,
$\sigmapn{3}:=(\sigma_3,\sigma_2,\sigma_1,\sigma_4)$ 
and $\sigmapn{4}:=(\sigma_4,\sigma_2,\sigma_1,\sigma_3)$.
If $G:(\T^d)^4\to \C$ is continuous then
\begin{align}\label{eq:Gintiter}
&  \int_{\T^d}\!\rmd k_1 \left( \int\!\nu_{k_1,\alpha,\sigma}(\rmd^3 k')\,
  G(k_1,k'_1,k'_2,k'_3) \right) 
\nonumber \\ & \quad
=  \int_{\T^d}\!\rmd k_2 \left( \int\!\nu_{k_2,\alpha,\sigmapn{2}}(\rmd^3
  k')\, G(k'_1,k_2,k'_2,k'_3)\right) \, . 
\nonumber \\ & \quad
=  \int_{\T^d}\!\rmd k_3 \left( \int\!\nu_{k_3,\alpha,\sigmapn{3}}(\rmd^3
  k')\, G(k'_2,-k'_1,k_3,-k'_3)\right) \, . 
\nonumber \\ & \quad
=  \int_{\T^d}\!\rmd k_4 \left( \int\!\nu_{k_4,\alpha,\sigmapn{4}}(\rmd^3
  k')\, G(k'_2,-k'_1,-k'_3,k_4)\right) \, . 
\end{align}
\end{corollary}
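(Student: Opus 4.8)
The plan is to reduce the three identities in (\ref{eq:Gintiter}) to the ordinary multidimensional Fubini theorem applied to the \emph{regularized} integrals with $\vep>0$, and then to let $\vep\to 0^+$; throughout $\alpha\in\R$ and the permutations $\sigma,\sigmapn{2},\sigmapn{3},\sigmapn{4}$ are kept fixed. First one checks that the four outer integrals over $\T^d$ are well defined: applying item \ref{it:contin} of Proposition \ref{th:defC1} to $G$, viewed as constant in its last (``$\alpha$'') slot, shows that each map under an outer $\int_{\T^d}$ sign --- e.g.\ $k_1\mapsto\int\!\nu_{k_1,\alpha,\sigma}(\rmd^3 k')\,G(k_1,k'_1,k'_2,k'_3)$ --- is continuous on the compact set $\T^d$, hence bounded and integrable; using $\omega(-k)=\omega(k)$ from (DR\ref{it:DR1}) the same applies to the rearranged right-hand integrands, since compositions such as $(k'_1,k'_2,k'_3)\mapsto G(k'_2,-k'_1,k_3,-k'_3)$ are again continuous.

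Next I would insert, in each of the four quantities, the defining limit of the relevant measure. With $\varphi^0_\vep(x)=\vep/(x^2+\vep^2)$ as in the proof of Proposition \ref{th:defC1}, the left-hand integrand of (\ref{eq:Gintiter}) is
\begin{align*}
\int\!\nu_{k_1,\alpha,\sigma}(\rmd^3 k')\,G(k_1,k'_1,k'_2,k'_3)=\lim_{\vep\to 0^+}\int_{(\T^d)^2}\frac{\rmd k'_1\,\rmd k'_2}{\pi}\,G(k_1,k'_1,k'_2,k_1+k'_1-k'_2)\,\varphi^0_\vep\bigl(\omt((k_1,k'_1,k'_2),\sigma)-\alpha\bigr)\,.
\end{align*}
Since $|G|\le\norm{G}_\infty$ and, by the uniform bound $C_{k_1,\alpha,\vep}\le C'$ established in that proof, the $\vep$-approximant is dominated by the constant $\norm{G}_\infty C'$, dominated convergence allows the interchange of $\lim_{\vep\to 0^+}$ with $\int_{\T^d}\rmd k_1$. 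Hence the left-hand side of (\ref{eq:Gintiter}) equals $\lim_{\vep\to 0^+}$ of a triple Lebesgue integral over $(\T^d)^3$ of a continuous (hence bounded, measurable) integrand, and the same computation rewrites each of the other three quantities as $\lim_{\vep\to 0^+}$ of a triple integral of a continuous integrand over $(\T^d)^3$.

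It then remains to identify these four regularized triple integrals for each fixed $\vep>0$, where the ordinary Fubini theorem applies since the integrands are continuous on $(\T^d)^3$. For the quantity built from $\nu_{k_2,\alpha,\sigmapn{2}}$ one uses only the identity $\omt((k_1,k_2,k_3),\sigma)=\omt((k_2,k_1,k_3),\sigmapn{2})$ from (\ref{eq:defomsig}) together with the symmetry of $k_1+k_2-k_3$ in $k_1,k_2$, so the two triple integrals differ merely in which variable is written outermost. For the quantities built from $\nu_{k_3,\alpha,\sigmapn{3}}$ and $\nu_{k_4,\alpha,\sigmapn{4}}$ one writes out the defining limit of the measure, performs the measure-preserving affine substitution $(k_1,k_2)=(k'_2,-k'_1)$ on $(\T^d)^2$ (and, in the $\nu_{k_4}$ case, the further measure-preserving substitution $k_3=k_1+k_2-k_4$), and uses $\omega(-k)=\omega(k)$ to verify that $\omt((k_3,k'_1,k'_2),\sigmapn{3})$ (resp.\ $\omt((k_4,k'_1,k'_2),\sigmapn{4})$) becomes exactly $\omt((k_1,k_2,k_3),\sigma)$ while the permuted argument of $G$ becomes $(k_1,k_2,k_3,k_1+k_2-k_3)$, the constrained fourth slot being reproduced correctly. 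Letting $\vep\to 0^+$ in the four resulting expressions then yields the three equalities in (\ref{eq:Gintiter}).

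The one genuinely delicate point is the interchange of $\lim_{\vep\to 0^+}$ with the outer $\int_{\T^d}$ in the second step, which is exactly why the bound $C_{k_1,\alpha,\vep}\le C'$, uniform in $k_1,\alpha,\vep$, from the proof of Proposition \ref{th:defC1} is indispensable here; the rest is routine but error-prone bookkeeping --- keeping track of which integration variable realizes which slot of each measure $\nu_{\cdot,\alpha,\cdot}$, and of how the constraint $k_1+k_2-k_3-k_4=0$ and the parity $\omega(-k)=\omega(k)$ behave under the substitutions.
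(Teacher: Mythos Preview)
Your proposal is correct and follows essentially the same route as the paper's proof: both reduce the identities to the regularized integrals via dominated convergence (leaning on the uniform bound $C_{k_1,\alpha,\vep}\le C'$ from the proof of Proposition \ref{th:defC1}), apply Fubini at fixed $\vep>0$, and verify the required change of variables using $\omega(-k)=\omega(k)$. The paper only spells out the $\nu_{k_4}$ case explicitly (working from the $k_4$ side back to the $k_1$ side) and declares the other two analogous, whereas you sketch all three, but the substance is the same.
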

\begin{proof}
Assume $G$ to be continuous. By Proposition \ref{th:defC1}, all four of the
above integrals are over continuous functions and hence well-defined, and we
can apply dominated convergence and Fubini's theorem to conclude that 
\begin{align}\label{eq:beginI4}
& \int_{\T^d}\!\rmd k_4 \left( \int\!\nu_{k_4,\alpha,\sigmapn{4}}(\rmd^3 k')\,
  G(k'_2,-k'_1,-k'_3,k_4)\right)  
\nonumber \\ & \quad
= \lim_{\vep\to 0^+} \int_{(\T^d)^3} \frac{\rmd k_4 \rmd k'_1 \rmd k'_2}{\pi}
G(k'_2,-k'_1,k'_2-k'_1-k_4,k_4) 
  \frac{\vep}{\vep^2+(\omt((k_4,k'_1,k'_2),\sigmapn{4})-\alpha)^2}
\, .
\end{align}
By Fubini's theorem, the value of the integral on the right hand side can be
obtained also by iterating the three integrals in an arbitrary order.  Choose
to do $k_4$ first and change there the integration variable to
$k_3=k'_2-k'_1-k_4$.  Then $k_4=k'_2-k'_1-k_3$ and, since
$\omega(-k)=\omega(k)$, also $\omega(k_4+k'_1-k'_2)=\omega(k_3)$.  Do next
$k'_1$, and change the integration variable to $k_2=-k'_1$.  Rename the last
integration variable to $k_1$.  Then  
$\omt((k_4,k'_1,k'_2),\sigmapn{4})=\omt((k_1,k_2,k_3),\sigma)$, and we can
conclude that the integral is equal to  
\begin{align}
& \int_{\T^d}\! \rmd k_1 \left( \int_{\T^d}\! \rmd k_2 \left( \int_{\T^d}
    \frac{\rmd k_3}{\pi} G(k_1,k_2,k_3,k_1+k_2-k_3) 
  \frac{\vep}{\vep^2+(\omt((k_1,k_2,k_3),\sigma)-\alpha)^2} \right)  \right) 
\, .
\end{align}
Then, by applying Fubini's and dominated convergence theorem, as well as
Proposition \ref{th:defC1}, we find that (\ref{eq:beginI4}) is equal to  
$\int_{\T^d}\!\rmd k_1 \left( \int\!\nu_{k_1,\alpha,\sigma}(\rmd^3 k')\,
  G(k_1,k'_1,k'_2,k'_3) \right)$.    This proves the equality of the first and
last of the expressions in (\ref{eq:Gintiter}).  The proofs that the other two
expressions are equal to the first one are very similar, only simpler, and we
skip them here.

\end{proof}

\begin{lemma}\label{th:l2colloper0}
   Assume that $\omega$ satisfies (DR\ref{it:DR1})  and (DR\ref{it:DRdisp}).  
Consider arbitrary $w_i \in L^\infty(\T^d)$, $i=1,2,3$, and sequences
$w_{i,n}\in C(\T^d)$, $n\in \N$, such that $w_{i,n}\to w_i$ in $L^2$-norm  
and $|w_{i,n}(k)|\le \norm{w_i}_\infty$ for all $i,n,k$.  Then
there is $\mathcal{C}_0\in L^2(\T^d)$ such that the sequence of continuous
functions  
$k_1\mapsto \int_{(\T^d)^3} \nu_{k_1}(\rmd^3 k') \prod_{i=1}^3 w_{i,n}(k'_i)$
converges in $L^2$ to $\mathcal{C}_0$ as $n\to \infty$.    
In addition, there is a constant $C$, which depends only on $\omega$, such
that
\begin{align}\label{eq:C0infbound}
  &\norm{\mathcal{C}_0}_{L^\infty} \le C \prod_{i=1}^3 \norm{w_i}_{L^\infty}\, ,
\end{align}
and, if $w'_i \in L^\infty(\T^d)$, $i=1,2,3$, and $w'_{i,n}\in C(\T^d)$, $n\in
\N$, are another collection of functions as above, and $\mathcal{C}'_0$
denotes the corresponding limit, then 
\begin{align}\label{eq:C0L2cont}
  &\norm{\mathcal{C}_0-\mathcal{C}'_0}_{L^2} \le C m^2 \sum_{i=1}^3
  \norm{w_i-w'_i}_{L^2}\, , 
\end{align}
where $m:=\max_i (\norm{w_i}_\infty,\norm{w'_i}_\infty)$.

Therefore, we can identify $\mathcal{C}_0$ with a unique map
$(L^\infty(\T^d))^3 \to L^\infty(\T^d)$.  This map satisfies all of the
following properties: 
\begin{enumerate}
  \setlength{\itemsep}{0pt}
  \item It is linear in each of the three arguments and commutes with complex conjugation,
    $\mathcal{C}_0[w_1^*,w_2^*,w_3^*] = \mathcal{C}_0[w_1,w_2,w_3]^*$. 
  \item The bounds (\ref{eq:C0infbound}) and (\ref{eq:C0L2cont}) hold for
    $\mathcal{C}_0=\mathcal{C}_0[w_1,w_2,w_3]$ 
    and $\mathcal{C}'_0=\mathcal{C}_0[w'_1,w'_2,w'_3]$.
  \item If  $w_i\in C(\T^d)$ for all $i$, then $\mathcal{C}_0[w_1,w_2,w_3]\in
    C(\T^d)$ and for every $k_1\in \T^d$ 
\begin{align}\label{eq:C0contdef}
  \mathcal{C}_0[w_1,w_2,w_3](k_1) = \int_{(\T^d)^3} \nu_{k_1}(\rmd^3 k')
  \prod_{i=1}^3 w_{i}(k'_i)\, . 
\end{align}  
\end{enumerate}
\end{lemma}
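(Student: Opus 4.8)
The plan is to prove the quantitative bounds (\ref{eq:C0infbound}) and (\ref{eq:C0L2cont}) first for \emph{continuous} $w_i$, with $\mathcal{C}_0$ given directly by the measure as in (\ref{eq:C0contdef}), and then to obtain the remaining assertions by routine functional analysis. For continuous $w_i$ the map $k_1\mapsto\int\nu_{k_1}(\rmd^3 k')\prod_i w_i(k_i')$ is well defined and continuous by item~\ref{it:contin} of Proposition~\ref{th:defC1} (take $G(k_1,k',\alpha):=\prod_i w_i(k_i')$ and $\alpha=0$), and $|\mathcal{C}_0[w_1,w_2,w_3](k_1)|\le\totcoll(k_1,0)\prod_i\norm{w_i}_\infty\le C'\prod_i\norm{w_i}_\infty$, where $C'<\infty$ depends only on $\omega$ and bounds $\totcoll(k_1,0)$ uniformly in $k_1$ (this uniform bound was established in the proof of Proposition~\ref{th:defC1}); this is (\ref{eq:C0infbound}) in the continuous case.

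The core of the argument is the estimate $\norm{\mathcal{C}_0[g,h_1,h_2]}_{L^2}\le C\norm{g}_{L^2}\norm{h_1}_\infty\norm{h_2}_\infty$ for continuous $g,h_1,h_2$, with $C$ depending only on $\omega$; by trilinearity, $\mathcal{C}_0[w_1,w_2,w_3]-\mathcal{C}_0[w_1',w_2',w_3']$ telescopes into three terms of this shape (one slot carrying $w_j-w_j'$, the other two carrying functions of norm $\le m$), so this estimate yields (\ref{eq:C0L2cont}) for continuous $w_i$, and the three possible placements of $g$ are handled analogously. To prove it I would use that, by Proposition~\ref{th:defC1}, $\mathcal{C}_0[g,h_1,h_2](k_1)$ equals $\lim_{\vep\to0^+}T_\vep(k_1)$, where $T_\vep$ is the regularized Lebesgue integral from (\ref{eq:numaindef}) (with $\sigma=(1,1,-1,-1)$, $\alpha=0$, and integrand the appropriate permutation of $g,h_1,h_2$); since this limit holds pointwise in $k_1$ and $|T_\vep(k_1)|\le C'\norm{g}_\infty\norm{h_1}_\infty\norm{h_2}_\infty$, dominated convergence gives $T_\vep\to\mathcal{C}_0[g,h_1,h_2]$ in $L^2$, so it suffices to bound $\norm{T_\vep}_{L^2}$ \emph{uniformly in} $\vep$ by $C\norm{g}_{L^2}\norm{h_1}_\infty\norm{h_2}_\infty$ and then pass to the limit. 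Writing $\varphi^0_\vep(x)=\int\rmd s\,\FT{\varphi}^0_\vep(s)\rme^{\ci sx}$ and expanding $g,h_1,h_2$ in trigonometric polynomials as in (\ref{eq:Fom})--(\ref{eq:convsplit}), one carries out the $k_2,k_3$ integrations (and one Parseval step) to express $T_\vep$ through the free propagator $p_s$; then, pairing $T_\vep$ against a test function $\phi\in C(\T^d)$ and using Plancherel in $k_1$, the quantity $|\langle\phi,T_\vep\rangle|$ is bounded, via Hölder's inequality in the lattice variable together with $\norm{p_{-s}}_3=\norm{p_s}_3$, by a constant multiple of $\norm{\phi}_{L^2}\norm{g}_{L^2}\norm{h_1}_\infty\norm{h_2}_\infty\int\rmd s\,|\FT{\varphi}^0_\vep(s)|\norm{p_s}_3^3$. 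Since $|\FT{\varphi}^0_\vep(s)|=\tfrac12\rme^{-\vep|s|}\le\tfrac12$ and $\int\rmd s\,\norm{p_s}_3^3<\infty$ by (DR\ref{it:DRdisp}), this is bounded uniformly in $\vep$, and taking the supremum over $\norm{\phi}_{L^2}\le1$ gives the required bound on $\norm{T_\vep}_{L^2}$. I expect this estimate to be the main obstacle: one must organize the Hölder splitting so that exactly three copies of $\norm{p_s}_3$ appear (matching the integrability assumed in (DR\ref{it:DRdisp})) while the ``slow'' factor $g$ stays in $L^2$ and the two ``fast'' factors in $L^\infty$, and so that the $s$-integration does not reintroduce the divergent factor $\int|\FT{\varphi}^0_\vep|=1/\vep$.

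With (\ref{eq:C0infbound}) and (\ref{eq:C0L2cont}) available for continuous functions, the rest follows by soft arguments. Given $w_i\in L^\infty(\T^d)$ and admissible continuous approximants $w_{i,n}$ (which exist, e.g.\ by mollification with a nonnegative kernel, which preserves the sup-bound and produces continuous functions), (\ref{eq:C0L2cont}) applied to $(w_{i,n})_i$ and $(w_{i,m})_i$ shows that the continuous functions $k_1\mapsto\int\nu_{k_1}(\rmd^3 k')\prod_i w_{i,n}(k_i')$ form an $L^2$-Cauchy sequence, hence converge to some $\mathcal{C}_0\in L^2(\T^d)$; applying (\ref{eq:C0L2cont}) to two approximating sequences for the same $(w_i)$ shows the limit is independent of the choice, so $\mathcal{C}_0=\mathcal{C}_0[w_1,w_2,w_3]$ is well defined. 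Passing to the $L^2$-limit in the pointwise bound (\ref{eq:C0infbound}) for $w_{i,n}$, and using that an $L^2$-limit of functions with $\norm{\cdot}_\infty\le M$ again satisfies $\norm{\cdot}_\infty\le M$ (pass to an a.e.-convergent subsequence), gives (\ref{eq:C0infbound}) for $\mathcal{C}_0$, so $\mathcal{C}_0$ indeed maps into $L^\infty(\T^d)$; likewise (\ref{eq:C0L2cont}) for the approximants passes to the limit. Trilinearity and the conjugation identity $\mathcal{C}_0[w_1^*,w_2^*,w_3^*]=\mathcal{C}_0[w_1,w_2,w_3]^*$ hold for the continuous representatives (the measure $\nu_{k_1}$ is positive, hence real) and survive the $L^2$-limit; and for continuous $w_i$ the constant sequence $w_{i,n}:=w_i$ is admissible and its $L^2$-limit is the continuous function on the right-hand side of (\ref{eq:C0contdef}), which is item~3. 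This completes the argument.
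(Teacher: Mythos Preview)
Your overall architecture is correct and matches the paper's: establish (\ref{eq:C0infbound}) and (\ref{eq:C0L2cont}) for continuous inputs, then extend by density. The soft part---Cauchy sequences, independence of the limit from the approximating sequence, passage of the bounds through the $L^2$-limit, and items 1--3---is fine and essentially identical to what the paper does (the paper uses Lusin's theorem where you suggest mollification, which is immaterial).

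The gap is exactly at the step you yourself flag as the main obstacle. The propagator route you sketch for $\norm{\mathcal{C}_0[g,h_1,h_2]}_{L^2}\le C\norm{g}_{L^2}\norm{h_1}_\infty\norm{h_2}_\infty$ does not close under (DR\ref{it:DR1})--(DR\ref{it:DRdisp}) alone. After writing $\varphi^0_\vep$ via $\FT{\varphi}^0_\vep$ and following (\ref{eq:Fom})--(\ref{eq:convsplit}), the lattice sum carries three propagators convolved with $\hat g,\hat h_1,\hat h_2$; pairing with $\phi$ and doing Plancherel in $k_1$ produces a fourth. To peel off a factor $\norm{p_s}_3$ via Young's inequality the companion sequence must sit in $\ell^{3/2}$ (or better), and neither $\|g\|_{L^2}$ nor $\|h_j\|_\infty$ delivers that on the Fourier side: you only get $\|\hat g\|_{\ell^2}$ and $\|\hat h_j\|_{\ell^\infty}$. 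Any H\"older split that keeps $g$ in $L^2$ and $h_j$ in $L^\infty$ therefore discards all propagator decay and leaves you with $\int|\FT{\varphi}^0_\vep|=\vep^{-1}$. The $\ell^1$ bound that makes (\ref{eq:convsplit}) work in the proof of Proposition~\ref{th:defC1} is $\|f\|_1$ for a trigonometric polynomial, precisely the quantity you cannot control here.

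The paper's argument bypasses Fourier entirely. After telescoping as you do, it bounds each term pointwise by $m^2\int\nu_{k_1}(\rmd^3 k')\,|w_{i,n}(k'_i)-w'_{i,n}(k'_i)|$, applies Cauchy--Schwarz against the finite measure $\nu_{k_1}$ (total mass $\le C$), and then uses Corollary~\ref{th:swapkcoroll} to \emph{swap} the outer $k_1$-integral with the slot $k'_i$, turning the expression into $C\int\!\rmd k_1\,|\Delta_i(k_1)|^2\,\totcoll(k_1,0;\sigmapn{i+1})\le C^2\|\Delta_i\|_{L^2}^2$. This uses only the uniform finiteness of the total collision cross-section, not its oscillatory representation. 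If you prefer to work with your regularized $T_\vep$, the same step is a Schur test on the kernel $K_\vep(k_1,k_2)=\int\!\rmd k_3\,\varphi^0_\vep(\omt)$: its row and column integrals coincide by the $k_1\leftrightarrow k_2$ symmetry of $\omt$ and are bounded by $\pi C'$ uniformly in $\vep$. Replace your second paragraph by this argument and the proof is complete.
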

\begin{proof}
Suppose $w_{i,n}$, $w'_{i,n}$ satisfy the assumptions of the Lemma.  Since
they are continuous, setting  
$g_{i,n}(k_1):=\int_{(\T^d)^3} \nu_{k_1}(\rmd^3 k') \prod_{i=1}^3
w_{i,n}(k'_i)$ and  
$g'_{i,n}(k_1):=\int_{(\T^d)^3} \nu_{k_1}(\rmd^3 k') \prod_{i=1}^3
w'_{i,n}(k'_i)$  yields continuous functions on $\T^d$, by Proposition
\ref{th:defC1}. 
Denote $C:= \sup_{k_1,\sigma'} \int \rmd \nu_{k_1,0,\sigma'}$, which is finite
by Proposition \ref{th:defC1}, and set $m:=\max_i
(\norm{w_i}_\infty,\norm{w'_i}_\infty)<\infty$.  Then we have the obvious
bounds 
$|g_{i,n}(k_1)|\le C \prod_{i=1}^3 \norm{w_{i}}_\infty$ and
$|g'_{i,n}(k_1)|\le C \prod_{i=1}^3 \norm{w'_{i}}_\infty$, 
and, by telescoping, we also find that for any $k_1$ 
\begin{align}
  |g_{i,n}(k_1)-g'_{i,n}(k_1)| \le m^2 \sum_{i=1}^3 
 \int_{(\T^d)^3} \nu_{k_1}(\rmd^3 k')   |w_{i,n}(k'_i)-w'_{i,n}(k'_i)|\, .
\end{align}
H\"older's  inequality and $\int\rmd k_1=1$ imply that $\int \rmd k_1
\left(\int \nu_{k_1}(\rmd^3 k')   |w_{i,n}(k'_i)-w'_{i,n}(k'_i)|\right)^2 \le
C \int \rmd k_1 \int \nu_{k_1}(\rmd^3 k')   |w_{i,n}(k'_i)-w'_{i,n}(k'_i)|^2$.
By Corollary \ref{th:swapkcoroll}, the last expression is equal to $C \int
\rmd k_1 |w_{i,n}(k_1)-w'_{i,n}(k_1)|^2 \int \nu_{k_1,0,\sigmapn{i+1}}(\rmd^3
k') \le C^2 \norm{w_{i,n}-w'_{i,n}}^2_{L^2}$.  
Therefore,
\begin{align}\label{eq:deltagbound}
  \norm{g_{i,n}-g'_{i,n}}_{L^2} \le C m^2 \sum_{i=1}^3
  \norm{w_{i,n}-w'_{i,n}}_{L^2} \,. 
\end{align}

Consider then some $n_0\in \N$, and define $w''_{i,n}:= w_{i,n+n_0}$.  Then
$w''_{i,n}\to w_i$ in $L^2$ and    
we can apply the above results to the sequences $w''_{i,n}$.  Since then
$g''_{i,n}=g_{i,n+n_0}$ and $w_{i,n}-w''_{i,n}\to 0$, the bound in
(\ref{eq:deltagbound}) proves that $g_{i,n}$ is a Cauchy sequence in $L^2$.
Hence the $L^2$-limit $\mathcal{C}_0$ exists and there is a subsequence
$(n_\ell)$ such that 
$\mathcal{C}_0(k):= \lim_{\ell\to \infty} g_{i,n_\ell}(k)$ for Lebesgue almost
every $k\in \T^d$.  At every such point we thus have  
$|\mathcal{C}_0(k)|\le C \prod_{i=1}^3 \norm{w_{i}}_\infty$.  Hence
(\ref{eq:C0infbound}) holds, and we have proven the first part of the Lemma. 

These results can also be applied to the sequences $w'_{i,n}$, and the
corresponding limit is given by  
$\mathcal{C}'_0=\lim_{n} g'_{i,n}$.  Thus taking $n\to \infty$ in
(\ref{eq:deltagbound}) proves (\ref{eq:C0L2cont}). 

For the final claim, suppose that $w_i \in L^\infty(\T^d)$, $i=1,2,3$, are
arbitrary.  For each $w_i$, we can choose a representative such that
$|w_i(k)|\le \norm{w_i}_\infty$ for \defem{every} $k$.  Then Lusin's theorem
implies that there are sequences $v_{i,n}\in C(\T^d)$ such that $|v_{i,n}|\le
\norm{w_i}_\infty$ and $v_{i,n}(k)\to w_i(k)$ almost everywhere.  By dominated
convergence, then $v_{i,n}\to w_i$ in $L^2$. 
Thus we can define $\mathcal{C}_0[w]\in L^\infty(\T^d)$, $w=(w_1,w_2,w_3)$, by
using the sequences $(v_{i,n})$ in the above.  Suppose $w_{i,n}$ is some other
sequence as in the Lemma, and let $\mathcal{C}_0$ denote the corresponding
limit.  By (\ref{eq:C0L2cont}), then
$\norm{\mathcal{C}_0-\mathcal{C}_0[w]}_{L^2}=0$, and hence
$\mathcal{C}_0(k)=\mathcal{C}_0[w](k)$ almost everywhere.  Therefore,
$\mathcal{C}_0[w]$ does not depend on the choice of the approximating
sequence.  In particular, then for continuous functions (\ref{eq:C0contdef})
holds, and proving linearity and commutation with conjugation is
straightforward. 
This finishes the proof of the Lemma.
\end{proof}

\begin{corollary}\label{th:l2colloper}
   Assume that $\omega$ satisfies (DR\ref{it:DR1})  and (DR\ref{it:DRdisp})
   and define $\mathcal{C}_0$ as in Lemma \ref{th:l2colloper0}.  For any 
   $W\in\Lphys$ 
   and $i,i'\in \set{1,2}$ we define $\Ccoll[W]_{ii'}\in L^\infty(\T^d)$ by
   using linearity and $\mathcal{C}_0$: we set  
   \begin{align}\label{eq:collop2b}
& \Ccoll [W]_{ii'} = \pi \sum_{j\in \set{\pm 1}^4} \Bigl[ \tilde{W}_{ij_1} 
\left(\delta_{j_2 i'} \mathcal{C}_0[\tilde{W}_{j_3j_4},W_{j_1j_2},W_{j_4j_3}] 
 -\delta_{j_3 j_4}
 \mathcal{C}_0[\tilde{W}_{j_2j_3},W_{j_1j_2},W_{j_4i'}]\right) 
\nonumber \\  & \quad 
+  \tilde{W}_{j_1 i'} 
\left(\delta_{i j_2} \mathcal{C}_0[\tilde{W}_{j_4j_3},W_{j_2j_1},W_{j_3j_4}]
 -\delta_{j_4 j_3} \mathcal{C}_0[\tilde{W}_{j_3j_2},W_{j_2j_1},W_{ij_4}]\right)
\nonumber \\  & \quad
-W_{ij_1} 
\left(\delta_{j_2 i'}
  \mathcal{C}_0[W_{j_3j_4},\tilde{W}_{j_1j_2},\tilde{W}_{j_4j_3}] 
 -\delta_{j_3 j_4}
 \mathcal{C}_0[W_{j_2j_3},\tilde{W}_{j_1j_2},\tilde{W}_{j_4i'}]\right) 
\nonumber \\  & \quad 
-W_{j_1 i'} 
\left(\delta_{i j_2}
  \mathcal{C}_0[W_{j_4j_3},\tilde{W}_{j_2j_1},\tilde{W}_{j_3j_4}] 
 -\delta_{j_4 j_3}
 \mathcal{C}_0[W_{j_3j_2},\tilde{W}_{j_2j_1},\tilde{W}_{ij_4}]\right) 
\Bigr] \, ,
\end{align}
where $\delta_{ab}:=\cf(a=b)$ denotes the Kronecker delta.  Collecting the
components defines a matrix function $\Ccoll[W]:\Lphys\to \Lherm$ such that
$\Ccoll[\tilde{W}]=-\Ccoll[W]$.  If $W\in \Xherm$, then $\Ccoll[W]\in \Xherm$
and (\ref{eq:collop2}) holds pointwise.  In addition, 
there is a constant $C$, which depends only on $\omega$, such that for all
$W,W'\in \Lphys$, 
 \begin{align}
   &\norm{\Ccoll[W]}_\infty\le C\, ,\label{eq:Ccollbound}\\
   &\norm{\Ccoll[W]-\Ccoll[W']}_2\le C\norm{W'-W}_2 \, . \label{eq:Ccdiffbound}
 \end{align}
\end{corollary}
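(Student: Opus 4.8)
The plan is to derive everything from Lemma~\ref{th:l2colloper0} and Proposition~\ref{th:defC1}, so that this corollary is mostly a matter of packaging; the only two points needing real care are the index bookkeeping and the transfer of pointwise identities to almost-everywhere statements. First, for $W\in\Lphys$ the constraint $0\le W\le1$ forces the operator norm of $W(k)$ to be $\le1$ a.e., hence every matrix entry of $W$ and of $\tilde W=1-W$ has modulus $\le1$ a.e.\ and so lies in $L^\infty(\T^d)$. Therefore each of the finitely many factors of the form $\mathcal{C}_0[\cdot,\cdot,\cdot]$ occurring in (\ref{eq:collop2b}) is a legitimate value of the map $\mathcal{C}_0\colon (L^\infty(\T^d))^3\to L^\infty(\T^d)$ furnished by Lemma~\ref{th:l2colloper0}, and (\ref{eq:C0infbound}) bounds each such value in $L^\infty$ by the $\omega$-dependent constant $C$ of that lemma. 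Multiplying by the prefactors $\tilde W_{ij_1}$ and $W_{ij_1}$ (themselves bounded by $1$ in $L^\infty$) and summing over $j\in\set{\pm1}^4$ gives $\Ccoll[W]_{ii'}\in L^\infty(\T^d)$ and the bound (\ref{eq:Ccollbound}) with a new constant depending only on $\omega$.

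For the Lipschitz estimate (\ref{eq:Ccdiffbound}) I would treat (\ref{eq:collop2b}) term by term. Each term is a product of one entry of $W$ or $\tilde W$ and one value $\mathcal{C}_0[\cdot,\cdot,\cdot]$, so I would use the identity $ab-a'b'=(a-a')b+a'(b-b')$. For the first piece H\"older's inequality gives $\norm{(a-a')b}_2\le\norm{a-a'}_2\norm{b}_\infty$, where $\norm{b}_\infty\le C$ by (\ref{eq:C0infbound}) and $\norm{a-a'}_2\le\norm{W-W'}_2$ since a matrix entry is dominated by the Hilbert--Schmidt norm. For the second piece, $\norm{a'(b-b')}_2\le\norm{a'}_\infty\norm{b-b'}_2$ with $\norm{a'}_\infty\le1$, and (\ref{eq:C0L2cont}) with $m\le1$ gives $\norm{b-b'}_2=\norm{\mathcal{C}_0[w]-\mathcal{C}_0[w']}_2\le C\sum_i\norm{w_i-w_i'}_2\le 3C\norm{W-W'}_2$, because each $w_i-w_i'$ is, up to sign, an entry of $W-W'$. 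Summing the finitely many terms produces (\ref{eq:Ccdiffbound}) with a constant depending only on $\omega$.

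Next I would treat the continuous case and the algebraic identities. For $W\in C(\T^d,\C^{2\times2})$, (\ref{eq:C0contdef}) together with the linearity of the map $F\mapsto\int\nu_{k_1}(\rmd^3k')\,F$ lets me re-sum the component expansion in (\ref{eq:collop2b}); writing $J[M]=1\,\tr M-M$ out in indices shows that the result is precisely the $(i,i')$-entry of the matrix expression (\ref{eq:collop2}), now with $\Ckint\delta(\underline{k})\delta(\underline{\omega})$ understood as the measure $\nu_{k_1}$ of Proposition~\ref{th:defC1} and with $\tilde W=1-W$ throughout. This simultaneously proves the claimed pointwise validity of (\ref{eq:collop2}) and shows $\Ccoll[W]\in C(\T^d,\C^{2\times2})$. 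If moreover $W\in\Xherm$, then for every $(k_2,k_3,k_4)$ the integrand in (\ref{eq:collop2}) is a Hermitian matrix, because its first and second summands are mutual adjoints (using $J[M]^*=J[M^*]$ and $W_i^*=W_i$, $\tilde W_i^*=\tilde W_i$), and likewise the third and fourth; integrating against the positive measure $\nu_{k_1}$ keeps it Hermitian, so $\Ccoll[W]\in\Xherm$. The relation $\Ccoll[\tilde W]=-\Ccoll[W]$ for continuous $W$ is read off (\ref{eq:collop2}) as well: the substitution $W\leftrightarrow\tilde W$ carries the first pair of summands to the negatives of the second pair and vice versa.

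Finally, for general $W\in\Lphys$ I would mollify on the torus, $W_n:=W*\phi_n$ with $\phi_n$ a continuous approximate identity, $\phi_n\ge0$ and $\int_{\T^d}\phi_n\,\rmd k=1$. Since $W_n(k)$ is an integral average of matrices in $[0,1]$ (and $\phi_n$ is real, so Hermiticity is preserved), $W_n\in\Xphys\subset\Xherm$ and $W_n\to W$ in $L^2$; likewise $\tilde W_n=1-W_n\in\Xphys$ with $\tilde W_n\to\tilde W$ in $L^2$. By the continuous case $\Ccoll[W_n]\in\Xherm$ and $\Ccoll[\tilde W_n]=-\Ccoll[W_n]$, and the $L^2$-Lipschitz bound (\ref{eq:Ccdiffbound}) (valid on $\Lphys$, to which all these functions belong) lets me pass to the limit: $\Ccoll[W]$ is then an $L^2$-limit of a.e.-Hermitian functions, hence a.e.\ Hermitian, so $\Ccoll\colon\Lphys\to\Lherm$, and $\Ccoll[\tilde W]=-\Ccoll[W]$ a.e. I expect the main obstacle to be purely bookkeeping: checking carefully that (\ref{eq:collop2b}) is exactly the indexed form of (\ref{eq:collop2}) (this controls the $J$-operator and the sum over $\set{\pm1}^4$), and then making sure the ``obvious pointwise'' facts --- Hermiticity of the output and $\Ccoll[\tilde W]=-\Ccoll[W]$ --- survive the passage from continuous to merely a.e.-defined $W$, which is exactly why one verifies them first on the $L^2$-dense subset $\Xphys$ and then invokes (\ref{eq:Ccdiffbound}).
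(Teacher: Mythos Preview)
Your proposal is correct and follows essentially the same approach as the paper: derive the $L^\infty$ bound and the $L^2$-Lipschitz bound term by term from Lemma~\ref{th:l2colloper0}, and for continuous $W$ invoke (\ref{eq:C0contdef}) to collapse the index sum back to the matrix integral (\ref{eq:collop2}). The only minor difference is how you establish Hermiticity and the relation $\Ccoll[\tilde W]=-\Ccoll[W]$ for general $W\in\Lphys$: the paper reads both directly off the formula (\ref{eq:collop2b}) --- Hermiticity from the property $\mathcal{C}_0[w_1^*,w_2^*,w_3^*]=\mathcal{C}_0[w_1,w_2,w_3]^*$ in Lemma~\ref{th:l2colloper0}, and the $\tilde W$-antisymmetry from the evident $W\leftrightarrow\tilde W$ pairing of the four lines --- whereas you first verify them on $\Xphys$ and then extend by your mollification argument and the Lipschitz bound. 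Both routes work; the paper's is slightly more economical since it avoids the approximation step.
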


\begin{proof}
  If $M$ is a matrix such that $0\le M\le 1$, then $|M_{ij}|\le 1$ for all
  indices $i,j$.  Since then $0\le 1-M\le 1$, we also have
  $|\tilde{M}_{ij}|\le 1$.  Therefore, if $W\in \Lphys$, 
  then $\tilde{W}\in \Lphys$, and 
  $\norm{W_{ij}}_\infty\le 1$ and $\norm{\tilde{W}_{ij}}_\infty\le 1$ for all
  indices $i,j$. 
  Let $c_0$ denote a constant for which the bounds (\ref{eq:C0infbound}) and
  (\ref{eq:C0L2cont}) in Lemma \ref{th:l2colloper0} hold.
  We can conclude that all of the above $\mathcal{C}_0$-terms are well defined
  and each has an $L^\infty$-norm bounded by $c_0$.  Multiplying this with
  $W_{ij}$ or $\tilde{W}_{ij}$ does not increase the bound, and thus we can
  conclude that $\norm{\Ccoll [W]_{ii'}}_{L^\infty}\le 64 \pi c_0$.  Also, it
  is obvious from the definition that $\Ccoll [W]_{ii'}^* = \Ccoll [W]_{i'i}$
  and hence we have proven that $\Ccoll [W]\in \Lherm$ and it satisfies
  (\ref{eq:Ccollbound}).
  The property $\Ccoll[\tilde{W}]=-\Ccoll[W]$ is also an immediate consequence
  of the definition (\ref{eq:collop2b}). 
  
  If $W\in \Xherm$, then each of its component functions is continuous, and
  then by Lemma \ref{th:l2colloper0} each action of $\mathcal{C}_0$ in
  (\ref{eq:collop2b}) is given by an integral over the same Borel measure
  $\nu_{k_1}(\rmd^3 k')$ and the resulting functions are continuous in $k_1$.
  The sums can be collected inside the integral, and the integrand
  expressed in terms of matrix products.  After some algebra, this proves that
  then (\ref{eq:collop2}) holds for every $k_1$.  Therefore, $\Ccoll [W](k)$
  is everywhere a Hermitian matrix, and we can conclude that also $\Ccoll
  [W]\in \Xherm$. 
  
  To prove (\ref{eq:Ccdiffbound}), assume that  $W,W'\in \Lphys$ are given.  
  Then $\norm{\Ccoll[W]-\Ccoll[W']}^2_2 =
  \sum_{ii'} \norm{\Ccoll[W]_{ii'}-\Ccoll[W']_{ii'}}^2_{L^2}$. 
  By (\ref{eq:C0infbound}) and (\ref{eq:C0L2cont}),
  $\norm{\Ccoll[W]_{ii'}-\Ccoll[W']_{ii'}}_{L^2}$ can be bounded by $64 \pi
  \times 4 c_0 \norm{W'-W}_2$.  Hence $\norm{\Ccoll[W]-\Ccoll[W']}_2 \le 8^3
  \pi c_0 \norm{W'-W}_2$.  This concludes the proof of the corollary. 
\end{proof}

We will later also need another consequence of Proposition \ref{th:defC1}: the
level sets of $\omt$ have then Lebesgue measure zero.  Note that this is not
true in general  without assumption (DR\ref{it:DRdisp}), even for smooth
dispersion relations. Consider for instance $\omega$ which coincides with a
linear map in some neighborhood of $0$: then we have $\omt=0$ in some
sufficiently small ball around zero. 

\begin{corollary}\label{th:omtlevelsets}
  Assume that $\omega:\T^d\to \R$ is continuous and satisfies
  (DR\ref{it:DRdisp}), consider a fixed $\sigma \in \set{-1,1}^4$, and define
  $\omt$ as in (\ref{eq:defomsig}).   Then for any $\alpha\in \R$ the set
  $S_\alpha := \defset{k\in (\T^d)^3}{\omt(k,\sigma)=\alpha}$ is compact and
  $\int_{S_\alpha}\!\rmd^3 k=0$. 
\end{corollary}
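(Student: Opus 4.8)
The compactness is immediate and I would dispose of it first: $S_\alpha$ is the preimage of the closed singleton $\set{\alpha}$ under the map $k\mapsto\omt(k,\sigma)$, which is continuous because $\omega$ is, and $(\T^d)^3$ is compact; hence $S_\alpha$ is compact.

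For the measure-zero statement the plan is to recognize $\totcoll(k_1,\cdot\,;\sigma)$ as a Lebesgue density. Fix $\sigma$ and $k_1\in\T^d$, and pick $m>\sup_k|\omt(k,\sigma)|$, finite by continuity and compactness. Applying item \ref{it:alphaprop} of Proposition \ref{th:defC1} with the test function $F(k',\alpha):=h(\alpha)$ for an arbitrary $h\in C(\R)$, and using $\int\nu_{k_1,\alpha,\sigma}(\rmd^3k')=\totcoll(k_1,\alpha;\sigma)$, I get the identity $\int_{-m}^m\rmd\alpha\,h(\alpha)\,\totcoll(k_1,\alpha;\sigma)=\int_{(\T^d)^2}\rmd k_2\,\rmd k_3\,h(\omt((k_1,k_2,k_3),\sigma))$. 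The right-hand side is $\int h\,\rmd\mu_{k_1}$, where $\mu_{k_1}$ is the pushforward of the Lebesgue measure on $(\T^d)^2$ under $(k_2,k_3)\mapsto\omt((k_1,k_2,k_3),\sigma)$, a Borel probability measure carried by $(-m,m)$. Since this holds for every $h\in C(\R)$ and, by Proposition \ref{th:defC1}, item \ref{it:totcoll} (and its proof), $\totcoll(k_1,\cdot\,;\sigma)$ is continuous and bounded, the uniqueness part of the Riesz representation theorem gives $\rmd\mu_{k_1}=\totcoll(k_1,\alpha;\sigma)\cf(|\alpha|<m)\,\rmd\alpha$. In particular $\mu_{k_1}$ is absolutely continuous, hence has no atoms, so $\mu_{k_1}(\set{\alpha})=0$ for every $\alpha$ and every $k_1$. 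The proof is then finished by a slicing argument: writing $S_\alpha\subset\T^d\times(\T^d)^2$ and using Tonelli's theorem on the closed (hence measurable) set $S_\alpha$, the inner integral over $(k_2,k_3)$ at fixed $k_1$ equals $\mu_{k_1}(\set{\alpha})=0$, so $\int_{S_\alpha}\rmd^3k=\int_{\T^d}\rmd k_1\,\mu_{k_1}(\set{\alpha})=0$.

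There is no serious obstacle here — the result is essentially a corollary of Proposition \ref{th:defC1} — but the step that warrants a careful sentence is the measure identification: one must pass from the integral identity "against all $h\in C(\R)$'' to equality of the two finite Borel measures on the compact interval $[-m,m]$ (Riesz uniqueness), and then from absolute continuity to the absence of atoms. Both are standard. I would also point out an alternative route that avoids item \ref{it:alphaprop} altogether: the characteristic function $s\mapsto\int_{(\T^d)^2}\rmd k_2\,\rmd k_3\,\rme^{\ci s\,\omt((k_1,k_2,k_3),\sigma)}$ of $\mu_{k_1}$ is, by the convolution identity (\ref{eq:convsplit}) and H\"older's inequality, bounded in absolute value by $\norm{p_s}_3^3$, which lies in $L^1(\rmd s)$ by (DR\ref{it:DRdisp}); Fourier inversion then shows $\mu_{k_1}$ is absolutely continuous with bounded continuous density, and the same slicing argument concludes. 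I would present the first route in the text and relegate the second to a parenthetical remark.
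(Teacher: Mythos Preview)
Your proof is correct, but the paper takes a shorter and more elementary route. Instead of identifying $\totcoll(k_1,\cdot\,;\sigma)$ as the density of the pushforward measure $\mu_{k_1}$ via item~\ref{it:alphaprop}, the paper simply notes that on $S_\alpha$ the integrand $\frac{\vep}{\vep^2+(\omt-\alpha)^2}$ equals $\vep^{-1}$, so
\[
\int_{(\T^d)^3}\!\frac{\rmd^3 k}{\pi}\,\frac{\vep}{\vep^2+(\omt(k,\sigma)-\alpha)^2}
\;\ge\;\frac{1}{\pi\vep}\int_{S_\alpha}\!\rmd^3 k\,.
\]
By Fubini and Proposition~\ref{th:defC1} the left-hand side converges to the finite quantity $\int_{\T^d}\!\rmd k_1\,\totcoll(k_1,\alpha;\sigma)$ as $\vep\to 0^+$, which forces $\int_{S_\alpha}\!\rmd^3 k=0$. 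This needs only item~\ref{it:totcoll} (in fact just the boundedness established in its proof), whereas your argument invokes item~\ref{it:alphaprop}, the most involved part of Proposition~\ref{th:defC1}, plus Riesz uniqueness. On the other hand your approach yields the stronger structural fact that $\mu_{k_1}$ is absolutely continuous with continuous density $\totcoll(k_1,\cdot\,;\sigma)$, which is informative in its own right. Your parenthetical Fourier-inversion alternative is also sound and is closest in spirit to how the paper establishes the underlying estimates.
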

\begin{proof}
  Fix $\alpha\in \R$, and denote  $c_0:=\int_{S_\alpha}\!\rmd^3 k$. 
  By continuity of $\omt$, $S_\alpha$ is compact, and hence Borel measurable.
  By Fubini's theorem,  
\begin{align}
& \int_{\T^d}\!\rmd k_1 \left(\int_{(\T^d)^2} \frac{\rmd k_2 \rmd k_3}{\pi} \, 
  \frac{\vep}{\vep^2+(\omt(k,\sigma)-\alpha)^2}\right) 
\nonumber \\ & \quad
= \int_{(\T^d)^3} \frac{\rmd^3 k}{\pi} \, 
  \frac{\vep}{\vep^2+(\omt(k,\sigma)-\alpha)^2}
\ge 
\int_{S_\alpha}  \frac{\rmd^3 k}{\pi} \, 
  \frac{\vep}{\vep^2+(\omt(k,\sigma)-\alpha)^2} = \frac{c_0}{\pi \vep}\, ,
\end{align}
for any $\vep>0$.  However, Proposition \ref{th:defC1} implies that the left
hand side converges to a finite value as $\vep\to 0^+$, which is possible only
if $c_0=0$. 
\end{proof}

\section{The regularized initial value problem}
\label{sec:regIPV}

In this section, we investigate the solutions to the regularized evolution
equation.  These solutions will provide a sequence of approximations used in
the proof of the main theorem.  The goal is to prove that the regularized
problem is well-posed and preserves continuity in $k$; in fact, this can be
proven even without the assumption (DR\ref{it:DRpv}). 

The regularization is defined by choosing an arbitrary $\vep>0$, and setting 
$\Ctot^\vep[W]:= \Ccoll [W]+ \Cdisp^\vep [W]$ where $\Cdisp^\vep [W]:=-\ci
[\Heff^\vep[W],W]$ 
and $\Heff^\vep[W] (k_1)$ is defined using the integral in
(\ref{eq:defHeffeps}). 
By Corollary \ref{th:l2colloper}, for a given $W\in \Xherm$ also $\Ccoll
[W]\in \Xherm$ and it satisfies (\ref{eq:collop2}).  It is also
straightforward to check that $\Heff ^\vep[W]\in \Xherm$ for any $W\in
\Xherm$, and hence also $\Cdisp ^\vep[W]\in \Xherm$.  Therefore, the
regularized collision operator is a well-defined map from $\Xherm$ to itself
and we can hope to solve the regularized evolution problem in the space
$\Xherm$.  In fact, we can show that not only is the regularized problem with
fermionic initial data well-posed, but it also preserves the Fermi property
and the conservation laws. 
\begin{theorem}\label{th:main2}
  Suppose $\omega$ satisfies (DR\ref{it:DR1}) and (DR\ref{it:DRdisp}).
  If $W_0\in \Xphys$ and $\vep>0$, then there is a unique $W\in
  C^{(1)}([0,\infty),\Xphys)$ such that $W(0,k)=W_0(k)$ for every $k$, and  
for all $t>0$ and every $k\in \T^d$, 
\begin{align}
  \partial_t W_t(k) = \Ccoll [W_t](k) - \ci [\Heff^\vep [W_t](k),W_t(k)]\, ,
\end{align}
where $W_t(k):=W(t,k)$. 
In addition, $W_t$ depends continuously on $W_0$ on any compact interval of
$[0,\infty)$, and the solution conserves total energy and spin: equalities
(\ref{eq:claw1}) and (\ref{eq:claw2}) hold for all $t\ge 0$. 

Moreover, to every $k\in \T^d$ and $0\le s\le t$ we can then attach a unitary
matrix $U^\vep_{t,s}(k;W_0)$ such that $U^\vep_{t,t}(k;W_0)=1$ and for which  
the map $s\mapsto U^\vep_{t,s}(k;W_0)$ belongs to 
$C^{(1)}([0,t],\C^{2\times 2})$ with 
  \begin{align}
  & \partial_s U^\vep_{t,s}(k;W_0)=\ci U^\vep_{t,s}(k;W_0) 
\Heff^\vep[W_s](k)\, .
  \end{align}
Then also
\begin{align}
  & W_t(k) = U^\vep_{t,0}(k;W_0) W_0(k) U^\vep_{t,0}(k;W_0)^* 
   + \int_0^t\!\rmd s\, U^\vep_{t,s}(k;W_0) \Ccoll[W_s](k)
   U^\vep_{t,s}(k;W_0)^*\, . 
\end{align}
\end{theorem}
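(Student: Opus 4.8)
The plan is to solve the regularized equation as an ODE in the real Banach space $\Xherm$ equipped with the sup-norm, and then to upgrade the solution to $\Xphys$ and to all times. First I would check that, for fixed $\vep>0$, the map $W\mapsto\Ctot^\vep[W]:=\Ccoll[W]-\ci[\Heff^\vep[W],W]$ sends $\Xherm$ into itself and is locally Lipschitz in the sup-norm. For $\Ccoll$ this follows from Corollary~\ref{th:l2colloper}: its matrix entries are finite sums of products of components of $W$ with values of the bounded trilinear maps $\mathcal{C}_0$ of Lemma~\ref{th:l2colloper0}, hence $\Ccoll$ is ``polynomial'' and therefore Lipschitz on bounded subsets of $\Xherm$. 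For $\vep>0$ the kernel $\underline{\omega}/(\underline{\omega}^2+\vep^2)$ in (\ref{eq:defHeffeps}) is bounded by $1/(2\vep)$ and continuous in $k_1$, so $\Heff^\vep$ is a bounded quadratic map $\Xherm\to\Xherm$ and $W\mapsto[\Heff^\vep[W],W]$ is locally Lipschitz as well. The Picard--Lindel\"of theorem in Banach spaces then gives a unique maximal solution $W\in C^{(1)}([0,T_*),\Xherm)$ with $W(0)=W_0$.

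The main obstacle is propagating the Fermi constraint; once this is done, $\norm{W_t}_\infty\le\sqrt 2$ on $[0,T_*)$, and a maximal solution with bounded trajectory is global, so $T_*=\infty$. Following Dolbeault, I would break the circularity with a truncation $\trunc{\cdot}$, where $\trunc{M}$ is obtained from a Hermitian $M$ by clamping its eigenvalues to $[0,1]$; this map is Lipschitz in the Hilbert--Schmidt norm, takes $\Xherm$ into $\Xphys$, and commutes with $M\mapsto 1-M$, so $\trunc{1-W}=1-\trunc{W}$. I then replace every internal factor $W_i$, $i=2,3,4$, inside $\Ccoll$ and $\Heff^\vep$ by $\trunc{W_i}$, keeping the outer factors $W_1,\tilde W_1$ as they stand; the truncated operator $\Ctot^{\vep}_{\rm tr}$ is still locally Lipschitz on $\Xherm$, so Picard gives a unique maximal solution $W^{\rm tr}$. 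Grouping the first two terms of (\ref{eq:collop2}) and using $\tilde W_1=1-W_1$, the truncated equation at a fixed $k_1$ becomes the linear matrix ODE
\[
  \partial_t W^{\rm tr}(t,k_1)=G_t(k_1)-W^{\rm tr}(t,k_1)\,S_t(k_1)-S_t(k_1)^*\,W^{\rm tr}(t,k_1),
\]
with $G_t=P_t+P_t^*$ and $S_t=P_t+Q_t+\ci\Heff^\vep[\trunc{W^{\rm tr}_t}](k_1)$, where $P_t$ and $Q_t$ are the integrals (against the measure of Proposition~\ref{th:defC1}) of $\trunc{W_3}J[\widetilde{\trunc{W_2}}\,\trunc{W_4}]$ and $\widetilde{\trunc{W_3}}J[\trunc{W_2}\,\widetilde{\trunc{W_4}}]$ respectively. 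Since $\Heff^\vep[\,\cdot\,](k_1)$ is Hermitian, $S_t+S_t^*-G_t=Q_t+Q_t^*$. After a $k_3\leftrightarrow k_4$ symmetrization of the (positive, bounded) measure, $G_t=\pi\int\delta(\underline{k})\delta(\underline{\omega})\big(AJ[BC]+CJ[BA]\big)$ with $A=\trunc{W_3},\ B=\widetilde{\trunc{W_2}},\ C=\trunc{W_4}$, all of which are $\ge 0$; hence $G_t\ge 0$ by (\ref{eq:mainmatrixbound}), and likewise $S_t+S_t^*-G_t=Q_t+Q_t^*\ge 0$ with the complementary assignment of $W\leftrightarrow\tilde W$.

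For each fixed $k_1$, $G_t(k_1)$ and $S_t(k_1)$ are continuous and uniformly bounded in $t$ on $[0,T^{\rm tr}_*)$, so the linear ODE $\partial_t\Psi_t=-S_t(k_1)^*\Psi_t$, $\Psi_0=1$, has a global invertible solution and the variation-of-constants formula gives
\[
  W^{\rm tr}(t,k_1)=\Psi_t\Big(W_0(k_1)+\int_0^t\Psi_s^{-1}G_s(k_1)(\Psi_s^{-1})^*\,\rmd s\Big)\Psi_t^*\, .
\]
Because $W_0(k_1)\ge 0$ and $G_s(k_1)\ge 0$, the bracket is $\ge 0$, so $W^{\rm tr}(t,k_1)\ge 0$; applying the same computation to $1-W^{\rm tr}$ (which solves the analogous equation with $G$ replaced by $S+S^*-G\ge 0$ and data $1-W_0\ge 0$) yields $W^{\rm tr}(t,k_1)\le 1$. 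Thus $W^{\rm tr}_t\in\Xphys$ throughout its existence interval, so $\trunc{W^{\rm tr}_t}=W^{\rm tr}_t$ and $W^{\rm tr}$ in fact solves the \emph{untruncated} equation; by boundedness it is global, and by uniqueness of the untruncated Picard problem it coincides with $W$. Hence $W\in C^{(1)}([0,\infty),\Xphys)$; uniqueness within $\Xphys$ is immediate since any $\Xphys$-valued solution of the original equation also solves the truncated one. Continuous dependence on $W_0$, uniform on compact time intervals, follows from the Lipschitz estimate together with Gr\"onwall's inequality, both solutions staying in the bounded set $\Xphys$.

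It remains to produce the unitary propagator and the conservation laws. For fixed $t$ and $k$, set $H_s:=\Heff^\vep[W_s](k)$, a continuous bounded Hermitian-matrix-valued function of $s\in[0,t]$; the linear ODE $\partial_s U^\vep_{t,s}(k;W_0)=\ci\,U^\vep_{t,s}(k;W_0)\,H_s$ with $U^\vep_{t,t}(k;W_0)=1$ has a unique $C^{(1)}$ solution, and $\partial_s\big(U^\vep_{t,s}U^{\vep*}_{t,s}\big)=\ci UHU^*-\ci UHU^*=0$ forces $U^\vep_{t,s}(k;W_0)$ to be unitary. Differentiating $R(s):=U^\vep_{t,s}(k;W_0)\,W_s(k)\,U^\vep_{t,s}(k;W_0)^*$ and using the equation for $W_s$ shows that the commutator terms cancel, leaving $\partial_s R(s)=U^\vep_{t,s}(k;W_0)\,\Ccoll[W_s](k)\,U^\vep_{t,s}(k;W_0)^*$; integrating over $s\in[0,t]$ and using $R(t)=W_t(k)$ gives the stated representation. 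Finally, the conservation laws (\ref{eq:claw1}) and (\ref{eq:claw2}) follow by differentiating in $t$: the commutator $-\ci[\Heff^\vep[W_t],W_t]$ is traceless and so does not contribute to (\ref{eq:claw1}), while $\int_{\T^d}\omega(k)\trace\Ccoll[W_t](k)\,\rmd k=0$ by the standard collisional-invariant argument --- symmetrizing the kernel with Corollary~\ref{th:swapkcoroll} produces the overall factor $\underline{\omega}$, which integrates to zero against the energy measure by Proposition~\ref{th:defC1}(\ref{it:fomint}); analogously $\int_{\T^d}\Ccoll[W_t](k)\,\rmd k=0$ and $\int_{\T^d}[\Heff^\vep[W_t](k),W_t(k)]\,\rmd k=0$ follow from the permutation symmetries of the collision kernel, the latter using that the kernel of $\Heff^\vep$ is odd in $\underline{\omega}$.
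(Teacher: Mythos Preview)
Your argument is correct and follows the same overall strategy as the paper: truncate the internal variables via the eigenvalue clamp $\Phi$, prove local (hence, by linear growth, global) well-posedness of the truncated problem in $\Xherm$, and establish preservation of the Fermi constraint by writing the truncated evolution at fixed $k_1$ as a linear matrix ODE whose gain term is nonnegative thanks to the matrix inequality (\ref{eq:mainmatrixbound}) after the $k_3\leftrightarrow k_4$ symmetrization of $\nu_{k_1}$.

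The packaging differs slightly from the paper. The paper first factors out the \emph{unitary} rotation generated by $\Heff^\vep[\Phi[w_s]]$, obtaining $u_{t,s}$, and only afterwards introduces a second (non-unitary) propagator $F$ for the conjugated loss term $b_{t,s}=u_{t,s}\Closs[\Phi[w_s]]u_{t,s}^*$; this two-step factorization is what produces the $U^\vep_{t,s}$ of the theorem statement directly. You instead absorb $\ci\Heff^\vep$ into a single generator $S_t$ and run one variation-of-constants with a non-unitary $\Psi_t$, which is cleaner for positivity but means you must still construct $U^\vep_{t,s}$ separately (as you do). Similarly, for the upper bound $W^{\rm tr}\le 1$ you compute the equation for $1-W^{\rm tr}$ and identify its gain as $Q_t+Q_t^*\ge 0$; the paper instead proves the operator-level symmetry $\Ctr[\tilde v]=-\Ctr[v]$ (using $\Phi[1-M]=1-\Phi[M]$) and then invokes positivity for $\tilde w$. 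These are equivalent. One small sign: with your conventions the equation reads $\partial_t W^{\rm tr}=G_t-W^{\rm tr}S_t-S_t^*W^{\rm tr}$ with $S_t=P_t+Q_t-\ci\Heff^\vep$ rather than $+\ci$, but this does not affect any of the positivity or invertibility conclusions. Your treatment of the conservation laws is terser than the paper's explicit calculation (especially for the matrix-valued $\int\Cdisp^\vep=0$, which needs both the $(k_1,k_2)\!\leftrightarrow\!(k_3,k_4)$ and $(k_1,k_3)\!\leftrightarrow\!(k_2,k_4)$ swaps), but the ingredients you name are the right ones.
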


\begin{proof}
Suppose $\omega$ satisfies (DR\ref{it:DR1}) and (DR\ref{it:DRdisp}), and
consider a fixed $\vep>0$.  
As explained above, then for any $W\in \Xherm$, both $\Ccoll[W]$ and $\Heff
^\vep[W]$ are defined directly as integrals over the appropriate measures, and
we can use matrix-algebraic manipulations in the integrands to simplify the
formulae.  First, we observe that in both cases the highest order monomial
terms cancel out inside the integrand.  For $\Heff^\vep$, we then find the
following alternative, slightly less symmetric but shorter, expression 
\begin{align}\label{eq:defHeff2}
& \Heff ^\vep[W](k_1) = -\frac{1}{2} \Ckint \delta(\underline{k})\,
\frac{\underline{\omega}}{\underline{\omega}^2 + \vep^2} \, 
\nonumber \\  & \quad \times
\left(J[W_4-W_2] W_3 + W_3 J[W_4-W_2] + 
J[W_2 \tilde{W}_4 + \tilde{W}_4 W_2 ]\right)  \, .
\end{align}
Furthermore, we can now split the dissipative part into sum of a ``gain'' and
a ``loss'' term.  Defining 
\begin{align} \label{eq:defCgain}
 & \Cgain [W](k_1):=\pi \Ckint \delta(\underline{k}) \delta(\underline{\omega})
 \left( W_3 J[\tilde{W}_2 W_4]
+ J[ W_4 \tilde{W}_2] W_3 \right) \, ,\\ \label{eq:defCloss}
 & \Closs [W](k_1):=\pi \Ckint  
\delta(\underline{k}) \delta(\underline{\omega})
 \left( J[ W_4 \tilde{W}_2]W_3 
+ J[\tilde{W}_4  W_2] \tilde{W}_3 \right) \, ,
\end{align}
allows to rewrite (\ref{eq:collop2}) as
\begin{align}
  \Ccoll[W](k_1) = \Cgain [W](k_1) - \Closs [W](k_1) W(k_1)- W(k_1) \Closs
  [W](k_1)^*\, . 
\end{align}
Here the first term is called the \defem{gain term} and the rest, the
\defem{loss term}. 

As mentioned in the beginning of this section, under the present assumptions,
$\Ctot ^\vep$ maps $\Xherm$ into itself.  There is also an additional
symmetry: if $W$ is a solution to  
$\partial_t W_t = \Ctot^\vep[W_t]$ with initial data $W_0$, then $\tilde{W}$
is also a solution, with initial data $\tilde{W}_0$.  To see this, note that
then $\partial_t \tilde{W}_t=-\Ctot^\vep[W_t]$ where $-\Ctot^\vep[W_t]=\Ctot
^\vep[\tilde{W}_t]$, as can be seen by using the property
$(\tilde{W})\tilde{\,}=W$ in the integral representations (\ref{eq:collop2})
and in (\ref{eq:defHeffeps}):  this shows that 
$\Ccoll [\tilde{W}] = - \Ccoll [W]$ and 
$\Heff ^\vep[\tilde{W}]=\Heff ^\vep[W]$, and hence also that 
$\Cdisp^\vep [\tilde{W}] = - \Cdisp^\vep [W]$.

In the proof of the Proposition we follow the strategy used by Dolbeault in
\cite{dolb94}, albeit with a somewhat  different truncation procedure.  We
begin by introducing a truncation to the collision operator which will be
employed to ensure the existence of global solutions.  For this,  we first set 
\begin{align}
  \Phi(x) := \begin{cases}
                 0, & \text{if } x< 0\, ,\\
                 x, & \text{if } 0\le x\le 1\, ,\\
                 1, & \text{if } x> 1\, .
                 \end{cases}
\end{align}
For any Hermitian matrix $M$, we then define a positive matrix $\trunc{M}$ via
symbolic calculus. 
Explicitly, let $\lambda_i\in \sigma(M)$, with $i$ counting the eigenvalues,
and denote by $P_i$ the corresponding spectral projection operators (which
have a two-dimensional range, in case of a degenerate eigenvalue).  Then $M=
\sum_{i} \lambda_i P_i$  
is the spectral decomposition of $M$ and we define
\begin{align}\label{eq:defPhiM}
\Phi[M]:=\sum_{i}\Phi(\lambda_i)P_i\, .
\end{align}
This results in a Lipschitz map, as the following Lemma shows.
\begin{lemma}\label{th:basicPhi}
  There is a constant $C>0$ such that $\norm{\trunc{M'}-\trunc{M}}\le C
  \norm{M'-M}$ for all $M,M'\in \C^{2\times 2}$ which are Hermitian.  In
  addition, then $1-\trunc{M}= \trunc{1-M}$, $0\le \trunc{M}\le 1$ and
  $\trunc{M}=M$ if $0\le M\le 1$. 
\end{lemma}
\begin{proof}
We first observe that $\Phi$ satisfies
\begin{align}\label{eq:Phiabs2}
\Phi(x) = \frac{1}{2}(1 + \left | x \right |- \left | 1-x \right |)\, ,\quad
x\in \R\, .  
\end{align}
Let $|M|$ denote the absolute value of a matrix $M$, defined by 
$\left |M \right | := (M^\ast M)^ \frac{1}{2}$. 
Using the above spectral decomposition of $M$, we find that $|M|= \sum_{i}
|\lambda_i| P_i$, and hence by (\ref{eq:Phiabs2})  
\begin{align}\label{eq:Mtoabs}
\trunc{M}= \frac{1}{2}(1  + \left | M \right |- \left | 1 -M \right |) \, .
\end{align}
The map $M \rightarrow \left | M \right |$ is Lipschitz continuous in the
Hilbert-Schmidt norm (this is proven, for instance, in 
\cite[Theorem 1]{AY81}).  
Hence (\ref{eq:Mtoabs}) implies that $M\mapsto \trunc{M}$ is also
Lipschitz continuous, and there is a pure constant $C$ such that
$\norm{\trunc{M'}-\trunc{M}}\le C \norm{M'-M}$ for all $M,M'$. 

By (\ref{eq:Phiabs2}), clearly $\Phi(1-x)=1-\Phi(x)$ for all $x\in \R$.  Since
we have $\sum_i P_i = 1$ 
in the spectral decomposition, it follows that $1-\trunc{M}= \trunc{1-M}$. 
The other properties of $\trunc{M}$ listed in the Lemma are obvious
consequences of the definition (\ref{eq:defPhiM}). 
\end{proof}

As the first step in the proof, we will show that for any $w_0\in \Xherm$
there is a unique global solution $w\in C^{(1)}([0,\infty),\Xherm)$ which
solves the partially truncated evolution equation 
\begin{align}\label{eq:defCtr}
 \partial_t w_t(k) = 
  \Ctr [w_t](k)\, ,\qquad \Ctr[w]:= \Ccolltr [w]+ \Cdisptr [w]\, ,
\end{align}
where, defining $\Phi[w](k):= \Phi[w(k)]$ for $k\in \T^d$, $w\in \Xherm$,
\begin{align}\label{eq:defCcolltr}
 & \Ccolltr [w](k) := \Cgain [\trunc{w}](k) - 
 \Closs [\trunc{w}](k) w(k)- w(k) \Closs [\trunc{w}](k)^*\, , 
 \\ \label{eq:defCdisptr}
 & \Cdisptr [w](k) :=  -\ci [\Heff^\vep [\trunc{w}](k), w(k)]\, .
\end{align}
To prove this, we rely on the standard fixed point methods in the Banach
spaces $Y_{t_0}:=C([0,t_0],\Xherm)$, $t_0>0$, equipped with the norms
$\norm{w}_Y := \sup_{t,k} |w_t(k)|=\sup_t \norm{w_t}$.   
Given $W_0\in \Xherm$ we define for all $w\in Y_{t_0}$, $k\in \T^d$, 
$0\le t\le t_0$, 
\begin{align}\label{eq:defTw}
    \mathbf{T}[w]_t(k) := W_0(k) + \int_0^{t}\!\rmd s\, \Ctr [w_s](k)\, .
\end{align}
We begin by proving that for any $W_0\in \Xherm$, there is a non-increasing
function  
$\theta_0(\norm{W_0})>0$ such that $\mathbf{T}$ is a contractive mapping on
the closed ball $\overline{B}(W_0,1)$ of $Y_{\theta_0}$, where with a slight
abuse of notation we have denoted by $W_0$ also its time-constant extension,
i.e., the function $F \in Y_{\theta_0}$ for which $F_t(k):= W_0(k)$ for all
$t,k$. 

Suppose $w\in \Xherm$ and $k\in \T^d$ are arbitrary.  Since $\norm{J[M]}\le
\norm{M}$ for any $M\in \C^{2\times 2}$, we have $\norm{w_1 J[w_2 w_3]}\le
\prod_{i=1}^3 \norm{w_i}$ for any choice of $w_i\in \C^{2\times 2}$,
$i=1,2,3$.  Together with Proposition \ref{th:defC1}, this implies that there
is a constant $C_0$, depending only on $\omega$, such that  
\begin{align} 
 \|\Cgain [w](k)\|,\ \|\Closs [w](k)\| \le C_0 (1+\norm{w})^3 \, .
\end{align}
Therefore, $\|\Ccolltr [w](k)\|\leq C_0 (1+\norm{\Phi[w]})^3 
(1 + 2 \norm{w(k)})$,
and since $\norm{\Phi[w]}\le 2$, we can find a constant $C_1$, also depending
only on $\omega$, such that  
\begin{align}
\|\Ccolltr [w](k)\|\leq C_1 (1 + \norm{w(k)})\, .
\end{align}
Similarly, we obtain that there is a constant $C^\vep_2$, depending only on
$\vep$ and $\omega$, such that $\norm{\Heff^\vep [\Phi[w]](k)} \le
C^\vep_2/2$, and hence $\norm{\Cdisptr [w](k)} \le C^\vep_2\norm{w(k)}$. 
Thus
\begin{align}
\norm{\Ctr [w](k)} \le R (1 + \norm{w(k)})\, ,
\end{align}
where $R:=C_1+ C^\vep_2$ depends only on $\vep$ and $\omega$. 

Suppose $w\in \Xherm$. 
By item \ref{it:contin} of Proposition \ref{th:defC1}, 
then both $\Cgain [w]$ and $\Closs [w]$ are continuous functions in $k$.
On the other hand, for instance by using dominated convergence, we find that
also $k\mapsto  
\Heff^\vep [w](k)$ is continuous.  
By Lemma \ref{th:basicPhi}, the map $M\mapsto \Phi[M]$ is continuous in the
matrix norm, 
and thus we can conclude that for any $w\in \Xherm$ we have 
$\Ctr [w]\in \Xherm$ with  
\begin{align}\label{eq:bd}
\norm{\Ctr [w]} \le R (1 + \norm{w})\, .
\end{align}
Note that $\Cgain [\Phi[w]](k)$ and $\Heff^\vep [\Phi[w]](k)$ are clearly
Hermitian for $w\in \Xherm$, $k\in \T^d$, and thus so is $\Ctr [w](k)$. 

Let us next consider the continuity properties of the map $w\mapsto \Ctr [w]$
in $\Xherm$. 
The following Lemma generalizing a result of Seiler and Simon \cite{SeSi75}
will become useful for this purpose.  The corresponding statement and a proof
for scalar valued functions is given in \cite[Theorem 4.1]{gohberg00}.  In
fact, the proof carries over verbatim for matrix valued functions, as soon as
one understand all integrals as ``vector valued'' in the sense of used in
topological vector spaces \cite{rudin:fa}.  The full proof is included here
mainly for the sake of completeness. 
\begin{lemma}[Lipschitz bounds]\label{th:lip}
Let $f$ be a complex matrix valued function defined on complex normed linear
space $\mathcal{N}$. Suppose that 
\begin{enumerate}
\item the function $\lambda\mapsto f(A+\lambda B)$ is an entire matrix
  function for all ${A}$, ${B}$ in $\mathcal{N}$, and  
\item there is a monotone non-decreasing function ${g}$ on $[0,\infty)$ such
  that for all $A\in \mathcal{N}$, 
\begin{align}
\left \| f(A) \right \| \leq g(\left \|A  \right \|_{\mathcal{N}}) \, .
\end{align}
\end{enumerate}
Then for all ${A}$, ${B}$ in $\mathcal{N}$,
\begin{align}\label{eq:fABbound0}
\left \| f(A)-f(B) \right \| \leq \left \|A-B  \right \|_{\mathcal{N}} g(\left
  \|A  \right \|_{\mathcal{N}}+\left \|B  \right \|_{\mathcal{N}}+1) \, .   
\end{align}
\end{lemma}

\begin{proof}
Let $h(\lambda):=f(\frac{1}{2}(A+B)+\lambda(A-B))$. By assumption, then $h$ is
an entire matrix function and  
\begin{align}
\left \| f(A)-f(B) \right \|=\left \|
  h\left(\frac{1}{2}\right)-h\left(-\frac{1}{2}\right) \right \|\leq
\sup_{-\frac{1}{2}\leq t\leq \frac{1}{2}} \left \|h^\prime (t)  \right \| \,
. 
\end{align}
By assumption, the matrix valued map $h$ is holomorphic on $\C$.   
From the Cauchy contour integral formula (see for instance the proof of
Theorem 3.31 in \cite{rudin:fa} for details) we get for $\rho > 0$ and
$-\frac{1}{2}\leq t\leq \frac{1}{2}$ 
\begin{align}
& \left \|h^\prime (t)\right \|= \frac{1}{2\pi}\left
  \|\,\smash{\oint\limits_{\left | s \right |= \rho}} \frac{h(s+t)}{s^2}
  \mathrm{d}s \right \|\leq\frac{1}{\rho}\sup_{\left | s \right |= \rho} \left
  \| h(s+t) \right \|\leq \frac{1}{\rho}\sup_{\left | \lambda \right |\leq
  \rho + \frac{1}{2}} \left \| h(\lambda) \right \|\, . 
\end{align}
Therefore, for any  $\rho > 0$,
\begin{align}\label{eq:fABbound1}
\left \| f(A)-f(B) \right \|\leq \frac{1}{\rho}\sup_{\left | \lambda 
\right|\leq \rho + \frac{1}{2}} \left \| h(\lambda) \right \|\, .   
\end{align}

If $A=B$, then (\ref{eq:fABbound0}) obviously holds.  Assume thus $A\ne B$,
and set  
$\rho := \left \|A-B  \right \|^{-1} _{\mathcal{N}}$.   Then for any $\left
  |\lambda  \right |\leq \rho + \frac{1}{2}$ we have 
\begin{align}
&\left \|\frac{1}{2}(A+B)+\lambda(A-B) \right \|_{\mathcal{N}}\leq
\frac{1}{2}(\left \|A+B \right \|_{\mathcal{N}} + \left \|A-B \right
\|_{\mathcal{N}}) + \rho \left \|A-B  \right \|_{\mathcal{N}} 
 \nonumber \\  & \quad
\leq \left \|A\right \|_{\mathcal{N}} + \left \|B\right \|_{\mathcal{N}} + 1 \, . 
\end{align}
Hence, if $\left |\lambda  \right |\leq \rho + \frac{1}{2}$, then by the
second assumption  we can estimate 
\begin{align}
\left \|h(\lambda)\right \|= \left \|
  f\!\left(\frac{1}{2}(A+B)+\lambda(A-B)\right) \right \| \leq g\!\left(\left
    \|\frac{1}{2}(A+B)+\lambda(A-B) \right \|_{\mathcal{N}}\right) \, , 
\end{align}
where, by the monotonicity of $g$, the right hand side is bounded by $g(\left
  \|A\right \|_{\mathcal{N}} + \left \|B\right \|_{\mathcal{N}} + 1)$.
Therefore, (\ref{eq:fABbound1}) implies now that also for $A\ne B$  
(\ref{eq:fABbound0}) holds. 
\end{proof}

Consider then some $k\in \T^d$.  Since $\norm{\Cgain[w](k)} \le C_0
(1+\norm{w})^3$, the map $w\mapsto \Cgain[w](k)$ satisfies the
conditions\footnote{In principle, we have defined the map only in the real
  Banach space $\Xherm$.  However, it is obvious that the defining integral
  can also be applied in the complex Banach space $C(\T^d,\C^{2\times 2})$ and
  that none of the bounds used the fact that $w(k)$ is Hermitian.}  of the
Lemma with $g(x):= C_0(1+x)^3$.  Hence, for all $w,w'\in \Xherm$ and any 
$k\in\T^d$ we have  
$\norm{\Cgain[w'](k)-\Cgain[w](k)}\le \norm{w'-w} 
C_0 (\norm{w'}+\norm{w}+ 2)^3$.  
Therefore, by Lemma \ref{th:basicPhi}, there is a
constant $c'_0$, which depends only on $\omega$, such that  
$\norm{\Cgain[\Phi[w']]-\Cgain[\Phi[w]]}\le c'_0 \norm{w'-w}$.  Thus $w
\mapsto \Cgain[\Phi[w]]$ is a Lipschitz map on $\Xherm$.  Completely analogous
reasoning shows that also $w \mapsto \Closs[\Phi[w]]$ 
and $w \mapsto \Heff^\vep[\Phi[w]]$  are Lipschitz maps $\Xherm\to
C(\T^d,\C^{2\times2})$, with Lipschitz constants which depend only on $\omega$
and $\vep$.  Therefore, using also the earlier derived uniform bounds, we can
conclude that there is a constant $R'$, which depends only on $\omega$ and
$\vep$, such that 
\begin{align}\label{eq:bd2}
\norm{\Ctr [w']-\Ctr [w]} \le R' (1 + \norm{w}+\norm{w'}) \norm{w'-w}\,
,\qquad w,w'\in \Xherm\, . 
\end{align}

Consider then an arbitrary $t_0>0$.  If $w\in Y_{t_0}$, dominated
convergence and the above bounds imply that the function $\mathbf{T}[w]$
defined by (\ref{eq:defTw}) is continuous, both in $t$ and in $k$,  and
$\mathbf{T}[w]_t(k)$ is always Hermitian and satisfies the bound 
\begin{align}
\|\mathbf{T}[w]_t(k)- W_0(k)\|\leq R\int_0 ^{t}\!\rmd s \, (1+\norm{w_s}) \, ,
\quad w\in Y_{t_0}\,. 
\end{align}
In addition,  by (\ref{eq:bd2}) for any $w', w\in Y_{t_0}$,
\begin{align}
  \|\mathbf{T}[w']_t(k)-\mathbf{T}[w]_t(k)\|\le t R' (1 +
  \norm{w}_Y+\norm{w'}_Y) \norm{w'-w}_Y \, , 
  \quad t\in [0,t_0],\ k\in \T^d\, .
\end{align}
Set $\theta_0(\norm{W_0}):= c_2/(1+\norm{W_0})$ where $c_2:=
\min((8R')^{-1},(2 R)^{-1})$ depends only on $\omega$ and $\vep$, and thus
$\theta_0$ is a non-increasing function of $\norm{W_0}$ with $\theta_0>0$ for
all $\norm{W_0}<\infty$.   Then, if $t_0\le \theta_0$, the above estimates
imply  
$\|\mathbf{T}[w]- W_0\|_Y\leq 1$ and $
\|\mathbf{T}[w']_t(k)-\mathbf{T}[w]_t(k)\|_Y\le \norm{w'-w}_Y/2$ whenever
$w',w\in Y_{t_0}$ satisfy $\norm{w'-W_0}_Y, \norm{w-W_0}_Y\le 1$. 
This proves the earlier claim that for any $0<t_0\le \theta_0$ the map
$\mathbf{T}$ is a contraction from the closed ball $\overline{B}(W_0,1)$ into
itself.  Thus 
we can conclude from the Banach fixed point theorem that for any $W_0\in
\Xherm$ there is a unique $w\in \overline{B}(W_0,1)\subset Y_{\theta_0}$,
$\theta_0=\theta_0(\norm{W_0})>0$, for which  
\begin{align}\label{eq:fpeqn}
    w_t(k) = W_0(k) + \int_0^{t}\!\rmd s\, \Ctr [w_s](k)\, , \quad 0\le t \le
    \theta_0,\ k\in \T^d\, . 
\end{align}

Fix then $W_0\in \Xherm$ and consider the collection of solutions to
(\ref{eq:fpeqn}), i.e., the set of  
$(t_0,w)$, with $t_0>0$, $w\in Y_{t_0}$, such that (\ref{eq:fpeqn}) holds for
$0\le t\le t_0$.  By the above result we know that this set is not empty, and
extension of functions clearly defines a partial order on it.  By Hausdorff's
maximality principle, there is a maximal totally ordered subset.  Let $T_0$
denote the supremum of the $t_0$ in this set, and define $w_t(k)$ for $0\le
t<T_0$, $k\in \T^d$, by choosing a function from the set with $t_0>t$ (such
a $t_0$ must exist and the value of $w_t(k)$ does not depend on the choice
since the set is totally ordered).  Then $w\in C([0,T_0),\Xherm)$, and it is a
maximal solution to (\ref{eq:fpeqn}) for $0\le t<T_0$.   Now  (\ref{eq:bd})
and  (\ref{eq:fpeqn}) imply 
\begin{align}
    \norm{w_t(k)} \le \norm{W_0} +  \int_0^{t}\!\rmd s\, R (1+\norm{w_s})\, ,
    \quad 0\le t <T_0\, ,\ k\in \T^d\, . 
\end{align}
The map $s\mapsto \norm{w_s}$ is continuous, and thus Gr\"onwall's lemma can
be applied on the time-interval $[0,t]$.  This allows to conclude that 
\begin{align}\label{eq:wtnormbound}
    \norm{w_t} \le (\norm{W_0} + R t) \rme^{R t}\, , \quad 0\le t <T_0\, .
\end{align}
If we suppose that $T_0<\infty$, this would imply that $\norm{w_t} \le
c:=(\norm{W_0} + R T_0) \rme^{R T_0}$ for all $0\le t <T_0$.  However, if we
then apply the fixed point result to initial data $w_{T_0-\theta_0(c)/2}$ we
obtain an extension up to times $T_0+\theta_0(c)/2$.  This contradicts the
maximality of $w$, and hence necessarily $T_0=\infty$. 

We have now proven that for any $W_0\in \Xherm$ there is $w\in
C([0,\infty),\Xherm)$ which satisfies (\ref{eq:fpeqn}) for all $t,k$.  Since
$\Ctr [w_t](k)$ is continuous, this directly implies that  
$w\in C^{(1)}([0,\infty),\Xherm)$ with $\partial_t w_t = \Ctr [w_t]$, for all
$t\ge 0$, and that $w_t\to W_0$ as $t\to 0^+$.  Thus it provides a global
solution to the truncated Cauchy problem.   
Suppose $v$ is another global solution corresponding to some initial data
$V_0\in \Xherm$. Then it also satisfies (\ref{eq:fpeqn}) for all $t,k$ and
hence also (\ref{eq:wtnormbound}). Therefore, by (\ref{eq:bd2}) 
\begin{align}
 & \norm{v_t(k)-w_t(k)}\le \norm{V_0-W_0}+ \int_0^{t}\!\rmd s\, 
  \norm{\Ctr [v_s]-\Ctr [w_s]}
 \nonumber \\  & \quad
  \le \norm{V_0-W_0}+
  R'  (1+\norm{V_0} +\norm{W_0} + 2 R t) 
  \rme^{R t} \int_0^{t}\!\rmd s\,\norm{v_s-w_s} \, ,
\end{align}
for all $t\ge 0$ and $k\in \T^d$.  Thus Gr\"onwall's lemma implies that, if
$t\in [0,t_0]$, with $t_0>0$ arbitrary, then  
\begin{align}\label{eq:wtrcontdep}
\left \| v_t-w_t \right \|\le  \norm{V_0-W_0} \rme^{t_0 R'  (1+\norm{V_0}
  +\norm{W_0} + 2 R t_0)}\, . 
\end{align}
This shows that the global solution is unique, and also proves 
that it depends continuously on the initial data.

We have now proven that the truncated problem (\ref{eq:defCtr}) is well-posed
for any initial data $W_0\in \Xherm$.  Our next goal is to show that such
solutions preserve the Fermi property, i.e., to show that if $W_0\in \Xphys$, 
then the corresponding solution $w$ satisfies $w_t\in \Xphys$ for all $t\ge
0$.  Suppose this is the case.  Then $\Phi[w_t]=w_t$, and thus
$\Ctr[w_t]=\Ctot^\vep[w_t]$ for all $t$.  It follows that then $w\in
C^{(1)}([0,\infty),\Xphys)$, $w_0=W_0$, and $\partial_t w_t= \Ctot^\vep [w_t]$  
for all $t>0$; therefore, choosing $W=w$ yields a solution satisfying the
conditions of the Theorem.  It is also the only such function: for any $W$ as
in the Theorem, we have 
$W_t\in \Xphys$ implying $\Ctot^\vep[W_t]=\Ctr[W_t]$, and thus $W$ is then a
solution to the truncated problem, hence equal to $w$.  Finally,
(\ref{eq:wtrcontdep}) then immediately implies that the unique solution
depends continuously on the initial data. 

Therefore, to complete the proof of the first part of the Theorem we only need
to show that the above solutions preserve the Fermi property and the
conservation laws.  In fact, for the Fermi property it suffices to show that
the solutions preserve positivity in the matrix sense.   
Indeed, suppose that we have proven that for every $W_0\in \Xherm$ with
$W_0\ge 0$ necessarily $w_t\ge 0$ for all $t$.   We will soon show that also
the truncated problem preserves the $W\to \tilde{W}$ symmetry, i.e.,  we will
show that if $w$ is a solution to $\partial_t w_t = \Ctr[w_t]$ with initial
data $W_0$, then $\tilde{w}$ is a solution with initial data $\tilde{W}_0$.
If $W_0\in \Xphys$, we have $W_0\ge 0$ and $\tilde{W}_0\ge 0$, and as
positivity of solutions is preserved, we can then conclude that $w_t\ge 0$
and $\tilde{w}_t\ge 0$, and thus $0\le w_t\le 1$. 

To prove the symmetry statement suppose $W_0\in \Xherm$ and let $w$ denote the
corresponding solution. 
Then $\partial_t \tilde{w}_t = -\Ctr[w_t]$, and since
$\tilde{w}_0=\tilde{W}_0$ it suffices to show that  
$\Ctr[\tilde{v}]=-\Ctr[v]$ for all $v\in \Xherm$.  For this, first note that 
by Lemma \ref{th:basicPhi}, we have always
$\trunc{\tilde{v}}=(\trunc{v})\tilde{\,}$.  Hence, by the earlier discussion
$\Ctot ^\vep[\trunc{\tilde{v}}]=-\Ctot ^\vep[\trunc{v}]$ and $\Heff
^\vep[\trunc{\tilde{v}}]=\Heff ^\vep[\trunc{v}]$.  Therefore, now $\Cdisptr
[\tilde{v}]=-\Cdisptr [v]$ and also $\Closs [\trunc{\tilde{v}}]= \Closs
[\trunc{v}]$, since it is obvious from the definition (\ref{eq:defCloss}) that
$\Closs [\tilde{W}]= \Closs [W]$.   
Employing the above equalities shows that
\begin{align}\label{eq:ctrtt}
 & \Ctr [\tilde{v}]+\Ctr [v]
 = \Cgain [1-\trunc{v}] + 
  \Cgain [\trunc{v}] - \Closs [\trunc{v}] - \Closs [\trunc{v}]^*\, .
\end{align}
However, for any $W\in \Xherm$, we find directly from the definitions
(\ref{eq:defCgain}) and (\ref{eq:defCloss}) that 
\begin{align} 
 & \Cgain [\tilde{W}](k_1) + \Cgain [W](k_1)  
  \nonumber \\ & \quad
   = \pi \Ckint \delta(\underline{k}) \delta(\underline{\omega})
  \nonumber \\ & \qquad \times
 \left( \tilde{W}_3 J[W_2 \tilde{W}_4]
+ J[ \tilde{W}_4 W_2] \tilde{W}_3 + W_3 J[\tilde{W}_2 W_4]
+ J[ W_4 \tilde{W}_2] W_3
\right)
  \nonumber \\ & \quad =
  \Closs [W](k_1) + \Closs [W](k_1) ^*
\, .
\end{align}
Thus (\ref{eq:ctrtt}) implies $\Ctr [\tilde{v}]=-\Ctr [v]$, and proves the
stated preservation of the $W\to \tilde{W}$ symmetry. 

Therefore, to prove the preservation of the Fermi property, we now only need
to show that the solutions preserve positivity.  The key ingredient in this
proof is Lemma \ref{th:posgain}, which implies that the truncated gain term is
always a nonnegative matrix.  Its proof will rely on the following matrix inequality. 
\begin{lemma}\label{th:matrixpos}
For any $n \times n$ matrices $A, B, C\geq 0$, $n\ge 1$, we have
\begin{align}\label{eq:MI}
 AJ[BC]+CJ[BA]\geq 0 \qand J[AB]C+J[CB]A\ge 0 \, .
\end{align}
\end{lemma}
\begin{proof}
By expanding the definition of $J$, we find that
$AJ[BC]+CJ[BA]=A\trace(BC) + C\trace(AB) - ABC - CBA$.  To prove that this is
nonnegative, choose a complete set of eigenvectors for each Hermitian matrix:
let $(a_i,\alpha_i)_{i=1,\ldots,n}$ be an eigensystem for $A$,    
$(b_i,\beta_i)_{i}$ for $B$, and $(c_i,\gamma_i)_{i}$ for $C$.
Then for any $\psi\in \C^n$, by selecting suitable bases to express the matrix
products and to compute the traces, we obtain 
\begin{align}
 & \spr{\psi}{( A\trace(BC) + C\trace(AB) - ABC - CBA)\psi}
   \nonumber \\ & \quad
 = \sum_{i,j,k=1,\ldots,n} a_i b_j c_k
 \bigl(\spr{\psi}{\alpha_i}\spr{\alpha_i}{\psi}\spr{\beta_j}{\gamma_k}
  \spr{\gamma_k}{\beta_j} 
   +\spr{\psi}{\gamma_k}\spr{\gamma_k}{\psi}\spr{\beta_j}{\alpha_i}
  \spr{\alpha_i}{\beta_j}
   \nonumber \\ & \quad\qquad
 -\spr{\psi}{\alpha_i}\spr{\alpha_i}{\beta_j}\spr{\beta_j}{\gamma_k}
  \spr{\gamma_k}{\psi}
   -\spr{\psi}{\gamma_k}\spr{\gamma_k}{\beta_j}\spr{\beta_j}{\alpha_i}
  \spr{\alpha_i}{\psi}
   \bigr)
      \nonumber \\ & \quad
 = \sum_{i,j,k} a_i b_j c_k
 \bigl| \spr{\psi}{\alpha_i}\spr{\beta_j}{\gamma_k}-\spr{\psi}{\gamma_k}
  \spr{\beta_j}{\alpha_i}
   \bigr|^2 \ge 0\, ,
\end{align}
since, by assumption, $a_i, b_i, c_i\ge 0$.
This implies that the matrix is non-negative.  Taking an adjoint proves then
that also $J[AB]C+J[CB]A\ge 0$. 
\end{proof}
\begin{lemma}\label{th:posgain}
  If $W\in \Xphys$, then $\Cgain[W](k)\ge 0$ for all $k$.  Therefore,
  $\Cgain[\trunc{W}](k)\ge 0$ for all $k$ and $W\in \Xherm$.  
\end{lemma}
\begin{proof}
It follows directly from its definition in Proposition \ref{th:defC1} that the
measure $\nu_{k_1}$ is invariant under the exchange $k_4\leftrightarrow k_3$.
To see this, one can  change the integration variable $k_3$ to
$k_4:=k_1+k_2-k_3$ in (\ref{eq:numaindef}) and note that
$\omt((k_1,k_2,k_1+k_2-k_4),(1,1,-1,-1))=\omt((k_1,k_2,k_4),(1,1,-1,-1))$.
Using this symmetry in the definition (\ref{eq:defCgain}) shows that  
\begin{align} 
 & \Cgain [W](k_1) = \frac{\pi}{2} \Ckint \delta(\underline{k}) \delta(\underline{\omega})
  \nonumber \\ & \quad \times
 \left( W_3 J[\tilde{W}_2 W_4]
+ J[ W_4 \tilde{W}_2] W_3 +
W_4 J[\tilde{W}_2 W_3]
+ J[ W_3 \tilde{W}_2] W_4 
\right) \, .
\end{align}
If $W\in \Xphys$, then $\tilde{W}_2,W_3,W_4\ge 0$ inside the integrand above.
Therefore, we can apply Lemma \ref{th:matrixpos} and conclude that the
integrand is pointwise a positive matrix.  This directly implies that $\Cgain
[W](k_1)\ge 0$, since then for any $\psi\in \C^2$ clearly 
$\spr{\psi}{\Cgain  [W](k_1)\psi}\ge 0$. 

If $W\in \Xherm$, then $\trunc{W}\in \Xphys$, and thus the second statement is
a corollary of the first one. 
\end{proof}

With the above preparations, we are now ready to prove the preservation of
positivity.  Fix thus some $W_0\in \Xherm$ with $W_0\ge 0$ and let $w$ denote
the corresponding global solution to the truncated problem. 
Set $h_t(k):=\Heff^\vep[\trunc{w_t}](k)$.  Then each $h_t(k)$ is a Hermitian
matrix, and since the map $t \mapsto h_t(k)$ is also norm continuous for each
$k$, the standard Dyson expansion techniques (see, e.g., \cite[Theorem
X.69]{reedsimonII} and take an adjoint of the result) imply that for any $k\in
\T^d$ and $s,t$, with $0\le s\le t$, we can find a unitary matrix $u_{t,s}(k)$
such that  $u_{t,t}(k)=1$, the map $s\mapsto u_{t,s}(k)$ belongs to
$C^{(1)}([0,t],\C^{2\times 2})$, and  
  \begin{align}
  & \partial_s u_{t,s}(k)=\ci u_{t,s}(k) h_s(k)\, .
  \end{align}

Given matrices $u_{t,s}(k)$ as above, let us define
\begin{align}\label{eq:defgts}
v_{t,s}(k) :=  u_{t,s}(k) w_s(k)  u_{t,s}(k)^*\, ,
\quad 0\le s\le t\, ,\ k\in \T^d\, .
\end{align}
By (\ref{eq:defCtr}) and (\ref{eq:defgts}) then for any $t>0$ and $0<s<t$
\begin{align}\label{eq:vevol}
& \partial_s v_{t,s}(k) =  u_{t,s}(k) \left(\ci h_s(k) w_s(k) 
 + \partial_s w_s(k) - \ci w_s(k) h_s(k) \right)   u_{t,s}(k)^*
\nonumber \\ & \quad 
= u_{t,s}(k) \Ccolltr [w_s](k) u_{t,s}(k)^*\nonumber \\ & \quad 
= g_{t,s}(k) - b_{t,s}(k)v_{t,s}(k) - v_{t,s}(k)b_{t,s}(k)^*\, .
\end{align}
where on the last step we have used the unitarity of $u_{t,s}(k)$ and
introduced the shorthand notations 
\begin{align}
 &  g_{t,s}(k) :=  u_{t,s}(k)\Cgain [\trunc{w_s}](k)u_{t,s}(k)^*\, ,
 \label{eq:ut}\\
 & b_{t,s}(k) :=  u_{t,s}(k)\Closs [\trunc{w_s}](k)u_{t,s}(k)^*\, . 
\end{align}
By Lemma \ref{th:posgain}, here $g_{t,s}(k)\ge 0$.  Fix for the moment $t>0$
and $k\in \T^d$.  Since the map $s\mapsto b_{t,s}(k)$ belongs to
$C([0,t],\C^{2\times 2})$, there is is unique solution $F\in
C^{(1)}([0,t],\C^{2\times 2})$, $0\le s\le t$, to the matrix equation
$\partial_s F(s)= -F(s)b_{t,t-s}(k)$, with initial data $F(0)=1$.  (The
solution can be obtained by a time-ordered exponential, similarly to
$u_{t,s}(k)$; more details about matrix equations of this type can be found
for instance in \cite{Bellman70}.)  Then $\partial_s F(s)^*= -b_{t,t-s}(k)^*
F(s)^*$ and (\ref{eq:vevol}) implies that 
\begin{align}
& \partial_s (F(t-s) v_{t,s}(k) F(t-s)^*) = F(t-s) g_{t,s}(k) F(t-s)^*\,  .
\end{align}
Since $v_{t,t}(k)=w_t(k)$ and $v_{t,0}(k)=u_{t,0}(k) w_0(k)  u_{t,0}(k)^*$, we
find that 
\begin{align}
& w_t(k) = (F(t)u_{t,0}(k)) W_0(k)  (F(t)u_{t,0}(k))^* + \int_0
^t\!\mathrm{d}s\, F(t-s) g_{t,s}(k) F(t-s)^*\, . 
\end{align}
As mentioned above, here  $g_{t,s}(k)\ge 0$ for all $s$ and, since by
assumption $W_0(k) \ge 0$, the above formula shows that $w_t(k)\ge 0$.  Since
$t$ and $k$ were arbitrary, we can conclude that $w\ge 0$ if $W_0\ge 0$. 

This shows that the truncated time-evolution preserves positivity which was
the missing part from the well-posedness result.  Therefore, we can now also
conclude that, if   $W_0\in \Xphys$, then  
$w_t\in \Xphys$ for all $t$, and $W_t:=w_t$ provides a solution to the
original evolution equation.  Integrating (\ref{eq:vevol}) over $s$ then
yields 
\begin{align}
& W_t(k) = u_{t,0}(k) W_0(k)  u_{t,0}(k)^* + \int_0 ^t\!\mathrm{d}s\, 
u_{t,s}(k) \Ccolltr [W_s](k) u_{t,s}(k)^* \, .
\end{align}
Since $\Phi[W_s]=W_s$, here 
$\Ccolltr [W_s](k) = \Ccoll [W_s](k)$, and $u_{t,s}(k)$ satisfies 
$u_{t,t}(k)=1$ and $\partial_s u_{t,s}(k)=\ci u_{t,s}(k) \Heff^\vep[W_s](k)$.
Therefore, the second paragraph of the Theorem holds with the choice
$U^\vep_{t,s}(k;W_0):=u_{t,s}(k)$. 

It only remains to prove that the above solution also preserves energy and
spin.  For this, consider some $W_0\in \Xphys$, and let $w$ denote the
corresponding solution satisfying $\partial_t w_t(k_1) = \Ctot^\vep
[w_t](k_1)$ for all $t>0$, $k_1\in \T^d$.   
Then $w_t(k)$ satisfies (\ref{eq:fpeqn}) for all $t,k$ and this, together with
the uniform bounds in (\ref{eq:bd}), allows using Fubini's theorem in the
definitions of the conserved quantities.  Therefore, it is sufficient to check
that for all $W\in \Xphys$ 
\begin{align}\label{eq:consgoals}
  & \int_{\T^d}\!\rmd k_1\, \omega(k_1) \trace \Ctot^\vep [W](k_1)=0\, , \qquad
  \int_{\T^d}\!\rmd k_1\, \Ctot^\vep [W](k_1)=0 \, .
\end{align}

Let us begin with the first, scalar valued case, implying conservation of
energy.  First, by cyclicity of trace, $\trace \Ctot^\vep [W](k_1)=\trace
\Ccoll [W](k_1)$.  Since $W$ is continuous, we can evaluate the integral over
$\nu_{k_1}$ by using the formula (\ref{eq:numaindef}) where, to avoid
confusion, let us denote the new regularizing variable by $\vep_0$ instead of
$\vep$.  As shown in the proof of Proposition \ref{th:defC1}, the resulting
integral is uniformly bounded in $k_1$ and $\vep_0$, hence dominated
convergence can be applied to prove that $\int_{\T^d}\!\rmd k_1\, \omega(k_1)
\trace \Ctot^\vep [W](k_1)$ is equal to the $\vep_0\to 0^+$ limit of the
integral 
\begin{align}\label{eq:encons1}
& \int_{(\mathbb{T}^{d})^4}\!\rmd^4 k\, 
 \delta(\underline{k}) \frac{\vep_0}{\vep_0^2+\underline{\omega}^2} \omega_1
\nonumber \\  & \quad 
 \times  \tr\left( \tilde{W}_1 W_3 J[\tilde{W}_2 W_4]
+ J[ W_4 \tilde{W}_2] W_3 \tilde{W}_1 
 -W_1\tilde{W}_3 J[W_2\tilde{W}_4]
- J[\tilde{W}_4 W_2]\tilde{W}_3  W_1 \right) \, .
\end{align}
The trace-factor on the second line changes sign if we relabel the integration
variables by $k_1\leftrightarrow k_3$ and $k_2\leftrightarrow k_4$, and it is
invariant under the relabelling $k_1\leftrightarrow k_2$, $k_3\leftrightarrow
k_4$; these properties can be proven by using the cyclicity of the trace and
the definition of $J$ in (\ref{eq:defJ}).  As both relabellings leave
$|\underline{\omega}|$ and $|\underline{k}|$ invariant, by first taking the
average over the first swap and then the average of performing the second swap
to the result, we find that the value of (\ref{eq:encons1}) does not change if
we replace the factor $\omega_1$ by
$(\omega_1-\omega_3+\omega_2-\omega_4)/4=\underline{\omega}/4$ there.
However, then we can retrace the steps above, and produce an integral over
$\nu_{k_1}$ which contains a factor $\underline{\omega}$.  By item
\ref{it:fomint} in Proposition \ref{th:defC1}, the value of such an integral
is zero.  Hence we have shown that the first equality in (\ref{eq:consgoals})  
holds.

Let us then consider the second, matrix equality, in (\ref{eq:consgoals}).  We
use the split $\Ctot^\vep =\Ccoll + \Cdisp^\vep$ and show independently that
both of the resulting two terms are zero.  It is easier to check the
conservative term using the following integral representation, analogous to
the dissipative term:  
\begin{align}
& -\Ckint \delta(\underline{k}) 
 \frac{\underline{\omega}}{\underline{\omega}^2+\vep^2}
\nonumber \\  & \qquad 
\times 
  \left( \tilde{W}_1 W_3 J[\tilde{W}_2 W_4]
- J[ W_4 \tilde{W}_2] W_3 \tilde{W}_1 
 -W_1\tilde{W}_3 J[W_2\tilde{W}_4]
+ J[\tilde{W}_4 W_2]\tilde{W}_3  W_1 \right) 
\nonumber \\  & \quad 
= \Ckint \delta(\underline{k}) 
 \frac{\underline{\omega}}{\underline{\omega}^2+\vep^2}
  \left( W_3 \tilde{W}_2 W_4 - W_4 \tilde{W}_2 W_3 \right)
\nonumber \\  & \qquad 
- \Ckint \delta(\underline{k}) 
  \frac{\underline{\omega}}{\underline{\omega}^2+\vep^2}
\nonumber \\  & \qquad \quad
\times 
  \left( (J[ W_4 \tilde{W}_2] W_3 +J[\tilde{W}_4 W_2]\tilde{W}_3)W_1 
  - W_1 ( W_3 J[\tilde{W}_2 W_4]+\tilde{W}_3 J[W_2\tilde{W}_4])\right) 
\nonumber \\  & \quad 
=  [\Heff^\vep[W](k_1),W(k_1)]
\nonumber \\  & \quad 
= \ci \Cdisp^\vep[W](k_1)
\, ,
\end{align}
where in the second equality we have used the definition (\ref{eq:defHeffeps})
and the property that any term, whose integrand is antisymmetric under the
swap $k_3\leftrightarrow k_4$, evaluates to zero. 
Then we integrate the equality over $k_1$, use Fubini's theorem, and take an
average over the result from the swap $k_1\leftrightarrow k_3$,
$k_2\leftrightarrow k_4$, yielding 
\begin{align}\label{eq:dispint}
  & -\ci\int_{\T^d}\!\rmd k_1\, \Cdisp^\vep [W](k_1) \nonumber \\  & \quad
 = \frac{1}{2}\int_{(\mathbb{T}^{d})^4}\!\rmd^4 k\, 
 \delta(\underline{k}) \frac{\underline{\omega}}{\underline{\omega}^2+\vep^2}
\nonumber \\  & \qquad 
 \times  \Bigl( \tilde{W}_1 W_3 J[\tilde{W}_2 W_4]
- J[ W_4 \tilde{W}_2] W_3 \tilde{W}_1 
 -W_1\tilde{W}_3 J[W_2\tilde{W}_4]
+ J[\tilde{W}_4 W_2]\tilde{W}_3  W_1 
\nonumber \\  & \qquad \quad
-\tilde{W}_3 W_1 J[\tilde{W}_4 W_2]
+ J[ W_2 \tilde{W}_4] W_1 \tilde{W}_3
+ W_3\tilde{W}_1 J[W_4\tilde{W}_2]
- J[\tilde{W}_2 W_4]\tilde{W}_1  W_3
\Bigr) 
\, .
\end{align}
If we expand the definitions of $J$ in the integrand, all terms containing a
trace cancel out.  The remaining integrand is antisymmetric under the swap  
$k_1\leftrightarrow k_2$, $k_3\leftrightarrow k_4$, hence evaluates to zero.  

We have thus proven that $\int_{\T^d}\!\rmd k_1\, \Cdisp^\vep [W](k_1) =0 $.
The proof of $\int_{\T^d}\!\rmd k_1\, \Ccoll^\vep [W](k_1)=0$ follows by first
expressing the integral as a limit of terms analogous to (\ref{eq:encons1})
and then performing the swaps as in (\ref{eq:dispint}); we skip the details of
the computation here. 
This concludes the proof of the conservation laws, and thus also of the Theorem.
\end{proof}

\section{$L^2$-continuity of the collision operator}
\label{sec:pv}

In this section, we consider the regularized effective Hamiltonian starting
from (\ref{eq:defHeff2}), 
\begin{align}
& \Heff ^\vep[W](k_1) = -\frac{1}{2} \Ckint  \delta(\underline{k})\, 
\frac{\underline{\omega}}{\underline{\omega}^2 + \vep^2} \,
\Bigl( 2 \tr(W_2 \tilde{W}_4 + \tilde{W}_4 W_2 ) 1
\nonumber \\  & \qquad
+ 2 \tr(W_4-W_2) W_3 
+(W_2-W_4) W_3 + W_3 (W_2-W_4) - W_2 \tilde{W}_4 - \tilde{W}_4 W_2\Bigr)  \, ,
\end{align}
where $\underline{\omega}:= \omega(k_1)+\omega(k_2)-\omega(k_3)-\omega(k_4)$. 
By suitably integrating out the convolution $\delta$-function (i.e., by a
suitable change of integration variables) 
we can write its components as a finite sum of integrals of the same form.
Explicitly, define for  $k_1,k'_1,k'_2\in  \T^d$, 
\begin{align}
& \sigma\casen{2} :=(1,-1,-1,1)\, , \qquad
{k}\casen{2}:=(k_1,k'_1-k'_2-k_1,k'_1,-k'_2)\, , \\
&  \sigma\casen{3} :=(1,1,-1,-1)\, , \qquad
{k}\casen{3}:=(k_1,k'_1,k_1+k'_1-k'_2,k'_2)\, , \\
&  \sigma\casen{4} :=(1,1,-1,-1)\, , \qquad
{k}\casen{4}:=(k_1,k'_1,k'_2,k_1+k'_1-k'_2)\, , 
\end{align}
and recall the definition of $\omt(k;\sigma)$ in (\ref{eq:defomsig}).
Then, after performing the change of variables as listed in the definition of
$\vc{k}\casen{i}$, we find that $\underline{\omega}\to \omt((\sigma\casen{i}_2
k_1,k'_1,k'_2); \sigma\casen{i})$ inside the integrand for all $i=2,3,4$,
using the reflection invariance of $\omega$.  This shows that any component of
$\Heff^\vep [W](k_1)$ is a linear combination, with coefficients $\pm
\frac{1}{2}$, of a finite number of terms of the form 
\begin{align}
 \mathcal{I}^1_\vep[f,g](k'_0;\sigma) 
 :=  \int_{(\T^d)^2} \rmd k'_1 \rmd k'_2 \,f(k'_1) g(k'_2) 
  \frac{\omt }{\omt ^2+\vep^2}
\end{align}
where $\omt  = \omt((k'_0,k'_1,k'_2); \sigma)$ and in each term
$f(k):=W(k)_{ij}$, for some indices $i,j$, and $g$ is one of the following
three choices: $g=1$, $g(k)=W(k)_{i'j'}$, or $g(k)=W(-k)_{i'j'}$, for some
indices $i',j'$.  In addition, for each term there is an $i\in \set{2,3,4}$
such that $\sigma = \sigma\casen{i}$ and $k'_0 = \sigma_2 k_1$. 
Therefore, for each of the terms, $\norm{\mathcal{I}^1_\vep[f,g]}_2$ is equal
to the $L^2$-norm of the term (taken over $k_1$), and both of the functions
$f,g$ belong to $L^2(\T^d)$. 

The following results show that the assumptions
(DR\ref{it:DR1})--(DR\ref{it:DRpv}) suffice to make also the full collision
operator ``nicely'' continuous in $\Lherm$-norm.  For this, we also consider
the following approximate collision integrals with possibly discontinuous
input, 
\begin{align}
  \mathcal{I}^0_\vep[f,g](k'_0;\sigma) 
 := \int_{(\T^d)^2} \rmd k'_1 \rmd k'_2 \,f(k'_1) g(k'_2) 
  \frac{\vep}{\omt ^2+\vep^2} \, .
\end{align}

\begin{proposition}\label{th:I1conv}
  Assume that $\omega$ satisfies (DR\ref{it:DR1})--(DR\ref{it:DRpv}), and
  $\sigma\in \set{-1,1}^4$ is given.   Then for any  
 $f,g \in  L^\infty(\T^d)$ there are 
  $\mathcal{I}^1_0,\mathcal{I}^0_0\in L^2(\T^d)$ such that 
  $\mathcal{I}^0_\vep[f,g] \to  \mathcal{I}^0_0$,  $\mathcal{I}^1_\vep[f,g]
  \to  \mathcal{I}^1_0$ in $L^2$-norm as $\vep\to 0^+$.   For both $j=0,1$ and
  all $\vep\ge 0$, $\mathcal{I}^j_\vep$ is independent of the choice of
  representatives of $f,g$, and 
$\norm{\mathcal{I}^j_\vep}_{L^2} \le C \norm{f}_{L^2} \norm{g}_{L^\infty}$ 
 with $C:=  \frac{1}{2}C_{\mathcal{G}}^{1/4}$.  There is also a continuous
 function $u:[0,\infty)\to \R_+$ with $u(0)=0$ such that
 $\norm{\mathcal{I}^j_{\vep'}-\mathcal{I}^j_\vep}_{L^2} \le
 u(\max(\vep',\vep)) \norm{f}_{L^2} \norm{g}_{L^\infty}$ for every
 $\vep',\vep>0$, $f,g \in  L^\infty(\T^d)$, $j=0,1$.

 In addition,  $\mathcal{I}^0_0\in L^\infty(\T^d)$ with
 $\norm{\mathcal{I}^0_0}_\infty \le \pi \norm{\totcoll(\cdot,0;\sigma)}_\infty
 \norm{f}_\infty \norm{g}_\infty$, and 
 using the shorthand $\omt(k'_0,k'_1,k'_2)  := \omt((k'_0,k'_1,k'_2); \sigma)$,
  \begin{align}\label{eq:I1rep}
    \underset{\vep\to 0}{L^2{-}\lim} \int_{(\T^d)^2} \rmd k'_1 \rmd k'_2 \, 
 \frac{\cf(|\omt(\cdot,k'_1,k'_2)|>\vep)}{\omt(\cdot,k'_1,k'_2)} 
 f(k'_1) g(k'_2)
     = \mathcal{I}^1_0(\cdot) \, .
  \end{align}
  
  Suppose that $f_\vep,g_\vep \in L^\infty(\T^d)$, $0<\vep<1$, are such that
  $m:=\sup_\vep\max(\norm{f_\vep}_\infty,\norm{g_\vep}_\infty) <\infty$, and
  $f_\vep\to f$, $g_\vep \to g$  in $L^2$-norm as $\vep\to 0^+$.  Then $f,g\in
  L^\infty$ and  
  $\norm{\mathcal{I}^j_\vep[f,g]-\mathcal{I}^j_{\vep}[f_\vep,g_\vep]}_{L^2}\le
  C_{\mathcal{G}}^{1/4}\frac{m}{2}
  (\norm{f-f_\vep}_{L^2}+\norm{g-g_\vep}_{L^2})$ for $\vep>0$ and $j=0,1$.  In
  particular,  
  $\mathcal{I}^0_\vep[f_\vep,g_\vep] \to  \mathcal{I}^0_0[f,g]$ and
  $\mathcal{I}^1_\vep[f_\vep,g_\vep] \to  \mathcal{I}^1_0[f,g]$ in $L^2$-norm
  as $\vep\to 0^+$.  
\end{proposition}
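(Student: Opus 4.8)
The plan is to reduce everything to the oscillatory-integral bound (DR\ref{it:DRpv}). Both regularizing kernels have explicit, absolutely convergent Fourier representations: with $\varphi^0_\vep(x):=\vep/(x^2+\vep^2)$ and $\varphi^1_\vep(x):=x/(x^2+\vep^2)$ one has $\widehat{\varphi^0_\vep}(s)=\tfrac12\rme^{-\vep|s|}$ (as already used in the proof of Proposition~\ref{th:defC1}) and, e.g.\ from the identity $\varphi^1_\vep(x)+\ci\varphi^0_\vep(x)=(x-\ci\vep)^{-1}$, $\widehat{\varphi^1_\vep}(s)=-\tfrac{\ci}{2}\sign(s)\rme^{-\vep|s|}$, so that $\varphi^j_\vep(x)=\int_\R\rmd s\,\widehat{\varphi^j_\vep}(s)\rme^{\ci sx}$ pointwise, with the crucial uniform bound $|\widehat{\varphi^j_\vep}(s)|=\tfrac12\rme^{-\vep|s|}\le\tfrac12$ for $j=0,1$. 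For $f,g\in L^\infty(\T^d)$, Fubini's theorem (justified since $\widehat{\varphi^j_\vep}\in L^1(\R)$ and $\T^d$ has finite measure) gives
\begin{align*}
\mathcal{I}^j_\vep[f,g](k'_0;\sigma)=\int_\R\rmd s\,\widehat{\varphi^j_\vep}(s)\int_{(\T^d)^2}\rmd k'_1\rmd k'_2\,f(k'_1)g(k'_2)\,\rme^{\ci s\,\omt((k'_0,k'_1,k'_2);\sigma)}\, ,
\end{align*}
from which independence of the choice of representatives of $f,g$ is immediate (changing $f$ or $g$ on a null set does not change a Lebesgue integral), first for $\vep>0$ and then, by passing to the $L^2$-limit, for $\vep=0$ as well.

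\textbf{The $L^2$-estimate.}
The core of the proof is the bound $\norm{\mathcal{I}^j_\vep[f,g]}_{L^2}\le C\norm{f}_{L^2}\norm{g}_{L^\infty}$, which I would obtain by estimating the fourth power $\norm{\mathcal{I}^j_\vep[f,g]}_{L^2}^4$. Expanding $\norm{\mathcal{I}^j_\vep[f,g]}_{L^2}^2=\int_{\T^d}\rmd k'_0\,|\mathcal{I}^j_\vep[f,g](k'_0)|^2$ with the representation above and squaring once more produces four frequency variables $s_1,\dots,s_4$ together with an oscillatory integral over torus variables. After (i) absorbing the convolution argument $k'_0+k'_1-k'_2$ using the group law on $\T^d$, (ii) using the reflection invariance $\omega(-k)=\omega(k)$ from (DR\ref{it:DR1}) to fix all signs inside the $\omega$-arguments, and (iii) applying Cauchy--Schwarz first in the variables carrying $f$ (which yields $\norm{f}_{L^2}^4$, since $f$ enters only through a Hilbert--Schmidt-type operator) and then in the variables carrying $g$, combined with the crude bound $|g|\le\norm{g}_\infty$ in the remaining $g$-slots, the surviving torus integral becomes exactly $\mathcal{G}(s;\sigma')$ for a suitably sign-adjusted $\sigma'$, with the four $\omega$-combinations appearing there matching $\Omega_1,\dots,\Omega_4$. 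Bounding $\prod_{i=1}^4|\widehat{\varphi^j_\vep}(s_i)|\le 2^{-4}$ and invoking $\int_{\R^4}\rmd s\,|\mathcal{G}(s;\sigma')|\le C_{\mathcal G}$ then gives $\norm{\mathcal{I}^j_\vep[f,g]}_{L^2}^4\le\tfrac1{16}C_{\mathcal G}\norm{f}_{L^2}^4\norm{g}_\infty^4$, i.e.\ the stated constant $C=\tfrac12 C_{\mathcal G}^{1/4}$. Nothing here used $j$ beyond $|\widehat{\varphi^j_\vep}|\le\tfrac12$, so it covers both $j=0,1$; and since, once the reflection invariance is used, the two arguments play structurally symmetric roles, the same computation also yields $\norm{\mathcal{I}^j_\vep[f,g]}_{L^2}\le C\norm{g}_{L^2}\norm{f}_{L^\infty}$, which is what the last claim needs.

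\textbf{Differences, convergence, and the $L^\infty$ and sharp-cutoff statements.}
Running exactly the same bookkeeping for $\mathcal{I}^j_{\vep'}[f,g]-\mathcal{I}^j_\vep[f,g]$, each of the four frequency factors is either $\widehat{\varphi^j_\vep}(s_i)$ or a difference $\widehat{\varphi^j_{\vep'}}(s_i)-\widehat{\varphi^j_\vep}(s_i)$, with at least one difference present; since $|\widehat{\varphi^j_{\vep'}}(s)-\widehat{\varphi^j_\vep}(s)|=\tfrac12|\rme^{-\vep'|s|}-\rme^{-\vep|s|}|\le\tfrac12$ one gets $\norm{\mathcal{I}^j_{\vep'}[f,g]-\mathcal{I}^j_\vep[f,g]}_{L^2}^4\le\tfrac1{16}\norm{f}_{L^2}^4\norm{g}_\infty^4\int_{\R^4}\rmd s\,\rho_{\vep,\vep'}(s)|\mathcal{G}(s;\sigma')|$ with $0\le\rho_{\vep,\vep'}\le1$ and $\rho_{\vep,\vep'}(s)\to0$ as $\max(\vep,\vep')\to0$; as $\mathcal{G}\in L^1(\R^4)$, dominated convergence shows the right-hand side tends to $0$, and $u(\delta):=\big(\tfrac1{16}\sup_{0<\vep,\vep'\le\delta}\int\rho_{\vep,\vep'}|\mathcal{G}|\big)^{1/4}$, replaced by a continuous non-decreasing majorant with the same limit at $0$, is the required modulus of continuity. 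This makes $\vep\mapsto\mathcal{I}^j_\vep[f,g]$ Cauchy in $L^2$ as $\vep\to0^+$, hence convergent to some $\mathcal{I}^j_0\in L^2$, and all inequalities pass to the limit. For $\mathcal{I}^0_0$: since $\varphi^0_\vep\ge0$, $|\mathcal{I}^0_\vep[f,g](k'_0)|\le\pi\norm{f}_\infty\norm{g}_\infty\int\tfrac{\rmd k'_1\rmd k'_2}{\pi}\varphi^0_\vep(\omt((k'_0,k'_1,k'_2);\sigma))$, and the last integral converges to $\totcoll(k'_0,0;\sigma)$ by Proposition~\ref{th:defC1}; passing to an a.e.-convergent subsequence of the $L^2$-convergent family gives $|\mathcal{I}^0_0(k'_0)|\le\pi\,\totcoll(k'_0,0;\sigma)\norm{f}_\infty\norm{g}_\infty$ a.e. For~(\ref{eq:I1rep}), note that $\psi_\vep(x):=\varphi^1_\vep(x)-\cf(|x|>\vep)/x$ lies in $L^1(\R)\cap L^\infty(\R)$ (it equals $x/(x^2+\vep^2)$ for $|x|\le\vep$ and $-\vep^2/(x(x^2+\vep^2))$ for $|x|>\vep$) with $\widehat{\psi_\vep}(s)=-\tfrac{\ci}{2}\sign(s)\big(\rme^{-\vep|s|}-1+\tfrac2\pi\mathrm{Si}(\vep|s|)\big)$, uniformly bounded in $\vep,s$ and tending to $0$ pointwise; running the estimate of the previous paragraph with $\psi_\vep$ in place of $\varphi^1_\vep$ shows its contribution vanishes in $L^2$ as $\vep\to0^+$, so the sharp-cutoff integral has the same $L^2$-limit $\mathcal{I}^1_0$.

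\textbf{Perturbed inputs and the main obstacle.}
If $f_\vep\to f$, $g_\vep\to g$ in $L^2$ with a common $L^\infty$-bound $m$, an a.e.-convergent subsequence forces $f,g\in L^\infty$ with norms at most $m$; writing $\mathcal{I}^j_\vep[f,g]-\mathcal{I}^j_\vep[f_\vep,g_\vep]=\mathcal{I}^j_\vep[f-f_\vep,g]+\mathcal{I}^j_\vep[f_\vep,g-g_\vep]$ and applying the two forms of the $L^2$-estimate (second argument in $L^\infty$ for the first term, first argument in $L^\infty$ for the second) gives the bound $C_{\mathcal G}^{1/4}\tfrac m2(\norm{f-f_\vep}_{L^2}+\norm{g-g_\vep}_{L^2})$, and combining with $\mathcal{I}^j_\vep[f,g]\to\mathcal{I}^j_0[f,g]$ yields $\mathcal{I}^j_\vep[f_\vep,g_\vep]\to\mathcal{I}^j_0[f,g]$ in $L^2$. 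The genuinely delicate step, and the reason (DR\ref{it:DRpv}) is phrased exactly as it is, is the reduction in the second paragraph: one must keep the four frequency parameters coupled to the torus integrals until the very last step and track precisely which torus variables get paired by the two applications of Cauchy--Schwarz, so that the residual six-variable oscillatory integral is identified with $\mathcal{G}(s;\sigma')$ rather than with a smaller, non-integrable object — any premature use of the pointwise bound $|\varphi^1_\vep(x)|\le 1/|x|$ fails, since that bound is only logarithmically integrable against Lebesgue measure near the (measure-zero, by Corollary~\ref{th:omtlevelsets}, but no better) level sets of $\omt$.
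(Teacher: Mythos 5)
Most of your proof follows the paper's route: the core estimate is exactly the paper's Lemma \ref{th:l2linfbound} (two applications of Cauchy--Schwarz reducing $\norm{\mathcal{I}^j_\vep[f,g]}_{L^2}^4$ to $\int_{\R^4}\rmd s\,\prod_i|\FT{\varphi}^j_\vep(\pm s_i)|\,|\mathcal{G}(s;\sigma)|$), the difference bound and modulus $u$ come from applying the same estimate to $\varphi^j_\vep-\varphi^j_{\vep'}$, the $L^\infty$ bound on $\mathcal{I}^0_0$ uses an a.e.-convergent subsequence against $\totcoll$, and the perturbed-input bound uses bilinearity plus the swapped form of the estimate (your appeal to $\omega(-k)=\omega(k)$ to move the $L^2$ slot is the same variable swap the paper performs). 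All of that is sound.

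There is, however, a genuine gap in your proof of (\ref{eq:I1rep}). You set $\psi_\vep(x):=\varphi^1_\vep(x)-\cf(|x|>\vep)/x$ and propose to ``run the estimate of the previous paragraph with $\psi_\vep$ in place of $\varphi^1_\vep$.'' That estimate requires $\FT{\psi}_\vep\in L^1(\R)$: the identification of the residual six-fold torus integral with $\int_{\R^4}\rmd s\,\prod\FT{\psi}_\vep(\pm s_i)\mathcal{G}(s;\sigma)$ rests on Fubini over $\R^4\times(\T^d)^6$, and the representation $\psi_\vep(x)=\int\rmd s\,\FT{\psi}_\vep(s)\rme^{\ci s x}$ must converge absolutely. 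But your own formula $\FT{\psi}_\vep(s)=-\frac{\ci}{2}\sign(s)\bigl(\rme^{-\vep|s|}-1+\frac{2}{\pi}\mathrm{Si}(\vep|s|)\bigr)$ decays only like $|\cos(\vep s)|/(\vep|s|)$ as $|s|\to\infty$ (the kernel $\psi_\vep$ has jump discontinuities of size $1/\vep$ at $x=\pm\vep$), so $\FT{\psi}_\vep\notin L^1(\R)$ and the lemma does not apply to the pair $(\psi_\vep,\FT{\psi}_\vep)$. Boundedness of $\FT{\psi}_\vep$ alone gives the right final constant but does not justify the interchange. The paper's proof inserts an extra mollifier $\rme^{-\delta|s|}$ precisely to repair this: it applies the $\mathcal{G}$-estimate to the mollified kernel $\varphi^2_{\vep,\delta}$ (whose transform $\propto\rme^{-\delta|s|}h_\vep(s)$ \emph{is} in $L^1$), and then removes $\delta$ by noting that the mollified kernel converges to the sharp one pointwise except on the level sets $\set{|\omt|=\vep}$, which have Lebesgue measure zero by Corollary \ref{th:omtlevelsets}. (The paper also compares the sharp cutoff at scale $\vep$ with $\mathcal{I}^1_{\vep^2}$ rather than $\mathcal{I}^1_\vep$, so that the ``outer'' part of the difference is trivially $O(\vep)$ in sup-norm; only the ``inner'' part $\cf(|x|\le\vep)\,x/(x^2+\vep^4)$ needs the mollification argument.) Your argument becomes correct once this $\delta$-regularization and its removal via Corollary \ref{th:omtlevelsets} are added; as written, the step fails.
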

The key result connecting the limits to the assumption (DR\ref{it:DRpv}) is
given by the following Lemma. 
\begin{lemma}\label{th:l2linfbound}
  Assume that $\omega$ is continuous and satisfies (DR\ref{it:DRpv}), and
  $\sigma\in \set{-1,1}^4$, $f,g \in  L^\infty(\T^d)$ are given.  Suppose
  $\varphi,\FT{\varphi}:\R \to \C$ are such that $\FT{\varphi} \in L^1\cap
  L^\infty$ and $\varphi(x) = \int_{-\infty}^\infty \!\rmd s\, \FT{\varphi}(s)
  \rme^{\ci s x}$ for all $x\in \R$.  Define for  $k_1\in \T^d$ 
  \begin{align}
  \mathcal{I}[f,g](k_1) := \int_{(\T^d)^2} \rmd k_2 \rmd k_3 \,f(k_2) g(k_3) 
  \varphi(\omt(k;\sigma))\, .
\end{align}
  Then $\mathcal{I}[f,g]$ is independent of the choice of the representatives
  for $f,g$ and 
  $\norm{\mathcal{I}[f,g]}_{L^2}\le C_0\norm{f}_{L^2} \norm{g}_{L^\infty}$ with 
  $C_0^4 := \int_{\R^4} \rmd s \prod_{i=1}^4 |\FT{\varphi}((-1)^{i-1} s_i)| 
  \left|\mathcal{G}(s;\sigma)\right|\le \norm{\FT{\varphi}}_\infty^4 
  C_{\mathcal{G}} <\infty$.
\end{lemma}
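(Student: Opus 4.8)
The plan is to peel off the $L^\infty$-- and $L^2$--factors by two successive applications of the Cauchy--Schwarz inequality, which leaves a Hilbert--Schmidt norm of an auxiliary kernel on $\T^d$, and then to recognise the square of that Hilbert--Schmidt norm as exactly the oscillatory integral controlled by (DR\ref{it:DRpv}). First the routine preliminaries: since $\FT{\varphi}\in L^1$, the stated integral representation shows that $\varphi$ is bounded (in fact continuous), with $\norm{\varphi}_\infty\le\norm{\FT{\varphi}}_{L^1}$; hence for every $k_1$ the integrand $f(k_2)g(k_3)\varphi(\omt((k_1,k_2,k_3);\sigma))$ is bounded and jointly measurable, so $\mathcal{I}[f,g](k_1)$ is well defined and, by Fubini, measurable in $k_1$. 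It does not depend on the choice of representatives of $f,g$, because altering $f$ or $g$ on Lebesgue--null sets alters the integrand, for each fixed $k_1$, only on a Lebesgue--null subset of $(\T^d)^2$.

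Next, write $\mathcal{I}[f,g](k_1)=\int_{\T^d}\rmd k_3\,g(k_3)\,G(k_1,k_3)$, where $G(k_1,k_3):=\int_{\T^d}\rmd k_2\,f(k_2)\,\varphi(\omt((k_1,k_2,k_3);\sigma))$. Cauchy--Schwarz in $k_3$, together with $\int_{\T^d}\rmd k_3\,|g(k_3)|^2\le\norm{g}_\infty^2$, gives $\norm{\mathcal{I}[f,g]}_{L^2}^2\le\norm{g}_\infty^2\,P$, where $P:=\int_{(\T^d)^2}\rmd k_1\rmd k_3\,|G(k_1,k_3)|^2$. Expanding $|G|^2$ and doing the $(k_1,k_3)$--integration first (legitimate, as the integrand is bounded on a finite measure space) yields $P=\int_{(\T^d)^2}\rmd k_2\rmd k_2'\,N(k_2,k_2')\,f(k_2)\overline{f(k_2')}$ with $N(k_2,k_2'):=\int_{(\T^d)^2}\rmd k_1\rmd k_3\,\varphi(\omt((k_1,k_2,k_3);\sigma))\,\overline{\varphi(\omt((k_1,k_2',k_3);\sigma))}$, a kernel bounded by $\norm{\varphi}_\infty^2$. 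A second Cauchy--Schwarz, now in the $(k_2,k_2')$--integration, gives $P\le\norm{N}_{HS}\,\norm{f}_{L^2}^2$, where $\norm{N}_{HS}^2:=\int_{(\T^d)^2}\rmd k_2\rmd k_2'\,|N(k_2,k_2')|^2$; since $P\ge0$ this genuinely bounds $P$.

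It remains to estimate $\norm{N}_{HS}^2$. Writing $|N(k_2,k_2')|^2=N(k_2,k_2')\,\overline{N(k_2,k_2')}$ and relabelling the integration variables in the conjugate factor $k_1\to k_1'$, $k_3\to k_3'$, the six torus variables fall into place so that
\[
\norm{N}_{HS}^2=\int_{(\T^d)^3\times(\T^d)^3}\rmd^3k\,\rmd^3k'\;\varphi(\Omega_1)\,\overline{\varphi(\Omega_2)}\,\varphi(\Omega_3)\,\overline{\varphi(\Omega_4)}\,,
\]
with $k=(k_1,k_2,k_3)$, $k'=(k_1',k_2',k_3')$ and $\Omega_i=\Omega_i(k,k';\sigma)$ the functions defined just before (DR\ref{it:DRpv}). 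Inserting $\varphi(x)=\int_\R\rmd s\,\FT{\varphi}(s)\rme^{\ci sx}$ and $\overline{\varphi(x)}=\int_\R\rmd s\,\overline{\FT{\varphi}(-s)}\rme^{\ci sx}$, and noting that $\FT{\varphi}\in L^1$ and $\T^d$ has finite measure so the integrand over $\R^4\times(\T^d)^6$ is absolutely integrable, Fubini gives $\norm{N}_{HS}^2=\int_{\R^4}\rmd s\,\FT{\varphi}(s_1)\overline{\FT{\varphi}(-s_2)}\FT{\varphi}(s_3)\overline{\FT{\varphi}(-s_4)}\,\mathcal{G}(s;\sigma)$. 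Estimating the four factors by their absolute values yields $\norm{N}_{HS}^2\le C_0^4$ with $C_0^4=\int_{\R^4}\rmd s\,\prod_{i=1}^4|\FT{\varphi}((-1)^{i-1}s_i)|\,|\mathcal{G}(s;\sigma)|\le\norm{\FT{\varphi}}_\infty^4\,C_{\mathcal{G}}<\infty$ by (DR\ref{it:DRpv}). Chaining the three inequalities gives $\norm{\mathcal{I}[f,g]}_{L^2}^2\le\norm{g}_\infty^2\,C_0^2\,\norm{f}_{L^2}^2$, as claimed.

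The one step requiring care is the identification of $\norm{N}_{HS}^2$ with the $\mathcal{G}$--integral: one must check that, after expanding $|N|^2$, the six torus variables really permute into the precise pattern built into $\Omega_1,\dots,\Omega_4$ (with $k_2$ and $k_2'$ the variables each occurring in two distinct $\Omega_i$, and $k_1,k_3$ resp.\ $k_1',k_3'$ occurring together), and that the two complex conjugations land on $\Omega_2$ and $\Omega_4$, producing exactly the sign alternation $(-1)^{i-1}$ in the definition of $C_0$. Everything else — boundedness and continuity of $\varphi$, the two Cauchy--Schwarz estimates, and the single Fubini swap justified by $\FT{\varphi}\in L^1$ — is entirely routine.
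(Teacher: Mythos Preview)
Your proof is correct and follows essentially the same route as the paper's: pull out $\norm{g}_\infty$ via Cauchy--Schwarz in $k_3$, expand the remaining $L^2$ norm as a bilinear form in $f$ with kernel $N(k_2,k_2')$, apply Cauchy--Schwarz again to reduce to $\norm{N}_{HS}$, and then identify $\norm{N}_{HS}^2$ with the $\mathcal{G}$-integral by inserting the Fourier representation of $\varphi$ and applying Fubini. The only cosmetic difference is that the paper first bounds $|g|\le\norm{g}_\infty$ and then applies Cauchy--Schwarz against the constant $1$, whereas you apply Cauchy--Schwarz directly to $g$; both yield the identical quantity $P$.
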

\begin{proof}
Let us drop $\sigma$ from the notation, and denote $\omt (k) := \omt(k;\sigma)$.
It follows from the assumptions that $\varphi$ is continuous, and thus the
integral in the definition of $\mathcal{I}[f,g]$ is always convergent, and its
value remains invariant if $f$ and $g$ are changed in a set of Lebesgue
measure zero.   We use Fubini's theorem to integrate over $k_3$ first, which
shows that 
\begin{align}
   \norm{\mathcal{I}[f,g]}_{L^2}^2 \le \norm{g}_\infty^2 
    \int_{\T^d} \rmd k_1 \left(
    \int_{\T^d} \rmd k_3 \left| \int_{\T^d} \rmd k_2 \,f(k_2) 
      \varphi(\omt (k)) \right| \right)^2\, .
\end{align}
By the Cauchy-Schwarz inequality and the normalization $\int_{\T^d} \rmd k=1$,
the remaining integral is bounded by 
\begin{align}
  \int_{(\T^d)^2}\! \rmd k_1 \rmd k_3
   \left| \int_{\T^d} \rmd k_2 \,f(k_2) \varphi(\omt (k)) \right|^2\, .
\end{align}
By Fubini's theorem, this is equal to
\begin{align}
  \int_{(\T^d)^2}\! \rmd k_2 \rmd k'_2 \,f(k'_2)^* f(k_2)
    \int_{(\T^d)^2}\! \rmd k_1 \rmd k_3 \, \varphi(\omt ) \varphi(\omt ')^*\, ,
\end{align}
where $\omt ':= \omt(k_1,k'_2,k_3;\sigma)$.  Thus it is bounded by
$\norm{f}_{L^2}^2$ times the square root of 
\begin{align}\label{eq:keyconst}
  \int_{(\T^d)^2}\! \rmd k_2 \rmd k'_2 \left|
    \int_{(\T^d)^2}\! \rmd k_1 \rmd k_3 \, 
 \varphi(\omt ) \varphi(\omt ')^*\right|^2\, .
\end{align}
Therefore, we can conclude that the Lemma holds if we can prove that
(\ref{eq:keyconst}) can be bounded by $C_0^4$. 
 
By assumption, we have $\varphi(x) = \int_{-\infty}^\infty \!\rmd s\,
\FT{\varphi}(s) \rme^{\ci s x}$ 
where $\FT{\varphi}\in L^1$.   Therefore, by Fubini's theorem 
  \begin{align}
  \int_{\R^4}\! \rmd s \, \FT{\varphi}(s_1) \FT{\varphi}(-s_2)^* 
  \FT{\varphi}(s_3) \FT{\varphi}(-s_4)^* \mathcal{G}(s;\sigma)
  = \int_{(\T^d)^3\times (\T^d)^3} \!\rmd^3 k' \rmd^3 k\, 
  \varphi(\Omega_1) \varphi(\Omega_2)^*\varphi(\Omega_3)\varphi(\Omega_4)^*\, ,
\end{align}
where we have used the same notations as in (DR\ref{it:DRpv}). However, a
second application of Fubini's theorem shows that the right hand side here is
equal to (\ref{eq:keyconst}).  This implies that (\ref{eq:keyconst}) is
bounded by $C_0^4$ which concludes the proof of the Lemma. 
\end{proof}

\begin{proofof}{Proposition \ref{th:I1conv}}
Recall the definition of $\varphi^0_\vep$ and $\FT{\varphi}^0_\vep$ in
(\ref{eq:defvarphi0}), and set also  
$\varphi^1_\vep(x):= \frac{x}{x^2+\vep^2}$ and
$\FT{\varphi}^1_\vep(x) := \frac{1}{2 \ci} \sign(x) \rme^{-\vep |x|}$, 
$x\in \R$, $\vep>0$, where $\sign(x)$ denotes the sign of $x$ which we choose
to be $0$ if $x=0$.   By direct integration, we find that both of the pairs
$(\varphi^0_\vep,\FT{\varphi}^0_\vep)$ and
$(\varphi^1_\vep,\FT{\varphi}^1_\vep)$ satisfy the assumptions of Lemma
\ref{th:l2linfbound} for any $\vep>0$. 
This immediately implies that for any $j=0,1$, $\vep>0$,
$\mathcal{I}^j_\vep[f,g]$ is independent of the choice of representatives and 
$\norm{\mathcal{I}^j_\vep[f,g]}_{L^2}\le
\frac{1}{2}C_{\mathcal{G}}^{1/4}\norm{f}_{L^2} \norm{g}_\infty<\infty$.
Obviously, these properties then also hold for any $L^2$-limit points. 

In fact, then also the pair $(\varphi,\FT{\varphi})$ satisfies assumptions of
the Lemma, if $j=0,1$, $0<\vep<\vep_0$, and we define  
$\varphi(x) := \varphi^j_\vep(x)-\varphi^j_{\vep_0}(x)$ and $\FT{\varphi}(x)
:= \FT{\varphi}^j_\vep(x)-\FT{\varphi}^j_{\vep_0}(x)$ for $x\in \R$.  Since
then  
$|\FT{\varphi}(s)|= \frac{1}{2} \rme^{-\vep |s|}
(1-\rme^{-(\vep_0-\vep)|s|})\le \frac{1}{2} (1-\rme^{-\vep_0|s|})$, the Lemma
implies that  
\begin{align}\label{eq:cauchybound}
 & \norm{\mathcal{I}^j_\vep[f,g]-\mathcal{I}^j_{\vep_0}[f,g]}_{L^2} 
  \le u(\vep_0) \norm{f}_{L^2} \norm{g}_\infty \, , 
\end{align}
where $u:[0,\infty)\to \R_+$ is defined by 
\begin{align}\label{eq:defubnd}
  u(\vep_0) := \frac{1}{2}
 \left(\int_{\R^4} \rmd s \prod_{i=1}^4 \left|1-\rme^{-\vep_0|s_i|}\right| 
  \left|\mathcal{G}(s;\sigma)\right|\right)^{\!\!\frac{1}{4}}\, .
\end{align}
Clearly, $u(0)=0$, and dominated convergence theorem can be applied here to
prove that $u$ is continuous. 
This proves the last statement in the first part of the Proposition.

We can conclude that 
$\norm{\mathcal{I}^j_\vep[f,g]-\mathcal{I}^j_{\vep'}[f,g]}_{L^2}\to 0$ when
$\max(\vep,\vep')\to 0^+$.
Therefore, if $\vep_n>0$, $n\in \N$, and $\vep_n\to 0$, then
$\mathcal{I}^j_{\vep_n}[f,g]$ is a Cauchy sequence in $L^2$.  Thus the
sequence converges in $L^2$, and using (\ref{eq:cauchybound}) it is
straightforward to check that the limit is independent of the choice of the
sequence $(\vep_n)$.  Let the unique limit point be denoted by
$\mathcal{I}^j_0[f,g]$.  Then the first two statements of the Proposition
hold.  In addition,   
taking  $\vep\to 0$ in (\ref{eq:cauchybound}) also shows that 
for any $\vep_0>0$ and both $j=0,1$
\begin{align}
  \norm{\mathcal{I}^j_0[f,g]-\mathcal{I}^j_{\vep_0}[f,g]}_{L^2} \le 
   u(\vep_0)\norm{f}_{L^2} \norm{g}_\infty 
  \, .
\end{align}

By the $L^2$-convergence we can find a sequence $\vep_n>0$, $n\in N$, such
that $\vep_n\to 0$ and  
$\mathcal{I}^0_{\vep_n}[f,g](k_0)\to \mathcal{I}^0_{0}(k_0)$ for almost every
$k_0$.  But since then 
\begin{align}
  |\mathcal{I}^0_{\vep_n}[f,g](k_0)|\le \norm{f}_\infty \norm{g}_\infty 
\int_{(\T^d)^2} \rmd k'_1 \rmd k'_2 \, 
\frac{\vep_n}{\omt ^2+\vep_n ^2} \overset{n\to\infty}{\longrightarrow} \pi
\totcoll(k_0,\sigma) \norm{f}_\infty \norm{g}_\infty\, , 
\end{align}
this implies also $|\mathcal{I}^0_{0}(k_0)|\le \pi \norm{\totcoll}_{\infty}
\norm{f}_\infty \norm{g}_\infty$. 
As the complement of such $k_0$ has Lebesgue measure zero, we have proven the
claim made about $\mathcal{I}^0_{0}$ in the Proposition. 

To prove (\ref{eq:I1rep}), consider
\begin{align}
\mathcal{J}^1_\vep(k_0) := \int_{(\T^d)^2} \rmd k_1 \rmd k_2 \, 
\frac{\cf(|\omt(k_0,k_1,k_2)|>\vep)}{\omt(k_0,k_1,k_2)} f(k_1) g(k_2)\, .  
\end{align}
Obviously, $|\mathcal{J}^1_\vep(k_0)|\le \vep^{-1}
\norm{f}_\infty\norm{g}_\infty$, and thus $\mathcal{J}^1_\vep\in L^2$.  We
claim that $\norm{\mathcal{J}^1_\vep-\mathcal{I}^1_{\vep^2}}_{L^2} \to 0$ as
$\vep\to 0^+$, which implies that $\mathcal{J}^1_\vep \to \mathcal{I}^1_{0}$
in $L^2$.  For any $x\in \R$, 
\begin{align}
  \frac{x}{x^2+\vep^4} - \frac{\cf(|x|>\vep)}{x} = 
  -\cf(|x|>\vep)\frac{\vep^2}{x} \frac{\vep^2}{x^2+\vep^4} 
  +\cf(|x|\le \vep)\frac{x}{x^2+\vep^4}\, .
\end{align}
The first term is bounded by $\vep^3/(x^2+\vep^4)$, which shows that 
\begin{align}
  |\mathcal{J}^1_\vep(k_0)-\mathcal{I}^1_{\vep^2}(k_0)-\mathcal{J}^2_\vep(k_0)|\le
  \vep  \norm{f}_\infty\norm{g}_\infty \int_{(\T^d)^2} 
 \rmd k_1 \rmd k_2 \,\frac{\vep^2}{\omt^2+\vep^4}
  \le \vep  \norm{f}_\infty\norm{g}_\infty  C'\, ,
\end{align}
where $C'$ is the constant introduced in the proof of Proposition
\ref{th:defC1}, and 
\begin{align}
\mathcal{J}^2_\vep(k_0) := \int_{(\T^d)^2} \rmd k_1 \rmd k_2 \, 
 \cf(|\omt|\le \vep)
\frac{\omt}{\omt^2+\vep^4} f(k_1) g(k_2)\, .  
\end{align}
Therefore, it suffices to prove that 
$\norm{\mathcal{J}^2_\vep}_{L^2}\to 0$ as $\vep\to 0^+$.

For this, define for $\vep,\delta>0$ the function 
$\FT{\varphi}^2_{\vep,\delta}:\R\to\C$ by
$\FT{\varphi}^2_{\vep,\delta}(s) := \frac{1}{\pi \ci} \rme^{-\delta |s|}
h_\vep(s)$ where 
\begin{align}
  h_\vep(s) := \int_0^\vep\!\rmd \alpha \frac{\alpha}{\alpha^2+\vep^4}
  \sin(\alpha s)\, . 
\end{align}
Clearly, $h_\vep\in L^\infty(\rmd s)$ and thus
$\FT{\varphi}^2_{\vep,\delta}\in L^1\cap L^\infty$ and we can define
$\varphi^2_{\vep,\delta}(x) = \int_{-\infty}^\infty \!\rmd s\,
\FT{\varphi}^2_{\vep,\delta}(s) \rme^{\ci s x}$, $x\in \R$.  Then Lemma
\ref{th:l2linfbound} can be applied to  
\begin{align}
\mathcal{J}^2_{\vep,\delta}(k_0) := \int_{(\T^d)^2} \rmd k_1 \rmd k_2 \,
 \varphi^2_{\vep,\delta}(\omt) f(k_1) g(k_2)\, ,
\end{align}
which shows that $\norm{\mathcal{J}^2_{\vep,\delta}}_{L^2}\le
c_\vep\pi^{-1}\norm{f}_{L^2} \norm{g}_{L^\infty}$ with  
$c_\vep^4 := \int_{\R^4} \rmd s \prod_{i=1}^4
\left|h_\vep(|s_i|)\right|\left|\mathcal{G}(s;\sigma)\right|$.   
On the other hand, explicit integration yields
for any $x\in \R$
\begin{align}
&\varphi^2_{\vep,\delta}(x) = -\int_{-\infty}^\infty \frac{\rmd s}{2\pi}
\int_{-\vep}^\vep\!\rmd \alpha \frac{\alpha}{\alpha^2+\vep^4} \rme^{\ci s
  (\alpha+x)-\delta |s|} 
\nonumber \\ & \quad 
= -\int_{-\vep}^\vep\!\frac{\rmd \alpha}{2\pi} \frac{\alpha}{\alpha^2+\vep^4}
\frac{2\delta}{(\alpha+x)^2+\delta^2}
\nonumber \\ & \quad
= \int_{-\infty}^\infty\!\frac{\rmd y}{\pi} \frac{1}{1+y^2} 
\cf(|x+\delta y|\le \vep)\frac{x+\delta y}{(x+\delta y)^2+\vep^4} \, , 
\end{align}
where in the last equality we have changed variables to
$y=-(\alpha+x)/\delta$.  By dominated convergence, for any $k\in (\T^d)^3$,
such that $|\omt(k)|\ne \vep$, we have $\lim_{\delta\to 0^+}
\varphi^2_{\vep,\delta}(\omt) = \cf(|\omt|\le
\vep)\frac{\omt}{\omt^2+\vep^4}$.  By Corollary \ref{th:omtlevelsets}, the set
of $k$ with  
$\omt(k)= \pm \vep$ has Lebesgue measure zero, and hence we find
$\int_{(\T^d)^3}\!\rmd^3 k |\varphi^2_{\vep,\delta}(\omt)-\cf(|\omt|\le
\vep)\frac{\omt}{\omt^2+\vep^4}|^2\to 0$ as $\delta\to 0^+$.  This implies
that $\mathcal{J}^2_{\vep,\delta}\to \mathcal{J}^2_\vep$ in $L^2(\T^2)$, and
thus the previous bounds prove that also $\norm{\mathcal{J}^2_{\vep}}_{L^2}\le
c_\vep\pi^{-1}\norm{f}_{L^2} \norm{g}_{L^\infty}$. 

Therefore, to conclude the proof of (\ref{eq:I1rep}), we only need to show
that $c_\vep\to 0$ when $\vep\to 0^+$.  Since $|\sin y|\le |y|$, we have
$|h_\vep(s)|\le \vep s$ for any $s\ge 0$, and thus $\lim_{\vep\to
  0^+}\prod_{i=1}^4 \left|h_\vep(|s_i|)\right|= 0$ for fixed $s\in \R^4$.  Now
if we can show that $|h_\vep(s)|$ is uniformly bounded for $s\ge 0$ and
$\vep>0$, assumption (DR\ref{it:DRpv}) allows to use dominated convergence to
prove that $c_\vep\to 0$.  If $0\le s\le 2\vep^{-1}$, we have already found
that $|h_\vep(s)|\le 2$.  Assume thus $s> 2\vep^{-1}$ and set $\alpha_0 :=
\frac{\pi}{2 s}\in(0,\vep)$.  Then 
$\left|\int_0^{\alpha_0}\!\rmd \alpha \frac{\alpha}{\alpha^2+\vep^4}
  \sin(\alpha s)\right|\le s \alpha_0<2$ and thus it remains to prove that the
integral over $[\alpha_0,\vep]$ is bounded.  By partial integration and using
$\cos(\alpha_0 s)=0$ we have 
\begin{align}
 & \int_{\alpha_0}^\vep\!\rmd \alpha \frac{\alpha}{\alpha^2+\vep^4}
 \sin(\alpha s)
= -\frac{\vep}{\vep^2+\vep^4} \frac{1}{s} \cos(\vep s) +  
\frac{1}{s} \int_{\alpha_0}^\vep\!\rmd \alpha
\frac{\vep^4-\alpha^2}{(\alpha^2+\vep^4)^2} \cos(\alpha s)  
\, .
\end{align}
Here the first term is bounded by $\frac{1}{2}$ and the second by $\frac{1}{s}
\int_{\alpha_0}^\vep\!\rmd \alpha\, \alpha^{-2}\le \frac{1}{s\alpha_0}<1$.  We
have shown that always $|h_\vep(s)|\le 4$ which implies $c_\vep\to 0$ and
concludes the proof of (\ref{eq:I1rep}). 

To prove the second part of the Proposition, assume that $f_\vep,g_\vep \in
L^2(\T^d)$, $0<\vep<1$, are such that there is $m<\infty$ for which
$\norm{f_\vep}_\infty, \norm{g_\vep}_\infty \le m$, and $f_\vep\to f$, $g_\vep
\to g$  in $L^2$-norm as $\vep\to 0^+$.  Since then there is a sequence
$\vep_n\to 0^+$ such that $f_{\vep_n}\to f$, $g_{\vep_n} \to g$ pointwise
almost everywhere, this implies that also $\norm{f}_\infty, \norm{g}_\infty
\le m$, and thus the previous results can be applied. 
Telescoping 
$f(k_2) g(k_3)-f_\vep(k_2) g_\vep(k_3) = (f(k_2) -f_\vep(k_2)) g(k_3) +
f_\vep(k_2) (g(k_3)- g_\vep(k_3))$ and swapping the integration variables in
the second term we can resort to the above bounds and obtain 
$\norm{\mathcal{I}^j_\vep[f,g]-\mathcal{I}^j_{\vep}[f_\vep,g_\vep]}_{L^2}\le
C_{\mathcal{G}}^{1/4}\frac{m}{2}
(\norm{f-f_\vep}_{L^2}+\norm{g-g_\vep}_{L^2})$ for $\vep>0$.  This implies
also $\mathcal{I}^j_{\vep}[f_\vep,g_\vep] \to \mathcal{I}^j_0[f,g]$, as
claimed in the Proposition. 
\end{proofof}

\begin{corollary}\label{th:l2pv}
 Assume that $\omega$ satisfies (DR\ref{it:DR1})--(DR\ref{it:DRpv}) and $W\in
 \Lphys$.  Then there is a unique $\Heff[W]\in \Lherm$ such that
 $\Heff^\vep[W]\to \Heff[W]$ in norm as $\vep\to 0^+$.   
 Moreover, there is a sequence $\vep_n\to 0^+$ such that (\ref{eq:L2pv}) holds
 for almost every $k_1$. 
 
 In addition, there is a constant $C$ such that for any $W,W'\in \Lphys$ and
 $\vep,\vep'>0$ 
 \begin{align}
   &\norm{\Heff^\vep[W]-\Heff^{\vep'}[W']}_2\le 
  C(\norm{W'-W}_2 + u(\max(\vep',\vep)))\, , \label{eq:Heffepsbound}\\
   &\norm{\Heff[W]-\Heff[W']}_2\le C\norm{W'-W}_2 \, , \label{eq:Heffbound}
 \end{align}
 where $u$ denotes a function satisfying the conclusion of Proposition
 \ref{th:I1conv}. 
\end{corollary}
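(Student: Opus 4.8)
The plan is to read off everything from the decomposition of $\Heff^\vep$ recorded at the start of Section~\ref{sec:pv} together with Proposition~\ref{th:I1conv}, which already contains all the analytic substance. Recall that, after integrating out the convolution $\delta$-function, each matrix component of $\Heff^\vep[W](k_1)$ is a linear combination, with coefficients $\pm\tfrac12$, of a fixed finite number $N$ of terms of the form $\mathcal{I}^1_\vep[f,g](\sigma_2 k_1;\sigma)$, where $\sigma$ ranges over $\{\sigma\casen{2},\sigma\casen{3},\sigma\casen{4}\}$, $f(k)=W(k)_{ij}$ for some entry indices, and $g\in\{1,\ W(\cdot)_{i'j'},\ W(-\cdot)_{i'j'}\}$; crucially $N$ and the set of $\sigma$'s depend only on the $2\times2$ matrix structure, not on $W$. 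Since $W\in\Lphys$ forces $\norm{W_{ij}}_\infty\le1$ and $\norm{\tilde W_{ij}}_\infty\le1$ for all entries, each such $f,g$ obeys $\norm{f}_{L^2}\le1$, $\norm{g}_\infty\le1$; and since $k\mapsto-k$ and $k_1\mapsto\sigma_2 k_1$ are measure-preserving on $\T^d$, the $\Lherm$-norm of any expression built from these terms is controlled by the $L^2(\T^d)$-norms of the individual $\mathcal{I}^1_\vep$'s supplied by Proposition~\ref{th:I1conv}. I will also use that $\Heff^\vep[W]$ is Hermitian-valued a.e.\ (manifest from \eqref{eq:defHeff2}), so it lies in $\Lherm$, and that $\Lherm$ is closed in $L^2(\T^d,\C^{2\times2})$, so any $\Lherm$-limit stays in $\Lherm$.

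First I would establish existence of $\Heff[W]$. Let $u$ be the maximum over the finitely many relevant $\sigma$ of the continuous functions with $u(0)=0$ furnished by Proposition~\ref{th:I1conv}. Applying its Cauchy-type bound term by term and summing gives $\norm{\Heff^\vep[W]-\Heff^{\vep'}[W]}_2\le\tfrac N2\,u(\max(\vep,\vep'))$, so $(\Heff^\vep[W])_{\vep>0}$ is Cauchy in $\Lherm$ as $\vep\to0^+$; define $\Heff[W]$ to be the (unique) limit, which lies in $\Lherm$.

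Next I would treat the $W$-dependence. For $W,W'\in\Lphys$, decompose $\Heff^\vep[W]$ and $\Heff^\vep[W']$ into their matching families of $\mathcal{I}^1_\vep$-terms. For a corresponding pair the functions differ by entries of $W-W'$ (using $\tilde W-\tilde W'=W'-W$ and the reflection invariance of the $L^2$-norm), so $\norm{f-f'}_{L^2},\norm{g-g'}_{L^2}\le\norm{W-W'}_2$, since entries of a difference of $\C^{2\times2}$ matrices are controlled by the Hilbert--Schmidt norm. The stability estimate of Proposition~\ref{th:I1conv} with $m\le1$ then gives $\norm{\mathcal{I}^1_\vep[f,g]-\mathcal{I}^1_\vep[f',g']}_{L^2}\le C_{\mathcal G}^{1/4}\norm{W-W'}_2$; summing over the $N$ terms yields $\norm{\Heff^\vep[W]-\Heff^\vep[W']}_2\le C_2\norm{W-W'}_2$ with $C_2:=\tfrac N2 C_{\mathcal G}^{1/4}$. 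Inserting $\Heff^{\vep'}[W]$ and using the triangle inequality combines this with the first step to give \eqref{eq:Heffepsbound} with $C:=\max(C_2,\tfrac N2)$; letting $\vep,\vep'\to0^+$ and using $\Heff^\vep[W]\to\Heff[W]$, $\Heff^{\vep'}[W']\to\Heff[W']$ and $u\to0$ gives \eqref{eq:Heffbound}.

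Finally, for the pointwise identity \eqref{eq:L2pv}: by \eqref{eq:I1rep} each term's sharp-cutoff integral $\int_{(\T^d)^2}\rmd k'_1\rmd k'_2\,\tfrac{\cf(|\omt|>\vep)}{\omt}fg$ converges in $L^2$ to $\mathcal{I}^1_0[f,g]$; since $L^2$-convergence gives pointwise convergence along a subsequence and there are only finitely many terms, I would extract one sequence $\vep_n\to0^+$ along which all of them converge a.e. After the change of variables used in Section~\ref{sec:pv} the factor $\cf(|\underline\omega|\ge\vep_n)/\underline\omega$ becomes exactly $\cf(|\omt|\ge\vep_n)/\omt$, and the difference between $\ge$ and $>$ is immaterial because $\{\omt=\vep_n\}$ is Lebesgue null by Corollary~\ref{th:omtlevelsets} (and the reflections $k_1\mapsto\sigma_2 k_1$ preserve a.e.\ convergence); assembling the same $\pm\tfrac12$-combination of the a.e.\ limits that defines $\Heff[W]$ then yields \eqref{eq:L2pv} for a.e.\ $k_1$. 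I do not expect a genuine obstacle here: the only care needed is the bookkeeping of the first paragraph---checking that $N$, the set of $\sigma$'s, and all constants (and $u$) come out independent of $W$, which is immediate from $0\le W\le1$---together with the routine subsequence extraction and the measure-zero technicality handled by Corollary~\ref{th:omtlevelsets}.
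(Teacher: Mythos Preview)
Your proposal is correct and follows essentially the same route as the paper: decompose each matrix entry of $\Heff^\vep[W]$ into finitely many $\mathcal{I}^1_\vep[f,g]$-terms, then read off existence, the Cauchy bound in $\vep$, the Lipschitz bound in $W$, and the principal value representation all from Proposition~\ref{th:I1conv}. One small imprecision: when you invoke the ``stability estimate of Proposition~\ref{th:I1conv}'' to bound $\norm{\mathcal{I}^1_\vep[f,g]-\mathcal{I}^1_\vep[f',g']}_{L^2}$, that statement is literally formulated for families $f_\vep\to f$, $g_\vep\to g$, which is not the situation here; the paper instead redoes the telescoping $fg-f'g'=(f-f')g+f'(g-g')$ explicitly and swaps the integration variables in the second term so that the difference sits in the first slot, then applies the basic bound $\norm{\mathcal{I}^1_\vep[\cdot,\cdot]}_{L^2}\le \tfrac12 C_{\mathcal G}^{1/4}\norm{\cdot}_{L^2}\norm{\cdot}_{L^\infty}$. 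Since this is exactly how that stability estimate is proved inside Proposition~\ref{th:I1conv}, your argument goes through once you make this step explicit.
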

\begin{proof}
  As explained in the beginning of this section, any component of
  $\Heff^\vep[W]$ can be expressed as a finite linear combination of suitably
  chosen $\mathcal{I}^1_{\vep}[f,g]$-terms.  More precisely, it is
  straightforward to check that there is a finite index set $S$ such that for
  any $i,j\in \set{1,2}$ and $n\in S$ we can find a constant $c\in \R$, with
  $|c|\le 1$, $\ell \in \set{2,3,4}$,  
  $a,a',b,b'\in \set{1,2}$, and $\ell'\in \set{0,1,2}$ such that for all $k$
  and $W$ 
 \begin{align}\label{eq:heffvep}
   \Heff^\vep[W](k)_{ij} = 
  \sum_{n\in S} c \mathcal{I}^1_{\vep}[f_{a'b'},g^{(\ell')}_{ab}]
    (\sigma\casen{\ell}_2 k;\sigma\casen{\ell})\, ,
 \end{align}
 where $f_{a'b'}(k):= W(k)_{a'b'}$, $g^{(0)}_{ab}(k):= 1$, $g^{(1)}_{ab}(k):=
 W(k)_{ab}$, and $g^{(2)}_{ab}(k):= W(-k)_{ab}$.  Note that  
 the constants inside the sum can depend on all of $n,i,j$, even though we
 have suppressed the dependence from the notation. 
 
If $k$ is such that $0\le W(k)\le 1$, then $|W(k)_{ij}|\le 1$ for all $i,j$.
Therefore, in (\ref{eq:heffvep}) always $f,g\in L^\infty(\T^d)$ with
$\norm{f}_\infty,\norm{g}_\infty\le 1$.  Thus we can apply Proposition
\ref{th:I1conv} and conclude that $\Heff^\vep[W]_{ij}\to \Heff[W]_{ij}$ in
$L^2(\T^d)$ as $\vep\to 0^+$, where 
\begin{align}\label{eq:heffdef}
   \Heff[W](k)_{ij} := \sum_{n\in S} c \mathcal{I}^1_{0}[f_{a'b'},g^{(\ell')}_{ab}]
    (\sigma\casen{\ell}_2 k;\sigma\casen{\ell})\, .
 \end{align}
Let $\Heff[W](k)$ denote the matrix collecting the above limit functions.   
Then we can find a sequence $\vep_n\to 0$ such that
$\Heff^{\vep_n}[W](k)_{ij}\to \Heff[W](k)_{ij}$ for all $i,j$ and almost every
$k$. 
As $\Heff^\vep[W]_{ij}(k)^* = \Heff^\vep[W]_{ji}(k)$ for every $k$, this
implies that $\Heff[W](k)$ is then Hermitian at almost every $k$.  In
addition, 
$\norm{\Heff^\vep[W]-\Heff[W]}_2^2= \sum_{i,j}
\norm{\Heff^\vep[W]_{ij}-\Heff[W]_{ij}}_{L^2}^2$, 
and therefore $\Heff[W]\in \Lherm$ and $\Heff^\vep[W]\to \Heff[W]$ as $\vep\to
0$.  Since the index set $S$ is finite, the representation of the limit as a
standard principal value integral, (\ref{eq:L2pv}), follows then from an
application of (\ref{eq:I1rep}) to (\ref{eq:heffdef}). 

Suppose then that $W,W'\in \Lphys$ and $\vep,\vep'>0$.  Then 
$\norm{\Heff^\vep[W]-\Heff^{\vep'}[W']}_2 \le  
\norm{\Heff^\vep[W]-\Heff^{\vep}[W']}_2+
\norm{\Heff^{\vep}[W']-\Heff^{\vep'}[W']}_2$. 
Let $u$ be a function as in Proposition \ref{th:I1conv}.  We can conclude that
$\norm{\Heff^{\vep}[W']_{ij}-\Heff^{\vep'}[W']_{ij}}_2\le |S|
u(\max(\vep,\vep')) \norm{W'}_2$ for any $i,j$. 
Here $\norm{W'}_2^2\le 2$, since $W'\in \Lphys$, and thus 
$\norm{\Heff^{\vep}[W']-\Heff^{\vep'}[W']}_2\le 4 |S| u(\max(\vep,\vep'))$.

To study $\norm{\Heff^\vep[W]-\Heff^{\vep}[W']}_2$ write $\Delta:= W-W'$ and
note that 
then $f_{ab}[W]=f_{ab}[W']+f_{ab}[\Delta]$,
$g^{(\ell')}_{ab}[W]=g^{(\ell')}_{ab}[W']+g^{(\ell')}_{ab}[\Delta]$, for
$\ell'=1,2$, and $g^{(0)}_{ab}[W]=g^{(0)}_{ab}[W']$.  Since 
$\mathcal{I}^1_{\vep}[f,g]$ is obviously linear in both $f$ and $g$, this
shows that  
\begin{align}
   & \Heff^\vep[W](k)_{ij}-\Heff^\vep[W'](k)_{ij}
   = \sum_{n\in S} c \mathcal{I}^1_{\vep}[\Delta_{a'b'},g^{(\ell')}_{ab}[W]]
    (\sigma\casen{\ell}_2 k;\sigma\casen{\ell})
    \nonumber \\ & \quad
     + \sum_{n\in S} c \cf(\ell'\ne 0)
 \mathcal{I}^1_{\vep}[W'_{a'b'},g^{(\ell')}_{ab}[\Delta]]
    (\sigma\casen{\ell}_2 k;\sigma\casen{\ell})
     \, .
 \end{align}
Consider one term in the second sum here.  If $\ell'=0$, then the term is
zero. 
If $\ell'=1$, we swap the integration variables, and if $\ell'=2$, we change
the integration variables to $k''_1=-k'_2$ and $k''_2=-k'_1$.  This shows that
each term in the second sum is either zero or of the form  
$c \mathcal{I}^1_{\vep}[\Delta_{ab},g^{(\ell')}_{a'b'}[W']](\pm k;\sigma')$ 
for some choice of the sign and $\sigma'$.  Applying Proposition
\ref{th:I1conv}, this shows that  
$\norm{\Heff^\vep[W](k)_{ij}-\Heff^\vep[W'](k)_{ij}}_{L_2} \le |S|
C_{\mathcal{G}}^{1/4} \norm{\Delta}_2$. 
Therefore, $\norm{\Heff^\vep[W]-\Heff^\vep[W']}_{2} \le 2 |S|^{1/2}
C_{\mathcal{G}}^{1/4} \norm{W-W'}_2$.

Thus by choosing $C$ as larger of the constants in these two bounds, we can
conclude that (\ref{eq:Heffepsbound}) holds. 
Then taking $\vep,\vep'\to 0$ in (\ref{eq:Heffepsbound}) proves also
(\ref{eq:Heffbound}) and concludes the proof of the corollary. 
\end{proof}

Finally, let us show that adding the assumption (DR\ref{it:DRpv}) simplifies
the definition of the dissipative term. 
\begin{corollary}\label{th:simpleccoll}
 Assume that $\omega$ satisfies (DR\ref{it:DR1})--(DR\ref{it:DRpv}).  Then for
 any $w_i \in L^\infty(\T^d)$, $i=1,2,3$, we have 
 \begin{align}\label{eq:c0lim2}
& \mathcal{C}_0[w_1,w_2,w_3](k_1) = \lim_{\vep\to 0^+} \int_{(\mathbb{T}^{d})^3}\!
 \frac{\rmd k_2\rmd k_3 \rmd k_4}{\pi} \delta(\underline{k})\,
  \frac{\vep}{\vep^2+\underline{\omega}^2} w_1(k_2) w_2(k_3) w_3(k_4)\, ,
\end{align}
where the limit converges in $L^2(\T^d)$-norm.

Therefore, then for any $W\in \Lphys$ and $i,j\in\set{1,2}$ the $L^2$-limit in
(\ref{eq:collop4}) holds. 
\end{corollary}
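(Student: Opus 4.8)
The plan is to establish the scalar identity (\ref{eq:c0lim2}) first and then read off (\ref{eq:collop4}) from it. Writing $\varphi^0_\vep(x):=\vep/(x^2+\vep^2)$ as in (\ref{eq:defvarphi0}) and integrating out the momentum $\delta$-function, the right-hand side of (\ref{eq:c0lim2}) becomes
\[
  G_\vep[w_1,w_2,w_3](k_1):=\int_{(\T^d)^2}\frac{\rmd k_2\,\rmd k_3}{\pi}\,
  w_1(k_2)w_2(k_3)w_3(k_1{+}k_2{-}k_3)\,\varphi^0_\vep(\underline\omega)\, ,
\]
with $\underline\omega=\omt((k_1,k_2,k_3);(1,1,-1,-1))$, and the goal is to prove $G_\vep[w_1,w_2,w_3]\to\mathcal{C}_0[w_1,w_2,w_3]$ in $L^2$ as $\vep\to0^+$. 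For continuous $w_i$ this is immediate: $F(k_2,k_3,k_4):=w_1(k_2)w_2(k_3)w_3(k_4)$ lies in $C((\T^d)^3)$, so the defining formula (\ref{eq:numaindef}) for $\nu_{k_1}=\nu_{k_1,0,(1,1,-1,-1)}$ combined with (\ref{eq:C0contdef}) gives $G_\vep[w_1,w_2,w_3](k_1)\to\mathcal{C}_0[w_1,w_2,w_3](k_1)$ for every $k_1$, while the uniform bound $C_{k_1,0,\vep}\le C'$ established in the proof of Proposition \ref{th:defC1} yields $|G_\vep[w_1,w_2,w_3](k_1)|\le C'\prod_i\norm{w_i}_\infty$; dominated convergence on the finite-measure torus then upgrades the pointwise convergence to $L^2$-convergence.

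The core of the argument is a uniform-in-$\vep$ estimate: there is $C$ depending only on $\omega$ such that for all $v_i,v_i'\in L^\infty(\T^d)$ and all $\vep>0$,
\[
  \norm{G_\vep[v_1,v_2,v_3]-G_\vep[v_1',v_2',v_3']}_{L^2}
  \le C\,m^2\sum_{i=1}^{3}\norm{v_i-v_i'}_{L^2},\qquad
  m:=\max_i(\norm{v_i}_\infty,\norm{v_i'}_\infty).
\]
Via the telescoping $v_1v_2v_3-v_1'v_2'v_3'=(v_1{-}v_1')v_2v_3+v_1'(v_2{-}v_2')v_3+v_1'v_2'(v_3{-}v_3')$ this reduces to bounding $\norm{G_\vep}_{L^2}$ when one slot carries an $L^2$-function $h$ and the other two carry $L^\infty$-functions, and here $\norm{G_\vep}_{L^2}\le C\norm{h}_{L^2}$ times the product of the other two sup-norms. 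In each of the three cases I would bound one of the two $L^\infty$-factors in sup-norm — the factor at argument $k_1{+}k_2{-}k_3$ when $h$ sits at $k_2$ or $k_3$, and the factor at $k_3$ when $h$ sits at the convolution argument — and then change the pair of integration variables, together with a reflection $k\mapsto-k$ (harmless by (DR\ref{it:DR1})), so that the surviving double integral takes the form $\mathcal{I}[f,g](k_1)=\int_{(\T^d)^2}f(\ell_1)g(\ell_2)\,\varphi^0_\vep(\omt((k_1,\ell_1,\ell_2);\sigma'))\,\rmd\ell_1\rmd\ell_2$ with $f$ the (reflected) $L^2$-factor in the second slot and $g$ the (reflected) $L^\infty$-factor, for a suitable $\sigma'\in\{-1,1\}^4$ (the first case gives $\sigma'=(1,1,-1,-1)$, the other two give $\sigma'=(1,-1,1,-1)$). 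Lemma \ref{th:l2linfbound}, applied to the pair $(\varphi^0_\vep,\FT\varphi^0_\vep)$ of (\ref{eq:defvarphi0}) which satisfies its hypotheses, then gives $\norm{\mathcal{I}[f,g]}_{L^2}\le\tfrac12 C_{\mathcal{G}}^{1/4}\norm{f}_{L^2}\norm{g}_\infty$, uniformly in $\vep$ since $\norm{\FT\varphi^0_\vep}_\infty=\tfrac12$ and since the constant $C_{\mathcal{G}}$ of (DR\ref{it:DRpv}) is a maximum over all $\sigma$.

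With these two ingredients the general case follows by approximation. Given arbitrary $w_i\in L^\infty(\T^d)$, pick continuous $w_{i,n}$ with $|w_{i,n}|\le\norm{w_i}_\infty$ and $w_{i,n}\to w_i$ in $L^2$ (Lusin plus dominated convergence, as in the proof of Lemma \ref{th:l2colloper0}). Then $\mathcal{C}_0[w_{1,n},w_{2,n},w_{3,n}]\to\mathcal{C}_0[w_1,w_2,w_3]$ in $L^2$ by (\ref{eq:C0L2cont}); $\norm{G_\vep[w_1,w_2,w_3]-G_\vep[w_{1,n},w_{2,n},w_{3,n}]}_{L^2}$ is small uniformly in $\vep$ for large $n$ by the estimate above; and for each fixed $n$, $G_\vep[w_{1,n},w_{2,n},w_{3,n}]\to\mathcal{C}_0[w_{1,n},w_{2,n},w_{3,n}]$ in $L^2$ by the continuous case. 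Splitting the error into these three pieces gives (\ref{eq:c0lim2}). Finally, substituting (\ref{eq:c0lim2}) into the componentwise formula (\ref{eq:collop2b}), using that multiplication by the fixed $L^\infty$-entries $W_{ij},\tilde{W}_{ij}$ is bounded on $L^2$, and recollecting the finite sums inside the integral into matrix products exactly as in the proof of Corollary \ref{th:l2colloper}, delivers (\ref{eq:collop4}) with the limit taken in $\Lherm$-norm.

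The step I expect to be the main obstacle is the uniform $L^2$-Lipschitz bound, and within it the ``convolution-argument'' slot, where the perturbed factor does not sit at a free integration variable; the device that removes this difficulty is the substitution $k_3\mapsto k_1{+}k_2{-}k_3$ combined with the reflection $k\mapsto-k$, which restores the standard kernel $\mathcal{I}[f,g]$ already controlled, through Lemma \ref{th:l2linfbound}, by the oscillatory-integral hypothesis (DR\ref{it:DRpv}).
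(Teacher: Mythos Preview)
Your proof is correct and follows essentially the same three-term approximation argument as the paper, which bounds $\norm{\mathcal{C}_0-\mathcal{C}_\vep}_2$ via $\norm{\mathcal{C}_0-\mathcal{C}_{0,n}}_2+\norm{\mathcal{C}_{0,n}-\mathcal{C}_{\vep,n}}_2+\norm{\mathcal{C}_{\vep,n}-\mathcal{C}_\vep}_2$ and handles the last term by ``a similar telescoping estimate and Proposition \ref{th:I1conv}''. You simply unpack that last step explicitly, carrying out the variable changes (swap plus reflection, reducing the convolution slot to the $k_3$-slot and then to the form covered by Lemma \ref{th:l2linfbound}) that the paper leaves implicit; invoking Lemma \ref{th:l2linfbound} directly rather than through Proposition \ref{th:I1conv} is a cosmetic difference.
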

\begin{proof}
  For the given $w_i$, set $m:= \max \norm{w_i}_\infty$, and choose a sequence
  $v_{i,n}$ as explained at the end of the proof of Lemma
  \ref{th:l2colloper0}.  In particular, then $v_{i,n}\to w_i$ and
  $\mathcal{C}_{0,n}\to \mathcal{C}_0$ 
   in $L^2$-norm as $n\to\infty$, where
   $\mathcal{C}_{0,n}:=\mathcal{C}_0[v_{1,n},v_{2,n},v_{3,n}]$ and
   $\mathcal{C}_0:= \mathcal{C}_0[w_1,w_2,w_3]$. 
  For any $\vep>0$ and $k_1$ denote the value of the integral on the right
  hand side of (\ref{eq:c0lim2}) by $\mathcal{C}_\vep(k_1)$, and define  
  $\mathcal{C}_{\vep,n}:= \mathcal{C}_\vep[v_{1,n},v_{2,n},v_{3,n}]$ analogously.
    Then the function $\mathcal{C}_\vep\in L^\infty$, and we need to prove
    that $\norm{\mathcal{C}_0-\mathcal{C}_\vep}_2\to 0$ as $\vep\to 0^+$. 

For any $n,\vep$ we can estimate $\norm{\mathcal{C}_0-\mathcal{C}_\vep}_2 \le
\norm{\mathcal{C}_0-\mathcal{C}_{0,n}}_2 +
\norm{\mathcal{C}_{0,n}-\mathcal{C}_{\vep,n}}_2 +
\norm{\mathcal{C}_{\vep,n}-\mathcal{C}_\vep}_2$.  By Lemma
\ref{th:l2colloper0}, here $\norm{\mathcal{C}_0-\mathcal{C}_{0,n}}_2 \le C m^2
\sum_{i=1}^3 \norm{w_i-v_{i,n}}_2$.  Using a similar telescoping estimate and
Proposition \ref{th:I1conv}, we can also find a constant $C'$ such that
$\norm{\mathcal{C}_{\vep,n}-\mathcal{C}_\vep}_2\le C' m^2 \sum_{i=1}^3
\norm{w_i-v_{i,n}}_2$. 
Hence, for any $\vep_0>0$ we can find $n$ such that
$\norm{\mathcal{C}_0-\mathcal{C}_\vep}_2 \le 
\frac{\vep_0}{3} + \norm{\mathcal{C}_{0,n}-\mathcal{C}_{\vep,n}}_2$.  However,  
by Proposition \ref{th:defC1}, we can use dominated converge to conclude that
$\norm{\mathcal{C}_{0,n}-\mathcal{C}_{\vep,n}}_2\to 0$ as $\vep\to 0^+$.  In
particular, there is $\vep'>0$ such that for all $0<\vep<\vep'$ it is less
than $\frac{\vep_0}{3}$.  This proves that
$\norm{\mathcal{C}_0-\mathcal{C}_\vep}_2\to 0$ as $\vep\to 0^+$. 

Suppose then that $W\in \Lphys$ and $i,j\in\set{1,2}$ and consider the
definition of $\Ccoll[W]_{ij}$ given in (\ref{eq:collop2b}).  Each of the
$\mathcal{C}_0$-terms in the finite sum can be approximated in $L^2$-norm by
the corresponding $\mathcal{C}_\vep$-integral, for some fixed $\vep>0$.  The
resulting  $\mathcal{C}_\vep$-terms then sum to the right hand side of
(\ref{eq:collop4}), and hence the limit holds in $L^2$-norm. 
\end{proof}

\begin{proofof}{Theorems \ref{th:maindefC} and \ref{th:maindefHeff}}
Proposition \ref{th:defC1} and Corollary \ref{th:l2colloper} imply the
statements in the first paragraph of Theorem \ref{th:maindefC}, as well as the
first statement of the second paragraph.  Corollary \ref{th:simpleccoll}
proves the remaining statements in the second paragraph and completes the
proof of Theorem \ref{th:maindefC}. 

Theorem \ref{th:maindefHeff} is a direct consequence of Corollary
\ref{th:l2pv}. 
\end{proofof}

\section{Proof of Theorem \ref{th:main1}}
\label{sec:provemain}

Assume now that all of the conditions (DR\ref{it:DR1})--(DR\ref{it:DRpv})
hold, and define $\Ccoll$ and $\Heff$ as in the already proven Theorems
\ref{th:maindefC} and \ref{th:maindefHeff}. 
For any representative of $W\in \Lphys$ and $0<\delta<1$ let us define the
corresponding ``locally averaged representative'' $A_\delta[W]$ by setting for
$k\in \T^d$, 
\begin{align}
  A_\delta[W](k) := \frac{1}{Z_{\delta}} \int_{\T^d} \!\rmd k' \, 
 \cf(|k'|\le \delta) W(k+k')\, ,
\end{align}
where $Z_\delta:= \int_{\T^d} \!\rmd k \, \cf(|k|\le \delta)$ is the
appropriate normalization factor.  The formula is understood as a
vector-valued integral over the compact set $\defset{k\in\T^d}{|k|\le \delta}$
in the Banach space $\Lherm$.  Since $\Lphys\subset L^\infty(\T^d,\C^{2\times
  2})$, we can use dominated converge to conclude that $A_\delta[W]$ is
continuous on $\T^d$.  Also the Fermi property is clearly preserved and thus
$A_\delta$ is a linear map from $\Lphys$ to $\Xphys$.   

We can also identify any component function $W_{ij}(k)$ with a function in
$L^1(\R^d)$ by extending it periodically to the neighboring ``cells'' and
then setting the extension to be zero in all other cells. If $k$ is a Lebesgue
point of this extension, then $A_\delta[W]_{ij}(k)\to W_{ij}(k)$ as $\delta\to
0$, cf.\  \cite[Theorem 7.10]{rudin:rca}.  Since almost every point is then a
Lebesgue point, we find that $\lim_{\delta\to 0}\norm{A_\delta[W](k)- W(k)}^2=
0$ apart possibly from a set of measure zero collecting the non-Lebesgue
points of the three independent component functions of $W$.  Since
$\norm{A_\delta[W]}_\infty \le \norm{W}_\infty <\infty$, dominated convergence
implies that $A_\delta[W]\to W$ in $\Lherm$. 

Let us then consider some allowed initial data $W_0\in \Lphys$ and some
allowed regulators $0<\vep,\delta<1$.  By Theorem \ref{th:main2}, for all
$t\ge 0$, we then have a unique solution  
$w_{t}(\cdot;\vep,\delta,W_0)\in \Xphys$ to the regularized problem with
initial data $w_{0}=A_\delta[W_0]$.  We can also find for  $k\in \T^d$ and
$0\le s\le t$ unitary matrices $u_{t,s}(k;\vep,\delta,W_0)$ such that
$u_{t,t}(k)=1$ and $s\mapsto u_{t,s}(k)$ belongs to $C^{(1)}([0,t],\C^{2\times
  2})$ with $\partial_s u_{t,s}(k)=\ci u_{t,s}(k) \Heff^\vep[w_s](k)$, and we
have for all $t\ge 0$, and $k$, 
\begin{align}\label{eq:WtwithU2}
  & w_t(k) = u_{t,0}(k) w_0(k) u_{t,0}(k)^* 
   + \int_0^t\!\rmd s\, u_{t,s}(k) \Ccoll[w_s](k) u_{t,s}(k)^*\, .
\end{align}
 
Let us then consider two allowed initial data $W'_0,W_0\in \Lphys$ and some
allowed regulators $0<\vep',\vep,\delta',\delta<1$.  Define $w$ and $u$ as
above using $\vep$, $\delta$ and $W_0$, and set also
$w'_t(k):=w_t(k;\vep',\delta',W'_0)$ and  
$u'_{t,s}(k):=u_{t,s}(k;\vep',\delta',W'_0)$.  Using (\ref{eq:WtwithU2}) for
both solutions and telescoping shows that then  
\begin{align}\label{eq:diffwt}
  & w_t(k)-w'_t(k) = u_{t,0}(k) \left( \delta w_0(k) +  
  \delta u_{t,0}(k) w'_0(k) \right) u_{t,0}(k)^* 
   - u'_{t,0}(k) w'_0(k) \delta u_{t,0}(k)  u'_{t,0}(k)^* 
   \nonumber \\ & \quad 
   + \int_0^t\!\rmd s\, u_{t,s}(k)\left( \Ccoll[w_s](k)-
  \Ccoll[w'_s](k)\right)  u_{t,s}(k)^*
    \nonumber \\ & \quad 
   + \int_0^t\!\rmd s\, u_{t,s}(k)  \delta u_{t,s}(k)  
  \Ccoll[w'_s](k) u_{t,s}(k)^* 
   - \int_0^t\!\rmd s\, u'_{t,s}(k) \Ccoll[w'_s](k) 
  \delta u_{t,s}(k)  u'_{t,s}(k)^* 
\, ,
\end{align}
where $\delta u_{t,s}(k):= 1-u_{t,s}(k)^* u'_{t,s}(k)$ and $\delta w_0(k) :=
A_\delta[W_0]- A_{\delta'}[W'_0]$.  Clearly, $\norm{U M U^*}=\norm{M}$ if $U$
is a unitary matrix.  Thus we obtain from  
(\ref{eq:diffwt}) a bound
\begin{align}\label{eq:diffwt2}
  & \norm{w_t(k)-w'_t(k)} \le \norm{\delta w_0(k)} + 2\norm{\delta u_{t,0}(k)}
  \norm{w'_0(k)} 
   \nonumber \\ & \quad 
   + \int_0^t\!\rmd s\, \norm{\Ccoll[w_s](k)-\Ccoll[w'_s](k)}
   + 2 \int_0^t\!\rmd s\, \norm{\delta u_{t,s}(k)} \norm{\Ccoll[w'_s](k)}
\, .
\end{align}
Here $\norm{w'_0(k)}= \norm{A_{\delta'}[W'_0](k)}\le
\esssup_{k'}\norm{W'_0(k')}\le 2$, since $0\le W'_0\le 1$ almost everywhere.
Therefore, we find a bound 
\begin{align}
  & \norm{w_t-w'_t}_2 \le \norm{\delta w_0}_2 + 4\norm{\delta u_{t,0}}_2
   + \int_0^t\!\rmd s\, \norm{\Ccoll[w_s]-\Ccoll[w'_s]}_2
  \nonumber \\ & \quad 
   + 2 \int_0^t\!\rmd s\, \norm{\delta
     u_{t,s}}_2 \esssup_{k}\norm{\Ccoll[w'_s](k)} 
\, .
\end{align}
Here we can apply Corollary \ref{th:l2colloper}, which implies that there is a
pure constant $C$ such that  
\begin{align}\label{eq:diffwt3}
  & \norm{w_t-w'_t}_2 \le \norm{\delta w_0}_2 + 4\norm{\delta u_{t,0}}_2
   + C \int_0^t\!\rmd s\, \norm{w_s-w'_s}_2
   + 2 C \int_0^t\!\rmd s\, \norm{\delta u_{t,s}}_2
\, .
\end{align}

For simplicity, 
denote $h_s(k):=\Heff^\vep[w_s](k)$ and $h'_s(k):=\Heff^{\vep'}[w'_s](k)$.
Now $\delta u_{t,t}(k)=0$ and $\norm{\delta u_{t,s}(k)}^2 = 2 \tr 1 - 
\tr (u_{t,s}(k) u'_{t,s}(k)^* +u'_{t,s}(k) u_{t,s}(k)^* )$.  Thus 
$\partial_s \norm{\delta u_{t,s}(k)}^2 = -\ci
\tr\left[(h_s(k)-h'_s(k))u'_{t,s}(k)^*u_{t,s}(k)\right] 
+\ci \tr\left[(h_s(k)-h'_s(k))u_{t,s}(k)^* u'_{t,s}(k)\right]$.  Applying
Cauchy-Schwarz inequality to the Hilbert-Schmidt scalar product here, we find
that $|\partial_s \norm{\delta u_{t,s}(k)}^2|\le  
\norm{u'_{t,s}(k)^*u_{t,s}(k)-u_{t,s}(k)^* u'_{t,s}(k)}
\norm{h_s(k)-h'_s(k)}\le 2 \norm{\delta u_{t,s}(k)} \norm{h_s(k)-h'_s(k)}$.
Therefore, for any $0\le r\le t$ we have 
$\norm{\delta u_{t,r}(k)}^2 = - \int_r^t\! \rmd s\, \partial_s 
\norm{\delta u_{t,s}(k)}^2 \le 
2 \int_r^t\! \rmd s\,\norm{\delta u_{t,s}(k)} \norm{h_s(k)-h'_s(k)}$.  Set
$m_t(k):=\sup_{0\le s\le t} \norm{\delta u_{t,s}(k)}$, where obviously
$m_t(k)\le 4<\infty$. It follows that $\norm{\delta u_{t,r}(k)}^2 \le 2 m_t(k)
\int_0^t\! \rmd s\,\norm{h_s(k)-h'_s(k)}$, and hence $m_t(k)^2\le 2 m_t(k)
\int_0^t\! \rmd s\,\norm{h_s(k)-h'_s(k)}$.   Thus we can conclude that for all
$0\le r\le t$
\begin{align}
  \norm{\delta u_{t,r}(k)} \le 2  \int_0^t\! \rmd s\,
  \norm{\Heff^\vep[w_s](k)-\Heff^{\vep'}[w'_s](k)}\, .
\end{align}
Therefore, by using the constant $C$ and the function $u(x)$ as in Corollary
\ref{th:l2pv}, it follows that
$\norm{\delta u_{t,r}}_2 \le 2 C  \int_0^t\! 
\rmd s\,(\norm{w_s-w'_s}_2+u(\max(\vep,\vep')))$.  Applied in
(\ref{eq:diffwt3}), 
this shows that there is a pure constant $C'$ such that 
\begin{align}\label{eq:diffwt4}
  & \norm{w_t-w'_t}_2 \le \norm{\delta w_0}_2 + C' t (1+t) u(\max(\vep,\vep'))
  + C' (1+t) \int_0^t\!\rmd s\, \norm{w_s-w'_s}_2 
\, .
\end{align}
Since the map $t\mapsto \norm{w_t-w'_t}_2$ is continuous, Gr\"onwall's lemma
can be applied here, and we can conclude that, if $t_0>0$, then for all $0\le
t\le t_0$ we have 
\begin{align}\label{eq:WtGron}
  & \norm{w_t-w'_t}_2 \le \left( \norm{A_\delta[W_0]-
A_{\delta'}[W'_0]}_2 + C' t (1+t) u(\max(\vep,\vep')) \right) 
  \rme^{C' (1+t_0) t}
\, .
\end{align}

We can thus apply this also in the special case $W'_0=W_0$ and consider the
sequence of solutions defined for $\vep=1/n$, $\delta=1/n$: set $W_{t,n}(k):=
w_{t}(k;1/n,1/n,W_0)$, $n\in \N$.  By (\ref{eq:WtGron}), if $0\le t\le t_0$,
then  
\begin{align}
\norm{W_{t,n}-W_{t,n'}}_2\le 
\left(\norm{A_{1/n}[W_0]-A_{1/n'}[W_0]}_2 + C' t_0 (1+t_0)
  u(1/\min(n,n'))\right)\rme^{C' (1+t_0) t_0}\, .   
\end{align}
Thus $W_{t,n}$ forms a Cauchy sequence in $\Lherm$ and there exists $W_t :=
\lim_{n\to\infty} W_{t,n}$ for all $t\ge 0$ (by the previous discussion, the
limit indeed coincides with $W_0$ if $t=0$).  Since $W_{t,n}\in \Lphys$ for
all $n$ and $\Lphys$ is a closed subset, this implies that $W_t\in \Lphys$.
By Corollary \ref{th:l2pv}, then also  
\begin{align}
  \Heff[W_t] = \lim_{n\to \infty} \Heff^{\frac{1}{n}}[W_{t}]= 
  \lim_{n\to \infty} \Heff^{\frac{1}{n}}[W_{t,n}]\, ,
\end{align}
where the limits are taken in $L^2$-norm.  The conservation laws
(\ref{eq:claw1}) and (\ref{eq:claw2}) can be identified as scalar products in
$\Lherm$, the first with the function $k\mapsto \omega(k)1$, and the second
with the constant matrix $M_{i'j'}:=\delta_{ij'}\delta_{ji'}$.  Since the
equalities hold for all $W_{t,n}$ and these converge in $\Lherm$ to $W_t$,
this implies (\ref{eq:claw1}) and (\ref{eq:claw2}).

Since $W_{t,n}$ is a solution to the regularized problem, it satisfies a
pointwise identity 
\begin{align}\label{eq:wtdeDuh}
  & W_{t,n}(k) = W_{0,n}(k) + \int_0^t\! \rmd s\, \left(\Ccoll [W_{s,n}](k) 
  - \ci [\Heff^{1/n} [W_{s,n}](k),W_{s,n}(k)]\right)\, .
\end{align}
By taking a scalar product with an arbitrary element in $\Lherm$ and using
Corollaries  \ref{th:l2colloper}  and \ref{th:l2pv}, we thus find that as
vector valued integrals in $\Lherm$ 
\begin{align}\label{eq:wtdeDuh2}
  & W_{t} = W_{0} + \int_0^t\! \rmd s\, \left(\Ccoll[W_s] 
  - \ci [\Heff[W_s],W_s]\right)\, .
\end{align}
This implies first that $t\mapsto W_t$ is in $C([0,\infty),\Lphys)$ and the
Fr\'echet derivative satisfies $\partial_t W_t = \Ctot [W_t]$ for $t>0$.  Thus
$W\in C^{(1)}([0,\infty),\Lphys)$.

Finally, suppose $W'_t$ is an arbitrary solution in
$C^{(1)}([0,\infty),\Lphys)$ with initial data $W'_0$, and denote
$\Delta_t:=W_t-W'_t$.  Then $\Delta_t\in C^{(1)}([0,\infty),\Lherm)$ and 
\begin{align}
\partial_t \Delta_t = \Ccoll[W_t]- \Ccoll[W'_t] 
  - \ci [\Heff[W_t]-\Heff[W'_t],W'_t] -\ci [\Heff[W_t],\Delta_t]\, .
\end{align}
For this computation, and those that follow, it can be helpful to note that
$\Lherm$ is a real Hilbert space whose scalar product  
can be written as $(w',w)=\int\!\rmd k \tr(w'(k)w(k))$.  In particular, it can
be applied to conclude that $t\mapsto \norm{\Delta_{t}}^2_2$ is continuously
differentiable with 
$\partial_t \norm{\Delta_{t}}_2^2= 2 (\Delta_t,\partial_t \Delta_t)$. 
Since by periodicity of trace we have $\tr(\Delta_t(k)
[\Heff[W_t](k),\Delta_t(k)])=0$ for all $k$ , this shows that  
\begin{align}
\partial_t \norm{\Delta_{t}}_2^2= 2 \left(\Delta_t,\Ccoll[W_t]-
   \Ccoll[W'_t] - \ci [\Heff[W_t]-\Heff[W'_t],W'_t]\right)\, .
\end{align}
Applying Corollaries \ref{th:l2colloper}  and \ref{th:l2pv}, the property
$\norm{W'_t}\le 4$, and Cauchy-Schwarz inequality, we can conclude that there
is a constant $C$ such that  $\left|\partial_t \norm{\Delta_{t}}_2^2\right|
\le  C \norm{\Delta_t}_2^2$. 
Therefore, Gr\"onwall's lemma implies that for all $t\ge 0$
\begin{align}
  \norm{W_t-W'_t}_2^2 \le \norm{W_0-W'_0}_2^2 \rme^{C t}\, .
\end{align}
This result implies both the stated stability and uniqueness of solutions in
$\Lphys$, and thus concludes the proof of the Theorem. \qed

\appendix

\section{Nearest neighbor dispersion relation 
 at $d\ge 3$}\label{sec:hubbd3}
\label{sec:dispnn}

We prove here that the nearest neighbor dispersion relation of the square
lattice  
of dimension $d\ge 3$ satisfies all of the assumptions of the main theorem.
\begin{proposition}
If $d\ge 3$, all the properties listed in Assumption \ref{th:disprelass} are 
satisfied by the nearest neighbor dispersion relation, 
  \begin{align}\label{eq:defomnn}
  \omega(k):= c - \sum_{\nu=1}^d \cos p^\nu, \quad \text{with }p=2\pi k,
\end{align}
where $c\in \R$ is arbitrary.
\end{proposition}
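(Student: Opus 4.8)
The plan is to check the three conditions of Assumption~\ref{th:disprelass} one by one, using throughout that $\omega(k)=c-\sum_{\nu=1}^{d}\cos(2\pi k^{\nu})$ is a \emph{sum over coordinates} of cosines. This makes every oscillatory integral built from $\rme^{\ci a\omega}$ factorize over $\nu=1,\dots,d$ into one-dimensional integrals expressible through Bessel functions $J_{n}$; the additive constant $c$ only multiplies $\rme^{\ci a\omega}$ by a unimodular factor and never matters below. Condition (DR\ref{it:DR1}) is immediate, the periodic extension of $\omega$ being real-analytic and $\cos$ even.

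For (DR\ref{it:DRdisp}) I would compute the free propagator: from $\rme^{-\ci t\omega(k)}=\rme^{-\ci tc}\prod_{\nu}\rme^{\ci t\cos(2\pi k^{\nu})}$ and $\int_{\T}\rme^{\ci2\pi n\kappa}\rme^{\ci t\cos2\pi\kappa}\,\rmd\kappa=\ci^{n}J_{n}(t)$ one gets $|p_{t}(x)|=\prod_{\nu=1}^{d}|J_{x^{\nu}}(t)|$, hence $\norm{p_{t}}_{3}^{3}=\bigl(\sum_{n\in\Z}|J_{n}(t)|^{3}\bigr)^{d}$. The one-dimensional estimate I would use is $\sum_{n\in\Z}|J_{n}(t)|^{3}\le C(1+|t|)^{-1/2}$: for $|t|\le1$ it is immediate from $\sum_{n}|J_{n}(t)|^{2}=1$, and for $|t|>1$ one splits the sum according to whether $|n|\le|t|-|t|^{1/3}$ (oscillatory range, where $|J_{n}(t)|\le C(t^{2}-n^{2})^{-1/4}$ and a Riemann-sum comparison with $\int(t^{2}-x^{2})^{-3/4}\,\rmd x\le C|t|^{-1/2}$ does it), $\bigl||n|-|t|\bigr|\le|t|^{1/3}$ (turning-point range, where $|J_{n}(t)|\le C|t|^{-1/3}$ gives a contribution $\le C|t|^{1/3-1}$), or $|n|>|t|+|t|^{1/3}$ (exponentially small tail). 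Thus $\norm{p_{t}}_{3}^{3}\le C(1+|t|)^{-d/2}$, which is in $L^{1}(\rmd t)$ exactly when $d\ge3$.

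The substantial point is (DR\ref{it:DRpv}). By the same factorization $\mathcal{G}(s;\sigma)=g(s;\sigma)^{d}$ with $g$ the $d=1$ version, a six-fold oscillatory integral over $\T$ whose phase, after merging the terms of $s_{1}\Omega_{1}+\dots+s_{4}\Omega_{4}$ with equal arguments, is a combination of ten cosines. In $g$ one can integrate out four of the six variables: each of $k_{2},k_{2}'$ occurs in exactly three of the cosines, always with unit coefficient in the integration variable, so by $a_{1}\cos\theta+a_{2}\cos(\theta+\phi_{2})+a_{3}\cos(\theta+\phi_{3})=\rho\cos(\theta+\gamma)$ with $\rho=|a_{1}+a_{2}\rme^{\ci\phi_{2}}+a_{3}\rme^{\ci\phi_{3}}|$ and $\int_{\T}\rme^{-\ci\rho\cos(2\pi\kappa+\gamma)}\,\rmd\kappa=J_{0}(\rho)$ these produce $J_{0}(R),J_{0}(R')$ depending only on $k_{1}-k_{3}$ and $k_{1}'-k_{3}'$; substituting $u=k_{1}-k_{3}$, $v=k_{1}'-k_{3}'$ and integrating out $k_{3},k_{3}'$ (again $J_{0}$'s) yields
\begin{align*}
  g(s;\sigma)=\int_{\T^{2}}\!\rmd u\,\rmd v\;J_{0}\bigl(2(s_{1}+s_{2})\rho_{\sigma}(u)\bigr)\,J_{0}\bigl(2(s_{3}+s_{4})\rho_{\sigma}(v)\bigr)\,J_{0}\bigl(R(u,v)\bigr)\,J_{0}\bigl(R'(u,v)\bigr)\,,
\end{align*}
with $\rho_{\sigma}=|\cos\pi\cdot|$ or $|\sin\pi\cdot|$ according as $\sigma_{1}\sigma_{3}=\pm1$ and $R=|\mu s_{1}+\nu s_{4}|$, $R'=|\mu s_{2}+\nu s_{3}|$ for $\mu=\sigma_{2}+\sigma_{4}\rme^{\ci2\pi u}$, $\nu=\sigma_{2}+\sigma_{4}\rme^{\ci2\pi v}$. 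To bound $C_{\mathcal{G}}=\max_{\sigma}\int_{\R^{4}}|g(s;\sigma)|^{d}\,\rmd s$ one would use $|J_{0}(\rho)|\le\min(1,C|\rho|^{-1/2})$, and for fixed generic $(u,v)$ pass to the variables $\zeta=\mu s_{1}+\nu s_{4}$, $\eta=\mu s_{2}+\nu s_{3}$ on $\C^{2}\cong\R^{4}$ (Jacobian $|\im(\bar\mu\nu)|^{-2}$), which turns $R,R'$ into the Euclidean norms of $\zeta,\eta$ while the remaining two factors become $J_{0}$ of real-linear functionals of $(\zeta,\eta)$; the four factors together then carry enough decay to make the $\R^{4}$-integral converge, and one integrates the bound over $(u,v)\in\T^{2}$.

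\textbf{The main obstacle} I expect is exactly this last estimate, which is borderline at $d=3$. The crude bound $|\mathcal{G}|\le$ (the ten-fold sum of moduli of Bessel functions obtained by dropping all cancellation) is merely \emph{bounded} in $s$, not decaying, so one must retain the oscillatory structure; and although the reduced integral above does retain it, the change of variables $(s_{1},s_{4})\mapsto\zeta$, $(s_{2},s_{3})\mapsto\eta$ degenerates along the codimension-one curve $\im(\bar\mu\nu)=0$ in $\T^{2}$ (e.g.\ at $u=v$), where its Jacobian is non-integrable and the ``radial'' decay of $J_{0}(R),J_{0}(R')$ collapses in one direction; near that curve one has to use instead the decay coming from $\rho_{\sigma}$ together with stationary phase in $(u,v)$. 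Arranging all of this so that the final bound is integrable over $\R^{4}\times\T^{2}$ --- a lattice/oscillatory-integral power count of the same flavour as the one underlying $\ell_{3}$-dispersivity --- is where the hypothesis $d\ge3$ is used and is the heart of the appendix.
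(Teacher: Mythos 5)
Your treatment of (DR\ref{it:DR1}) and (DR\ref{it:DRdisp}) is sound. For the $\ell_3$-dispersivity you give a self-contained Bessel-sum argument, $\norm{p_t}_3^3=\bigl(\sum_n|J_n(t)|^3\bigr)^d\le C(1+|t|)^{-d/2}$, which is in fact sharper than what the paper uses (it simply quotes the bound $\norm{p_t}_3^3\le C(1+|t|)^{-3d/7}$ from the appendix of the cited work on the nonlinear Schr\"odinger equation); both give integrability exactly for $d\ge3$. Your reduction of $\mathcal{G}(s;\sigma)=g(s;\sigma)^d$ and of $g$ to a two-fold $(u,v)$-integral of four $J_0$-factors is also correct as far as it goes, and is parallel to the paper's computation, which instead integrates out only the three ``sum'' variables $q$ and keeps three difference variables $\alpha$, arriving at $F(s)=\int\rmd\alpha\,\prod_{i=1}^3 f(R_i(s,\alpha))$ with $|f(r)|\le C(1+|r|)^{-1/2}$.

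The genuine gap is that you do not prove (DR\ref{it:DRpv}): the convergence of $\int_{\R^4}|g(s;\sigma)|^d\,\rmd s$ is precisely the content of the assumption, and you explicitly leave it as ``the main obstacle'' and ``the heart of the appendix''. Moreover, the route you sketch for it --- fix $(u,v)$, change variables $(s_1,s_4)\mapsto\zeta=\mu s_1+\nu s_4$, $(s_2,s_3)\mapsto\eta$, and integrate the radial decay of the $J_0$'s --- would fail even off the degenerate set $\im(\bar\mu\nu)=0$, because $|J_0(|\zeta|)|^3\le C(1+|\zeta|)^{-3/2}$ is not integrable over $\zeta\in\R^2$; the two remaining factors must supply the missing decay, but after your change of variables they depend on linear combinations mixing $\zeta$ and $\eta$, so no clean power count results. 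The paper avoids this entirely: it bounds $|\mathcal{G}(s)|\le|F(s)|^3$, pulls the modulus inside the $\alpha$-integral, and then applies H\"older's inequality to the cube so as to decouple the four-dimensional $s$-integral into products of \emph{two-dimensional} integrals $G_1,\dots,G_4$ --- exploiting that $R_1$ depends only on $(s_1,s_2)$ and $R_2$ only on $(s_3,s_4)$, and pairing each factor of $R_3$ with one of these. Explicit lower bounds of the form $R_i(s,\alpha)\ge|$linear form in $s|$ with coefficients comparable to $|\sin|$ or $|\cos|$ of half-angles of $\alpha$ then reduce everything to one-dimensional integrals $\int(1+|r|)^{-9/8}\rmd r<\infty$ at the price of $\alpha$-singularities $|\sin(\cdot)|^{-2/3}$, which are integrable. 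The degeneracy you correctly worry about is exactly what these singular prefactors encode, and controlling it requires this specific decoupling (or an equivalent one); without that step your argument for (DR\ref{it:DRpv}) is a program rather than a proof.
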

\begin{proof}
Fix some $d\ge 3$ and define $\omega$ by (\ref{eq:defomnn}). 
It is clear that $\omega$ is then continuous and satisfies
$\omega(-k)=\omega(k)$.  In the Appendix to \cite{NLS09} it is already proven
that then there is a constant $C$ such that the free propagator $p_t(x)$
satisfies  
$\norm{p_t}^3 \le C (1+|t|)^{- \frac{3 d}{7}}$, and hence it belongs to
$L^1(\rmd t)$.  Therefore, (DR\ref{it:DR1}) and (DR\ref{it:DRdisp}) hold and  
we only need to prove that  also (DR\ref{it:DRpv}) is satisfied (note that the
present conditions are different from those defined in \cite{NLS09}). 

Fix $\sigma\in \set{-1,1}^4$ and recall the definition of the functions $\omt$
and $\Omega_i$ given in the statement of (DR\ref{it:DRpv}).   We need to
inspect 
\begin{align}
  \mathcal{G}(s) = \int_{(\T^d)^3\times (\T^d)^3} \!\rmd^3 k' \rmd^3 k\, 
  \rme^{\ci \sum_{i=1}^4 s_i \Omega_i(k,k')}\, ,
\end{align}
defined for $s\in \R^4$.  It suffices to prove that $\int_{\R^4} \rmd s
|\mathcal{G}(s)| < \infty$, as we can then choose as $C_{\mathcal{G}}$ the
maximum of all such bounds obtained from the 16 possible choices of $\sigma$. 

For the nearest neighbor dispersion relation, the constant term produces only
a global phase factor, and the integrals corresponding to the $d$ different
``directions'' factorize.  This shows that  
$\left|\mathcal{G}(s)\right| = |F(s)|^d\le |F(s)|^3$, where 
\begin{align}
& F(s) :=  
  \int_{\T^3\times \T^3} \!\rmd^3 k' \rmd^3 k\, \rme^{-\ci g(s,k,k')}\, , \\
& g(s,k,k') :=   \sum_{i,j=1}^4 s_i \sigma_j \cos(P_{ij}(k,k'))
 = \re\left(\sum_{i,j=1}^4 s_i \sigma_j \rme^{\ci P_{ij}(k,k')}\right)\, .
\end{align}
Each $P_{ij}$ is a linear function of $(k,k')$, which are easiest to define by
using the following matrix representation 
\begin{align}
 P(k,k') := 2\pi \begin{pmatrix}
    k_1 & k_2 & k_3 & k_1+k_2-k_3 \\
    k_1 & k'_2 & k_3 & k_1+k'_2-k_3 \\
    k'_1 & k_2 & k'_3 & k'_1+k_2-k'_3 \\
    k'_1 & k'_2 & k'_3 & k'_1+k'_2-k'_3 
  \end{pmatrix} \, .
\end{align}

In order to better decouple the interdependence, let us change the integration
variables from $(k,k')$ to $(q,\alpha)$ where $q_1=2\pi k_1$, 
 $q_2=2\pi k'_1$, $q_3=2\pi k_2$, 
$\alpha_1= 2 \pi (k_3-k_2)$, $\alpha_2=2\pi(k'_3-k'_2)$ and 
  $\alpha_3= 2 \pi (k_2-k'_2)$.
Then 
\begin{align}
 P(k,k') = \begin{pmatrix}
    q_1 & q_3 & q_3 + \alpha_1 & q_1-\alpha_1 \\
    q_1 & q_3-\alpha_3 & q_3+\alpha_1 & q_1-\alpha_1-\alpha_3 \\
    q_2 & q_3 & q_3 + \alpha_2-\alpha_3 & q_2+\alpha_3-\alpha_2 \\
    q_2 & q_3-\alpha_3 & q_3 + \alpha_2-\alpha_3 & q_2-\alpha_2 
  \end{pmatrix} \, ,
\end{align}
and, therefore,
\begin{align}
 & g(s,k,k') 
 = \re\Bigl(\rme^{\ci q_1} \left[s_1\sigma_1+ s_2\sigma_1 + 
  s_1\sigma_4 \rme^{-\ci \alpha_1}
    + s_2\sigma_4 \rme^{-\ci (\alpha_1+\alpha_3)}
  \right]
  \nonumber \\ & \qquad
  + \rme^{\ci q_2} \left[s_3\sigma_1+ s_4\sigma_1 + s_3\sigma_4 
  \rme^{\ci (\alpha_3-\alpha_2)}
    + s_4\sigma_4 \rme^{-\ci \alpha_2}\right]
  \nonumber \\ & \qquad
  + \rme^{\ci q_3} \bigl[s_1\sigma_2+ s_2\sigma_2\rme^{-\ci \alpha_3}+
  s_3\sigma_2 + s_4\sigma_2\rme^{-\ci \alpha_3}
    \nonumber \\ & \qquad\quad
  + s_1\sigma_3 \rme^{\ci \alpha_1} + s_2\sigma_3\rme^{\ci \alpha_1}+
  s_3\sigma_3 \rme^{\ci (\alpha_2-\alpha_3)} + s_4\sigma_3
  \rme^{\ci (\alpha_2-\alpha_3)}\bigr]
 \Bigr)\, .
\end{align}

This shows that for any fixed $s\in \R^4$ and $\alpha\in [-\pi,\pi]^3$ there
is $\varphi(s,\alpha)\in \R^3$ such that  
\begin{align}
 & g(s,k,k') = \sum_{i=1}^3 R_i(s,\alpha) \cos(q_i+\varphi_i(s,\alpha)) \, ,
\end{align}
where
\begin{align}
 & R_1(s,\alpha) :=  \left|s_1\sigma_1+ s_2\sigma_1 + s_1\sigma_4 
  \rme^{-\ci \alpha_1}
    + s_2\sigma_4 \rme^{-\ci (\alpha_1+\alpha_3)}
  \right|\, , \\ &
   R_2(s,\alpha) :=  \left|s_3\sigma_1+ s_4\sigma_1 
  + s_3\sigma_4 \rme^{\ci (\alpha_3-\alpha_2)}
    + s_4\sigma_4 \rme^{-\ci \alpha_2}\right|\, , \\ &
   R_3(s,\alpha) :=  \bigl|s_1\sigma_2+ s_2\sigma_2\rme^{-\ci \alpha_3}+
  s_3\sigma_2 + s_4\sigma_2\rme^{-\ci \alpha_3}
    \nonumber \\ & \qquad
  + s_1\sigma_3 \rme^{\ci \alpha_1} + s_2\sigma_3\rme^{\ci \alpha_1}+
  s_3\sigma_3 \rme^{\ci (\alpha_2-\alpha_3)} 
  + s_4\sigma_3\rme^{\ci (\alpha_2-\alpha_3)}\bigr|
 \, .
\end{align}
Therefore, we can first integrate over $q$, and this yields
\begin{align}
& F(s) =  \int_{[-\pi,\pi]^3} \frac{\rmd \alpha}{(2\pi)^3} 
  \prod_{i=1}^3 f(R_i(s,\alpha))\, , 
\end{align}
where $f(r) := \int_{-\pi}^\pi \frac{\rmd p}{2\pi} \rme^{-\ci r \cos p}$.  As
shown in \cite{NLS09}, by a direct saddle point argument, here we can find a
pure constant $C_1$ such that $|f(r)|\le C_1 (1+|r|)^{-\frac{1}{2}}$.   

Collecting all the above estimates together, we can conclude that
\begin{align}
 & \int_{\R^4}\! \rmd s |\mathcal{G}(s)| \le 
  \left(\frac{C_1}{2\pi}\right)^9  \int_{\R^4}\! \rmd s
  \left(\int_{[-\pi,\pi]^3}\!\! \rmd \alpha 
  \prod_{i=1}^3 (1+R_i(s,\alpha))^{-\frac{1}{2}}\right)^3
  \, .
\end{align}
By Fubini's theorem, even if infinite, the integral here  is equal to 
\begin{align}\label{eq:G9bound}
 &  \int_{([-\pi,\pi]^3)^3}\!\! \rmd \alpha^{(1)} \rmd \alpha^{(2)}  
  \rmd \alpha^{(3)}  
 \int_{\R^4}\! \rmd s \prod_{j,\ell=1}^3 f_{\ell,j}(s)^{-\frac{1}{4}}\, ,
\end{align}
where for all $i,j\in \set{1,2,3}$ we define
\begin{align}
&   m_{i,j}(s):=1+R_i(s,\alpha^{(j)})\, ,  \\
&   f_{1,j}(s) := m_{1,j}(s) m_{2,j}(s)\, ,\quad
  f_{2,j}(s) := m_{2,j}(s) m_{3,j}(s)\, ,\quad
  f_{3,j}(s) := m_{1,j}(s) m_{3,j}(s)\, .
\end{align}
Thus by H\"older's inequality (\ref{eq:G9bound}) is bounded by 
\begin{align}
 & \int_{([-\pi,\pi]^3)^3}\!\! \rmd \alpha^{(1)} \rmd \alpha^{(2)}  
  \rmd \alpha^{(3)}   \prod_{j,\ell=1}^3 \left(
 \int_{\R^4}\! \rmd s  f_{\ell,j}(s)^{-\frac{9}{4}}\right)^{\frac{1}{9}}
    = \left[ \int_{[-\pi,\pi]^3}\!\! \rmd \alpha^{(1)}
  \Bigl(\prod_{\ell=1}^3
 \int_{\R^4}\! \rmd s  f_{\ell,1}(s)^{-\frac{9}{4}}\Bigr)^{\frac{1}{9}} \right]^3
 \, .
\end{align}

The main point of introducing the functions $f$ above is that each of them is
a product of two terms, one of which depends only on either $R_1$ or $R_2$.
Since $R_1$ does not depend on $s_3$ or $s_4$, and $R_2$ does not depend on
$s_1$ or $s_2$, it is possible to estimate the four-dimensional integral over
$s$ by a product of two two-dimensional integrals.  Even though the above
integrals might be infinite for some ``bad'' choice of $\alpha:=\alpha^{(1)}$,
we can nevertheless always estimate 
\begin{align}
 & \prod_{\ell=1}^3
 \int_{\R^4}\! \rmd s  f_{\ell,1}(s)^{-\frac{9}{4}}
 \le (G_1(\alpha) G_2(\alpha))^2 G_3(\alpha) G_4(\alpha)
   \, ,
\end{align}
where
\begin{align}
  & G_1(\alpha) :=\int\! \rmd s_1 \rmd s_2 \, 
(1+R_1(s,\alpha))^{-\frac{9}{4}} \, , \\ 
  & G_2 (\alpha) := \int\! \rmd s_3 \rmd s_4 \, 
(1+R_2(s,\alpha))^{-\frac{9}{4}} \, , \\ 
  & G_3 (\alpha) := \sup_{s_3,s_4} \int\! \rmd s_1 \rmd s_2 \,  
(1+R_3(s,\alpha))^{-\frac{9}{4}} \, , \\ 
  & G_4(\alpha) := \sup_{s_1,s_2} \int\! \rmd s_3 \rmd s_4 \,  
(1+R_3(s,\alpha))^{-\frac{9}{4}} \, .
\end{align}

We will next derive upper bounds for $G_i$.  Only the most complicated case,
$G_4$, will be considered here in detail; the other cases can be estimated
similarly. 
We first recall the definition of $R_3$ and the lower bound $|z|\ge \max(|\re
z|,|\im z|)$ valid for any complex $z$.  Since $|\sigma_2+\sigma_3\rme^{\ci
  \alpha_2}|\ge |\sin \alpha_2|$, it is nonzero apart from a set of measure
zero.  Whenever $\sin  \alpha_2\ne 0$, we can estimate 
\begin{align}
 & R_3(s,\alpha)  \nonumber \\
 & \quad =  \bigl|s_1 (\sigma_2 +\sigma_3 \rme^{\ci \alpha_1} ) +
 s_2(\sigma_2\rme^{-\ci \alpha_3}+\sigma_3\rme^{\ci \alpha_1})
 +
  s_3(\sigma_2+\sigma_3 \rme^{\ci (\alpha_2-\alpha_3)})
  + s_4 \rme^{-\ci \alpha_3} (\sigma_2+\sigma_3\rme^{\ci \alpha_2}) 
  \bigr|
   \nonumber \\ & \quad 
   \ge |\sigma_2+\sigma_3\rme^{\ci \alpha_2}| 
  \max\left( \left|s_4+ a(\alpha,s_1,s_2,s_3)\right|, 
   \left|s_3 \im\frac{\sigma_2\rme^{\ci \alpha_3}+
  \sigma_3\rme^{\ci \alpha_2}}{\sigma_2+\sigma_3\rme^{\ci \alpha_2}} +
b(\alpha,s_1,s_2)\right|\right) 
 \, ,
\end{align}
where $a,b\in \R$ depend only on the shown variables.  Here 
\begin{align}
  \im\frac{\sigma_2\rme^{\ci \alpha_3}+\sigma_3
  \rme^{\ci \alpha_2}}{\sigma_2+\sigma_3\rme^{\ci \alpha_2}}
  = \im\frac{\sigma_2\rme^{\ci \alpha_3}
  -\sigma_2}{\sigma_2+\sigma_3\rme^{\ci \alpha_2}}
  = \sigma_2 \sigma_3 \sin \frac{\alpha_3}{2} 
  \im\frac{2 \ci \rme^{\ci (\alpha_3-\alpha_2)/2}}{\rme^{\ci
      \frac{\alpha_2}{2}}+\sigma_2 \sigma_3\rme^{-\ci \frac{\alpha_2}{2}}} \,
  . 
\end{align}
If $\sigma_2 \sigma_3=1$, then $|\sigma_2+\sigma_3\rme^{\ci \alpha_2}|=2
|\cos(\alpha_2/2)|$, and we find  
\begin{align}
 |\sigma_2+\sigma_3\rme^{\ci \alpha_2}| \left|\im\frac{\sigma_2\rme^{\ci
       \alpha_3}+\sigma_3\rme^{\ci \alpha_2}}{\sigma_2+\sigma_3\rme^{\ci
       \alpha_2}}\right| 
 = 2 \left|\sin \frac{\alpha_3}{2}\right| \left|\cos
   \frac{\alpha_3-\alpha_2}{2}\right|\, . 
\end{align}
Else, we have $\sigma_2 \sigma_3=-1$, and thus  $|\sigma_2+\sigma_3\rme^{\ci
  \alpha_2}|=2 |\sin(\alpha_2/2)|$ and  
\begin{align}
 |\sigma_2+\sigma_3\rme^{\ci \alpha_2}| \left|\im\frac{\sigma_2\rme^{\ci
       \alpha_3}+\sigma_3\rme^{\ci \alpha_2}}{\sigma_2+\sigma_3\rme^{\ci
       \alpha_2}}\right| 
 = 2 \left|\sin \frac{\alpha_3}{2}\right| \left|\sin
   \frac{\alpha_3-\alpha_2}{2}\right|\, . 
\end{align}
Therefore, apart from a set of measure zero
\begin{align}\label{eq:R3lb}
 & R_3(s,\alpha) 
   \ge \max\left(\left|s'_4+ a'(\alpha,s_1,s_2,s_3)\right|, 
    \left| s'_3 + b'(\alpha,s_1,s_2)\right|\right) 
 \, ,
\end{align}
where $a',b'\in \R$,
$s'_3= 2 \left|\sin(\alpha_3/2)\right| |t((\alpha_3-\alpha_2)/2)|s_3$ and 
$s'_4= 2 |t(\alpha_2/2)| s_4$, with $t(x)=\cos (x)$, if $\sigma_2 \sigma_3=1$,
and $t(x)=\sin (x)$, if $\sigma_2 \sigma_3=-1$. 

To estimate $G_4$, we represent the integrand as the product of its square
roots, apply the first of the lower bounds implied by (\ref{eq:R3lb}) to the
first factor, and the second bound to the second factor. 
Then we integrate over $s_4$ first, change it to $s'_4+a'$, and then change
$s_3$ to $s'_3+b'$.  Since $\int_{\R} \!\rmd r (1{+}|r|)^{-9/8}<\infty$, this
shows that there is a pure constant $C$ such that almost everywhere 
\begin{align}
 & G_4(\alpha) \le \frac{C}{|t_4(\alpha_2/2) 
  \sin(\alpha_3/2) t_4((\alpha_3-\alpha_2)/2) |}\, ,
\end{align}
where $t_4$ denotes either ``$\sin$'' or ``$\cos$'' depending on the sign of
$\sigma_2 \sigma_3$. 

Repeating analogous steps to the other three cases, yields also the following
almost everywhere valid uppers bounds for some pure constant $C$ 
\begin{align}
 & G_1(\alpha) \le \frac{C}{|t_1(\alpha_1/2) 
\sin(\alpha_3/2) t_1((\alpha_1+\alpha_3)/2) |}\, , \label{eq:G1bound}\\
 & G_2(\alpha) \le \frac{C}{|t_2(\alpha_2/2) 
\sin(\alpha_3/2) t_2((\alpha_3-\alpha_2)/2) |}\, , \label{eq:G2bound}\\
 & G_3(\alpha) \le \frac{C}{|t_3(\alpha_1/2) 
\sin(\alpha_3/2) t_3((\alpha_1+\alpha_3)/2) |}\, ,
\end{align}
where each $t_i$, $i=1,2,3$, denotes either ``$\sin$'' or ``$\cos$'' depending
on the value of $\sigma$. 

Now we can collect the bounds together and apply once more H\"older's
inequality to simplify the estimates.  We find that 
\begin{align}
 & \int_{\R^4}\! \rmd s |\mathcal{G}(s)| \le 
\left(\frac{C_1}{2\pi}\right)^9 
\int\! \rmd \alpha\, (G_1G_2)^{\frac{1}{3}}
\int\! \rmd \alpha\, (G_1G_4)^{\frac{1}{3}}
\int\! \rmd \alpha\, (G_2G_3)^{\frac{1}{3}}
  \, .
\end{align}
which is finite, since each of the three integrals is finite.  For instance,
by (\ref{eq:G1bound}) and (\ref{eq:G2bound}), 
\begin{align}
& \int\! \rmd \alpha\, (G_1G_2)^{\frac{1}{3}}
   \nonumber \\ & \quad 
\le C^{\frac{2}{3}} \int\! \rmd \alpha_3\, 
  |\sin(\alpha_3/2)|^{-\frac{2}{3}}
 \nonumber \\ & \qquad \times
\int\! \rmd \alpha_1\,  
 |t_1(\alpha_1/2) t_1((\alpha_1+\alpha_3)/2) | ^{-\frac{1}{3}}
\int\! \rmd \alpha_2\,
|t_2(\alpha_2/2) t_2((\alpha_3-\alpha_2)/2) |^{-\frac{1}{3}}
   \nonumber \\ & \quad 
\le C^{\frac{2}{3}} \int\! \rmd \alpha_3\, |\sin(\alpha_3/2)|^{-\frac{2}{3}}
\int\! \rmd \alpha_1\,  |t_1(\alpha_1/2)| ^{-\frac{2}{3}}
\int\! \rmd \alpha_2\,
|t_2(\alpha_2/2)|^{-\frac{2}{3}}\, ,
\end{align}
and all of the remaining integrals contain only integrable singularities of
the form $r^{-2/3}$, for all allowed choices of $t_1$ and $t_2$.   
The integrals
$\int\! \rmd \alpha\, (G_1G_4)^{\frac{1}{3}}$ and  
$\int\! \rmd \alpha\, (G_2G_3)^{\frac{1}{3}}$ are of the same form, and the
argument proving their finiteness is identical to the above.  This completes
the proof of the Proposition. 
\end{proof}


\end{document}